\newtheorem{remark}{\textbf{Remark}}
\newtheorem{corollary}{\textbf{Corollary}}
\newtheorem{definition}{\textbf{Definition}}
\newtheorem{lemma}{\textbf{Lemma}}
\newcolumntype{P}[1]{>{\centering\arraybackslash}p{#1}}
\newcolumntype{M}[1]{>{\centering\arraybackslash}m{#1}}
\g@addto@macro{\normalsize}{%
   \setlength{\abovedisplayskip}{3pt plus 2pt minus 2pt}
   \setlength{\abovedisplayshortskip}{3pt plus 2pt minus 2pt}
   \setlength{\belowdisplayskip}{3pt plus 2pt minus 2pt}
   \setlength{\belowdisplayshortskip}{3pt plus 2pt minus 2pt}
   \setlength{\textfloatsep}{10pt plus 2pt minus 2pt}
   }
\begin{document}
%

\title{Modeling and Analyzing the Coexistence of Wi-Fi and LTE in Unlicensed Spectrum}

\author{ Yingzhe Li, Fran\c{c}ois Baccelli, Jeffrey G. Andrews, Thomas D. Novlan, Jianzhong Charlie Zhang \thanks{Y. Li, F. Baccelli and J. G. Andrews are with the Wireless Networking and Communications Group (WNCG), The University of Texas at Austin (email: yzli@utexas.edu, francois.baccelli@austin.utexas.edu, jandrews@ece.utexas.edu). T. Novlan and J. Zhang are with Samsung Research America-Dallas (email: t.novlan@samsung.com, jianzhong.z@samsung.com). Part of this paper was presented at IEEE Globecom 2015, $7^\text{th}$ International Workshop on Heterogeneous and Small Cell Networks (HetSNets)~\cite{li2015laawifiGC}.}}

\maketitle

\begin{abstract}
We leverage stochastic geometry to characterize key performance metrics for neighboring Wi-Fi and LTE networks in unlicensed spectrum. Our analysis focuses on a single unlicensed frequency band, where the locations for the Wi-Fi access points (APs) and LTE eNodeBs (eNBs) are modeled as two independent homogeneous Poisson point processes. Three LTE coexistence mechanisms are investigated: (1) LTE with continuous transmission and no protocol modifications; (2) LTE with discontinuous transmission; and (3) LTE with listen-before-talk (LBT) and random back-off (BO). For each scenario, we derive the medium access probability (MAP), the signal-to-interference-plus-noise ratio (SINR) coverage probability, the density of successful transmissions (DST), and the rate coverage probability for both Wi-Fi and LTE. Compared to the baseline scenario where one Wi-Fi network coexists with an additional Wi-Fi network, our results show that Wi-Fi performance is severely degraded when LTE transmits continuously. However, LTE is able to improve the DST and rate coverage probability of Wi-Fi while maintaining acceptable data rate performance when it adopts one or more of the following coexistence features: a shorter transmission duty cycle, lower channel access priority, or more sensitive clear channel assessment (CCA) thresholds.

\end{abstract}

\section{Introduction} As is well-established, licensed spectrum below 6 GHz is scarce and extremely expensive.   Given that there is over 400 MHz of generally lightly used unlicensed spectrum in the 5 GHz band -- e.g. in the USA, the U-NII bands from 5.15-5.35 GHz and 5.47-5.825 GHz ~\cite{FCC2013Revision} --  extending LTE's carrier aggregation capabilities to be able to opportunistically use such spectrum is an interesting proposition~\cite{qcom2013extend, ratasuk2012licenseexempt,3gpp2015TR}. Such an approach utilizes an anchor primary carrier in LTE operator's licensed spectrum holdings to provide control signaling and data, and a secondary carrier in the unlicensed spectrum that when available, offers a significant boost in data rate.   However, IEEE 802.11/Wi-Fi is an important incumbent system in these bands. Thus, a key design objective for LTE is to not only obey existing regulations for unlicensed spectrum, but also to achieve fair coexistence with Wi-Fi. In this paper, we propose a theoretical framework based on stochastic geometry~\cite{baccelli2010stochastic,baccelli2010stochasticpt2,stochtutorial,stochgeom,chiu2013stochastic} to analyze the coexistence issues that arise in such scenario. 

\subsection{Related Work and Motivation}
LTE is a centrally-scheduled system which was designed for exclusive usage of licensed spectrum. In contrast, Wi-Fi is built on distributed carrier sense multiple access with collision avoidance (CSMA/CA), where the carrier sensing mechanism allows transmissions only when the channel is sensed as idle. This distinctive medium access control (MAC) layer can potentially lead to very poor Wi-Fi performance when LTE operates in the same spectrum without any protocol modifications. Based on indoor office scenario simulations,~\cite{cavalcante2013performance,nihtila2013system} show that Wi-Fi is most often blocked by the LTE interference and that the throughput performance of Wi-Fi decreases significantly. In order to achieve fair coexistence with Wi-Fi, several modifications of LTE have been proposed. A simple approach which requires minimal changes to the current LTE protocol is to adopt a discontinuous transmission pattern, also known as LTE-U~\cite{qrc2014LTE,lteu2015TRv1}. By using the almost-blank subframes (ABS) feature to blank a certain fraction of LTE transmissions, Wi-Fi throughput can be effectively increased~\cite{nihtila2013system,almeida2013enabling,jeon2014lteworkshop}. This discontinuous transmission idea was previously adopted to address the coexistence issues of WiMax and Wi-Fi~\cite{kondo2010technology}.
Coexistence methodologies using the LBT feature, also known as licensed-assisted access (LAA) in 3GPP~\cite{3gpp2015TR}, have been considered in~\cite{jeon2014lteworkshop,mukherjee2015system}. 
In~\cite{jeon2014lteworkshop}, a random backoff mechanism with fixed contention window size is proposed in addition to LBT. The LAA operation of LTE in unlicensed spectrum is investigated in~\cite{mukherjee2015system}, which shows that the load-based LBT protocol of LAA with a backoff defer period can achieve fair coexistence. When LTE users adopts the LBT feature,~\cite{ratasuk2012licenseexempt} shows LTE can deliver significant uplink capacity even if it coexists with Wi-Fi.

All the aforementioned works are based on extensive system level simulations, which is usually very time-consuming due to the complicated dynamics of the overlaid LTE and Wi-Fi networks. Therefore, a mathematical approach would be helpful for more efficient performance evaluation and transparent comparisons of various techniques. A fluid network model is used in~\cite{jeon2014lte} to analyze the coexistence performance when LTE has no protocol modifications. However, the fluid network model is limited to the analysis of deterministic networks, which do not capture the multi-path fading effects and random backoff mechanism of Wi-Fi.
A centralized optimization framework is proposed in~\cite{sagari2015dynamic} to optimize the aggregate throughput of LTE and Wi-Fi. However, the analysis of~\cite{sagari2015dynamic} is based on Bianchi's model for CSMA/CA~\cite{bianchi2000performance}, which relies on the idealized assumption that the collision probability of the contending APs is ``constant and independent".

In recent years, stochastic geometry has become a popular and powerful mathematical tool to analyze cellular and Wi-Fi systems. Specifically, key performance metrics can be derived by modeling the locations of base stations (BSs)/access points (APs) as a realization of certain spatial random point processes. 
In~\cite{trac}, the coverage probability and average Shannon rate were derived for macro cellular networks with BSs distributed according to the complete spatial random Poisson point process (PPP). The analysis has been extended to several other cellular network scenarios, including heterogeneous cellular networks (HetNets)~\cite{Harpreet,dhillon2013load,mukherjee2012distribution}, MIMO~\cite{heath2013hetnets,dhillon2013downlink}, and carrier aggregation~\cite{lin2013carrieraggregation,zhang2014stochastic}. More realistic macro BS location models than PPP are investigated in~\cite{miyoshi2012cellular,li2015statistical,guo2014ADG}. Stochastic geometry can also model CSMA/CA-based Wi-Fi networks
. A modified Mat\'{e}rn hard-core point process, which gives a snapshot view of the simultaneous transmitting CSMA/CA nodes, has been proposed and validated in~\cite{nguyen2007dense80211} for dense 802.11 networks. This Mat\'{e}rn CSMA model is also used for analyzing other CSMA/CA based networks, such as ad-hoc networks with channel-aware CSMA/CA protocols~\cite{kim2014spatial}, and cognitive radio networks~\cite{nguyen2012stochasticCR}.

Due to its tractability for cellular and Wi-Fi networks, stochastic geometry is a natural candidate for analyzing LTE and Wi-Fi coexistence performance. In~\cite{bhorkar2014throughput}, the coverage and throughput performance of LTE and Wi-Fi were derived using stochastic geometry. However, the analytical Wi-Fi throughput in~\cite{bhorkar2014throughput} does not closely match the simulation results. Also, the effect of possible LTE coexistence methods, including discontinuous transmission and LBT with random backoff, were not investigated in~\cite{bhorkar2014throughput}. These shortcomings are addressed in this paper.


\subsection{Contributions}
In this work, a stochastic geometry framework is proposed to evaluate the coexistence performance of the neighboring Wi-Fi network and LTE network. 
Specifically, three coexistence scenarios are studied depending on the mechanism adopted by LTE, including: (1) LTE with continuous transmission and no protocol changes (i.e., conventional LTE); (2) LTE with fixed duty-cycling discontinuous transmission (i.e., LTE-U); and (3) LTE with LBT and random backoff mechanism (i.e., LAA). Several key performance metrics, including the MAP, the SINR coverage probability, the DST, and the rate coverage probability are derived under each scenario. The accuracy of the analytical results is validated against simulation results using SINR coverage probability. The main design insights of this paper can be summarized as follows:

(1) When LTE transmits continuously with no protocol changes, Wi-Fi performance is significantly impacted. Specifically, compared to the baseline scenario where Wi-Fi network coexists with an additional Wi-Fi network from another operator, the SINR coverage probability, DST, and rate coverage probability of Wi-Fi are severely degraded due to the persistent transmitting LTE eNBs. In contrast, LTE performance is shown to be relatively robust to Wi-Fi's presence.  

(2) When LTE transmits discontinuously with a fixed duty cycle, Wi-Fi generally has better DST and rate coverage under a synchronous muting pattern among LTE eNBs compared to the asynchronous one; and a short duty cycle for LTE transmission is required in both cases to protect Wi-Fi. Specifically, Wi-Fi achieves better performance under the synchronous case in general since it provides a much cleaner channel to Wi-Fi when LTE is muted. In contrast, since all eNBs transmit simultaneously under the synchronous case, LTE experiences stronger LTE interference and therefore worse DST and rate coverage compared to the asynchronous case.

(3) When LTE follows the LBT and random BO mechanism, LTE needs to accept either lower channel access priority or more sensitive CCA threshold to protect Wi-Fi. Specifically, Wi-Fi achieves better DST and rate coverage performance compared to the baseline scenario when LTE has either the same channel access priority (i.e., same contention window size) as Wi-Fi with more sensitive CCA threshold (e.g., -82 dBm), or lower channel access priority (i.e., larger contention window size) than Wi-Fi with less sensitive sensing threshold (e.g., -77 dBm). Under both scenarios, LTE is shown to maintain acceptable rate coverage performance. 




\section{System Model}~\label{SysModelSec}
In this section, we present the spatial location model for Wi-Fi APs and LTE eNBs, the radio propagation assumptions, and the channel access model for Wi-Fi and LTE.
\subsection{Spatial locations}
We focus on the scenario where two operators coexist in a single unlicensed frequency band with bandwidth $B$. Operator 1 uses Wi-Fi, while operator 2 uses LTE, which may implement certain coexistence methods to better coexist with operator 1. Both Wi-Fi and LTE are assumed to have full buffer downlink only traffic. The LTE eNBs are assumed to be low power small cell eNBs, such as femto-cell eNBs~\cite{chandrasekhar2008femtocell}. The locations for APs and eNBs are modeled as realizations of two independent homogeneous PPPs. Specifically, the AP process $\Phi_{W} = \{x_{i}\}_{i}$ has intensity $\lambda_W$\footnote{Note in any given time slot, not all Wi-Fi APs will be necessarily scheduled by CSMA/CA.}, while the eNB process $\Phi_{L} = \{y_{k}\}_{k}$ has intensity $\lambda_L$. Therefore, the number of APs and eNBs in any region with area $A$ are two independent Poisson random variables with mean $\lambda_W A$ and $\lambda_L A$ respectively (resp.). The PPP assumption for APs is reasonable due to the unplanned nature of most Wi-Fi deployments~\cite{nguyen2007dense80211}, while the PPP assumption for eNBs will exhibit similar SINR trend with a constant SINR gap compared to more accurate eNB location models~\cite{guo2014ADG}.

Both Wi-Fi stations (STAs) and LTE user equipments (UEs)\footnote{Wi-Fi STA and Wi-Fi users, as well as LTE UE and LTE users, are used interchangeably in this paper.} are also assumed to be distributed according to homogeneous PPPs. 
Each STA/UE is associated with its closest AP/eNB, which provides the strongest average received power. We assume the STA/UE intensity is much larger than the AP/eNB intensity, such that each AP/eNB has at least one STA/UE to serve. Since both STAs and UEs are homogeneous PPPs, we can analyze the performance of the typical STA/UE, which is assumed to be located at the origin. This is guaranteed by the independence assumption and Slyvniak's theorem\footnote{For any event $A$ and PPP $\Phi$, a heuristic interpretation of the Slyvniak's theorem is: $\mathbb{P} (\Phi \in A | o \in \Phi) = \mathbb{P} (\Phi \cup \{o\}\in A)$. }~\cite{chiu2013stochastic}.
Index $0$ is used for the serving AP/eNB to the typical STA/UE, which will be referred to as the closest or tagged AP/eNB for the rest of the paper. In addition, the link between the typical STA/UE and the tagged AP/eNB is referred to as the typical Wi-Fi/LTE link. Since $\Phi_W$ is a PPP with intensity $\lambda_W$, the probability density function (PDF) of the distance from the typical STA to the tagged AP is $f_{W}(r) = \lambda_W 2\pi r\exp(-\lambda_W  \pi r^2)$. Similarly, the PDF from the typical UE to the tagged eNB is $f_{L}(r) = \lambda_L 2\pi r\exp(-\lambda_L \pi r^2)$. 

\subsection{Propagation Assumptions}
 
The transmit power for each AP and eNB is assumed to be $P_W$ and respectively $P_L$. 
A common free space path loss model with reference distance of 1 meter is used for both Wi-Fi and LTE links, which is given by $\mathit{l}[\text{dB}](d) = 20 \log_{10} (\frac{4\pi}{\lambda_c}) + 10\alpha \log_{10} (d) $. Here $\lambda_c$ denotes the wavelength, $\alpha$ denotes the path loss exponent, and $d$ denotes the link length. The large-scale shadowing effects are neglected for simplicity. 
All the channels are assumed to be subject to i.i.d. Rayleigh fading, with each fading variable exponentially distributed with parameter $\mu$. The thermal noise power is $\sigma_{N}^{2}$. Notations and system parameters are listed in Table~\ref{SysParaTable}.
\begin{table}[t]
\center\caption{Notation and Simulation Parameters}\label{SysParaTable}
\resizebox{450pt}{!}{%
\begin{tabular}{|c|p{105mm}|p{45mm}|}
\hline 
Symbol & Definition & Simulation Value \\ 
\hline 
$\Phi_W$, $\lambda_W$ & Wi-Fi AP PPP and intensity & • \\ 
\hline 
$\Phi_L$, $\lambda_L$ & LTE eNB PPP and intensity & • \\ 
\hline 
 $P_W$, $P_L$ & Wi-Fi AP, LTE eNB transmit power & 23 dBm, 23 dBm \\ 
\hline 
$\Gamma_{cs}$, $\Gamma_{ed}$& Carrier sensing and energy detection thresholds & -82 dBm, -62 dBm \\ 
\hline 
$e_{i}^W$, $e_{k}^L$& Medium access indicator for AP $x_i$, eNB $y_k$ & \\ 
\hline
$x_0$, $y_0$ & The tagged AP and tagged eNB (i.e., the AP and eNB closest to the typical STA and UE resp.)& \\
\hline
$f_{W}(r), f_{L}(r)$ & PDF of the distance from tagged AP/eNB to typical STA/UE &\\
\hline 
$f_c$, $B$ & Carrier frequency and bandwidth of the unlicensed band & 5 GHz, 20 MHz \\ 
\hline 
$\alpha$ & Path loss exponent & 4 \\ 
\hline 
$\mu$ & Parameter for Rayleigh fading channel & 1 \\ 
\hline 
$\sigma_{N}^{2}$ & Noise power & 0 \\ 
\hline 
$B(x,r)$ ($B^{o}(x,r)$) & Closed (open) ball with center $x$ and radius $r$ & \\
\hline
$B^{c}(x,r)$  &  Complement of $B(x,r)$ &\\
\hline
$F_{i,0}^{L}$ ($F_{i,0}^{W}$, $F_{i,0}^{LW}$, $F_{i,0}^{WL}$) & Fading of the channel from eNB $y_{i}$ to typical UE (AP $x_{i}$ to typical STA, eNB $y_{i}$ to typical STA, AP $x_{i}$ to typical UE)& exponentially distributed with parameter $\mu$\\
\hline
$G_{i,j}^{L}$ ($G_{i,j}^{W}$, $G_{i,j}^{LW}$, $G_{i,j}^{WL}$) & Fading of the channel from eNB $y_{i}$ to eNB $y_{j}$ (AP $x_{i}$ to AP $x_{j}$, eNB $y_{i}$ to AP $x_{j}$, AP $x_{i}$ to eNB $y_{j}$) & exponentially distributed with parameter $\mu$\\
\hline
\end{tabular} }
\end{table}
\subsection{Modeling Channel Access for Wi-Fi}\label{MACDefnSubsec}
In contrast to LTE, 
Wi-Fi implements the distributed CSMA/CA protocol for channel access coordination among multiple APs. The CSMA/CA protocol consists of the physical layer clear channel assessment (CCA) process and a random backoff mechanism, such that two nearby nodes will never transmit simultaneously. 
In particular, the Wi-Fi device will hold CCA as busy if
any valid Wi-Fi signal that exceeds the carrier sense (CS) threshold $\Gamma_{cs}$ is detected,
or if any signal that exceeds the energy detection threshold (ED) $\Gamma_{ed}$ is received~\cite{ieee2012IEEE}. Similar to~\cite{jeon2014lte}, we assume Wi-Fi devices detect the eNB transmission with the energy detection threshold $\Gamma_{ed}$ since an LTE signal is not decodable. As soon as a CSMA/CA device observes an idle channel, it needs to follow a random back-off period before transmission. This back-off period is chosen randomly from a set of possible values called the contention window.

To model the locations of Wi-Fi APs which simultaneously access the channel at a given time, we adapt the formulation of~\cite{nguyen2007dense80211} to account for the coexisting LTE network. 
We can define the contender of a Wi-Fi AP $x_{i}$ as the other Wi-Fi APs and the LTE eNBs whose power received by $x_{i}$ exceeds the threshold $\Gamma_{cs}$ and $\Gamma_{ed}$ respectively. Each Wi-Fi AP $x_{i}$ has an independent mark $t_{i}^W$ to represent the random back-off period, which is uniformly distributed on $[0,1]$. Each Wi-Fi AP obtains channel access for packet transmission if it chooses a smaller timer, i.e., back-off period, than all its contenders. A medium access indicator $e_i^{W}$ is assigned to each AP, which is equal to 1 if the AP is allowed to transmit by the CSMA/CA protocol, and 0 otherwise. Depending on the specific coexistence mechanism of LTE, the medium access indicator for each AP is determined differently. The Palm probability~\cite[p.131]{chiu2013stochastic} that the medium access indicator of a Wi-Fi AP is equal to 1 is referred to as the medium access probability, or MAP for short.

The considered channel access mechanism has some limitations, such as it has a fixed contention window size which does not capture the exponential backoff, and it is also more suitable for synchronized and slotted version of CSMA/CA. Nevertheless, it is able to model the key feature of CSMA/CA in IEEE 802.11 standard~\cite{ieee2012IEEE}, such that each CSMA/CA device transmits if it does not carrier sense any other CSMA/CA device with a smaller back-off timer. In addition, through comparisons with simulation results,~\cite{nguyen2007dense80211,busson_inria_00316029} show this simplified model provides a reasonable conservative representation of transmitting APs in the actual CSMA/CA networks. 
\subsection{Definition of Performance Metrics}\label{SINRDefnSubsec}
The main performance metrics that are analyzed include the MAP of the tagged AP and eNB, as well as the SINR coverage probability for the typical Wi-Fi STA and LTE UE. 
Specifically, given the tagged AP $x_0$ transmits (i.e., $e_{0}^{W} = 1$), the received SINR of the typical Wi-Fi STA is:
\begin{align}\label{SINRWiFiW1Eq}
\allowdisplaybreaks
	&\text{SINR}_{0}^{W} = \frac{P_W F_{0,0}^{W} / \mathit{l}(\|x_{0}\|)}{\sum\limits_{x_{j} \in \Phi_W \setminus \{ x_{0}\}} \!\!\!\!\!\!\!\!\!\! P_W F_{j,0}^{W} e_{j}^W / \mathit{l}(\|x_{j}\|)+ \sum\limits_{y_{m} \in \Phi_{L}} \!\!\!\!P_L   F_{m,0}^{LW} e_{m}^L/ \mathit{l}(\|y_{m}\|) + \sigma_{N}^2},
\end{align}
where $e_{j}^W$ and $e_{m}^L$ represent the medium access indicator for AP $x_j$ and eNB $y_m$ respectively.
The SINR coverage probability of the typical STA with SINR threshold $T$ is defined as $\mathbb{P}(\text{SINR}_{0}^{W} > T | e_{0}^{W} = 1)$, which gives the instantaneous SINR performance of the typical Wi-Fi link. 
Similarly, the received SINR of the typical LTE UE given the tagged eNB $y_0$ transmits is:
\begin{align}\label{SINRLTEEq}
\allowdisplaybreaks
&\text{SINR}_{0}^{L}  = \frac{P_L F_{0,0}^{L}/ \mathit{l}(\|y_{0}\|)}{\sum\limits_{x_{j} \in \Phi_{W}} \!\!\!\!P_W  F_{j,0}^{WL} e_{j}^W/ \mathit{l}(\|x_{j}\|) + \sum\limits_{y_{m} \in \Phi_L \setminus \{ y_{0}\}}  \!\!\!\!\!\!\!\!\!\! P_L   F_{m,0}^{L} e_{m}^L/ \mathit{l}(\|y_{j}\|)+ \sigma_{N}^2},
\end{align}and the SINR coverage probability is $\mathbb{P}(\text{SINR}_{0}^{L} > T | e_{0}^L = 1)$.

Based on the MAP and the SINR distribution, we will compare different LTE coexistence mechanisms using the density of successful transmission and the rate coverage probability, which are defined as follows.
\begin{definition}[Density of Successful Transmissions]\label{DSTDefn}
	For decoding SINR requirement $T$, the density of successful transmission, or DST for short, is defined as the mean number of successful transmission links per unit area~\cite{baccelli2010stochasticpt2}. Since the typical Wi-Fi/LTE link is activated only when the tagged AP/eNB accesses the channel, the DST for Wi-Fi and LTE are given by:
	\allowdisplaybreaks
	\begin{align}
	d_{suc}^W(\lambda_W,\lambda_L,T) &= \lambda_W \mathbb{E}[e_{0}^{W}] \mathbb{P}(\text{SINR}_{0}^{W} > T | e_{0}^W = 1),\nonumber\\
	d_{suc}^L(\lambda_W,\lambda_L,T) &= \lambda_L \mathbb{E}[e_{0}^{L}] \mathbb{P}(\text{SINR}_{0}^{L} > T | e_{0}^L = 1).\label{DSTDefnEq}
	\end{align}
\end{definition}
\begin{definition}[Rate coverage]\label{RCOPDefn}
The rate coverage probability with threshold $\rho$ is defined as the probability for tagged Wi-Fi AP/LTE eNB to support an aggregate data rate of $\rho$, given by\footnote{The user-perceived data rate distribution can be obtained from~(\ref{RCOPDefnEq}) by considering the average fraction of resource that each user achieves.}:
	\allowdisplaybreaks
		\begin{align}
		P_{rate}^W(\lambda_W,\lambda_L,\rho) &=  \mathbb{P}(B \log(1+\text{SINR}_{0}^{W}) \mathbb{E}[e_{0}^{W}] > \rho | e_{0}^W = 1),\nonumber\\
		P_{rate}^L(\lambda_W,\lambda_L,\rho) &=  \mathbb{P}(B \log(1+\text{SINR}_{0}^{L}) \mathbb{E}[e_{0}^{L}] > \rho | e_{0}^L = 1).\label{RCOPDefnEq}
		\end{align}
\end{definition}The $\mathbb{E}[e_{0}^{W}]$ and $\mathbb{E}[e_{0}^{L}]$ in~(\ref{RCOPDefnEq}) accounts for the fact that the tagged AP and tagged eNB have channel access for $\mathbb{E}[e_{0}^{W}]$ and $\mathbb{E}[e_{0}^{L}]$ fraction of time respectively. 
Equivalently, the rate coverage probability gives the fraction of Wi-Fi APs/LTE eNBs (or Wi-Fi/LTE cells) that can support an aggregate data rate of $\rho$ for the rest of the paper.
\begin{remark}\label{RemarkAngleInvariance}
Since both $\Phi_W$ and $\Phi_L$ are stationary and isotropic, the above performance metrics are invariant with respect to (w.r.t.) the angle of the tagged AP $x_0$ and tagged BS $y_0$. Without loss of generality, the angle of $x_0$ and $y_0$ are assumed to be $0$. In addition, the PDF of $\|x_0\|$ and $\|y_0\|$ are given by $f_W(\cdot)$ and $f_L(\cdot)$ respectively, which are defined in Table~\ref{SysParaTable}. 
\end{remark}

Finally, we define several functions that will be used throughout this paper in Table II. Specifically, $N^L_0(y,r,\Gamma)$ and $N^W_0(y,r,\Gamma)$ represent the expected number of eNBs and APs respectively in $\mathbb{R}^2 \setminus B(0,r)$, whose signal power received at $y \in \mathbb{R}^2$ exceeds $\Gamma$. In addition, $C^L_0(y_1, \Gamma_1, y_2, \Gamma_2)$ and $C^W_0(y_1, \Gamma_1, y_2, \Gamma_2)$ represent the expected number of eNBs and APs respectively in $\mathbb{R}^2 \setminus B(0,\|y_2\|)$, whose signal powers received at $y_1 \in \mathbb{R}^2$ and $y_2 \in \mathbb{R}^2$ exceed $\Gamma_1$ and $\Gamma_2$ respectively. Moreover, $M$, $V$ and $U$ are functions helping to calculate the conditional MAP in the following sections.

\begin{table}[h]
	\renewcommand{\arraystretch}{1}
	\centering\caption{Notations and Definitions of Special Functions}\label{FunctionTable}
	\begin{tabular}{|M{40mm}|M{120mm}|} \hline
		Notation & Definition \\ \hline
		\small $N^L_{0}(y,r, \Gamma)$ & \small $\lambda_L \int_{\mathbb{R}^2\setminus B(0,r)} \exp(-\mu \frac{\Gamma}{P_L} \mathit{l}(\|x-y\|)) {\rm d}x$ \\ \hline
		\small $N^W_0(y,r, \Gamma)$ & \small $\lambda_W \int_{\mathbb{R}^2\setminus B(0,r)} \exp(-\mu \frac{\Gamma}{P_W} \mathit{l}(\|x-y\|)) {\rm d}x$  \\ \hline 
		\small $N^L_1(r, \Gamma)$, \small $N^W_1(r, \Gamma)$ & \small $N^L_0(y,r, \Gamma)$, \small $N^W_0(y,r, \Gamma)$ (polar coordinates of $y = (r,0)$) \\ \hline
		\small $N^L_2(r)$, \small $N^W_2(r)$ &  \small $N^L_0(y, r, \Gamma_{ed})$, \small $N^W_0(y, r, \Gamma_{cs})$ (polar coordinates of $y = (r,0)$)\\ \hline 
		\small $N^L_3(\Gamma)$, \small $N^W_3(\Gamma)$ & \small $N^L_0(o,0,\Gamma)$, \small $N^W_0(o,0,\Gamma)$ \\ \hline	
		\small $N^L$, \small $N^W$ & \small $N^L_0(o,0,\Gamma_{ed})$, \small $N^W_0(o,0,\Gamma_{cs})$ \\ \hline
		\small $C^L_{0}(y_1, \Gamma_1, y_2, \Gamma_2)$ & \small $\lambda_L \int_{\mathbb{R}^2 \setminus B(0,\|y_2\|)} \exp(-\mu \frac{\Gamma_{1}}{P_L} \mathit{l}(\|x-y_1\|) - \mu \frac{\Gamma_{2}}{P_L} l(\|x-y_2\|)) {\rm d}x$  \\ \hline
		\small $C^W_0(y_1, \Gamma_1, y_2,\Gamma_2) $& \small $ \lambda_W \int_{\mathbb{R}^2 \setminus B(0,\|y_2\|) } \exp(-\mu \frac{\Gamma_{1}}{P_W} \mathit{l}(\|x-y_1\|) - \mu \frac{\Gamma_{2}}{P_W} l(\|x-y_2\|)) {\rm d}x$ \\ \hline 
		\small $C^L_1(y_1,y_2)$, \small $C^W_1(y_1,y_2)$ & \small $C^L_0(y_1,\Gamma_{ed},y_2,\Gamma_{ed})$, \small $C^W_0(y_1,\Gamma_{cs},y_2,\Gamma_{cs})$ \\ \hline
		\small $C^L_2(y_1)$, \small $C^W_2(y_1)$ & \small $C^L_0(y_1,\Gamma_{ed},o,\Gamma_{ed})$, \small $C^W_0(y_1,,\Gamma_{cs},o,\Gamma_{cs})$ \\ \hline
		\small $M(N_1,N_2,N_3)$  & $ \small (\frac{1-\exp(-N_1)}{N_1}  - \frac{1-\exp(-N_1-N_2+N_3)}{N_1+N_2-N_3})/(N_2-N_3) $\\ \hline
		\small $V(x,s_1,s_2,N_1,N_2,N_3)$ &  \small$(1-\exp(-\mu s_1\mathit{l}(\|x\|))) M(N_1,N_2,N_3)+(1-\exp(-\mu s_2\mathit{l}(\|x\|))) M(N_2,N_1,N_3)$ \\ \hline
		\small $U(x,s,N_1)$  & \small $\frac{1-\exp(-N_1)}{N_1} - \exp(-\mu s \mathit{l}(\|x\|) )(\frac{1-\exp(-N_1)}{N_1^2} - \frac{\exp(-N_1)}{N_1})$ \\ \hline
	\end{tabular}
\end{table}

\section{LTE with Continuous Transmission and No Protocol Change}\label{LTEwCT}
In this section, the MAP and SINR coverage performance for the LTE and Wi-Fi networks are investigated when LTE transmits continuously without any protocol modifications.
\subsection{Medium Access Probability}
From the CSMA/CA protocol described in Section II-C, a Wi-Fi AP will not transmit whenever it has an LTE eNB as its contender, i.e., the power it receives from any LTE eNB exceeds the energy detection threshold $\Gamma_{ed}$. Therefore, the medium access indicator $e_{i}^W$ for AP $x_{i}$ is: 
\begin{align}\label{MAIWi-FiW1}
e_{i}^W = & \prod\limits_{y_{k} \in \Phi_{L}} \mathbbm{1}_{G_{ki}^{LW}/ \mathit{l}(\|y_{k} - x_{i}\|) \leq \frac{\Gamma_{ed}}{P_L}}  \prod\limits_{x_{j} \in \Phi_{W} \setminus \{x_{i}\}} \biggl(\mathbbm{1} _{t_{j}^{W} \geq t_{i}^{W}}  +\mathbbm{1} _{t_{j}^{W} < t_{i}^{W}} \mathbbm{1}_{G_{ji}^{W}/ \mathit{l}(\|x_{j}- x_{i}\|) \leq \frac{\Gamma_{cs}}{P_W }} \biggl) .
\end{align}
The first part of~(\ref{MAIWi-FiW1}) means each Wi-Fi AP will not transmit whenever it has any LTE contender, while the second part of~(\ref{MAIWi-FiW1}) means each Wi-Fi AP will not transmit whenever any of its Wi-Fi contenders has a smaller back-off timer. Although~(\ref{MAIWi-FiW1}) is consistent with IEEE 802.11 specifications~\cite{ieee2012IEEE}
, energy detection is typically implemented based on total interference~\cite{3gpp2015TR}, i.e., each AP will report channel as busy if the total (non Wi-Fi) interference exceeds the energy detection threshold $\Gamma_{ed}$.
Nevertheless, under the assumption that eNBs/APs have a PPP distribution,~(\ref{MAIWi-FiW1}) is a reasonable model for the total interference based energy detection since: (1) the tail distribution of the total interference asymptotically approaches that of the interference from the strongest interferer~\cite{haenggi2009interference}; (2) ED threshold $\Gamma_{ed}$ is 20 dB higher than the CS threshold $\Gamma_{cs}$, which makes $\Gamma_{ed}$ a relatively large number; 
(3) simulation results in Fig.~\ref{RSSIvsStrongestFig} show that given $\Gamma_{ed} = -62$dBm and AP $x_i$, $\mathbb{P}(\sum_{y_{k} \in \Phi_{L}} \frac{ P_L G_{ki}^{LW}}{ \mathit{l}(\|y_{k} - x_{i}\|)} \leq \Gamma_{ed}) \approx \mathbb{P}(\max_{y_{k} \in \Phi_{L}} \frac{P_L G_{ki}^{LW}}{\mathit{l}(\|y_{k} - x_{i}\|) }\leq \Gamma_{ed})$ for various values of $\alpha$ and $\lambda_L$; and (4) there is no known closed-form interference distribution with PPP distributed transmitters~\cite{haenggi2009interference}.
\begin{figure}
	\begin{subfigure}[b]{0.45\textwidth}
		\centering
		\includegraphics[height=2.0in, width= 3in]{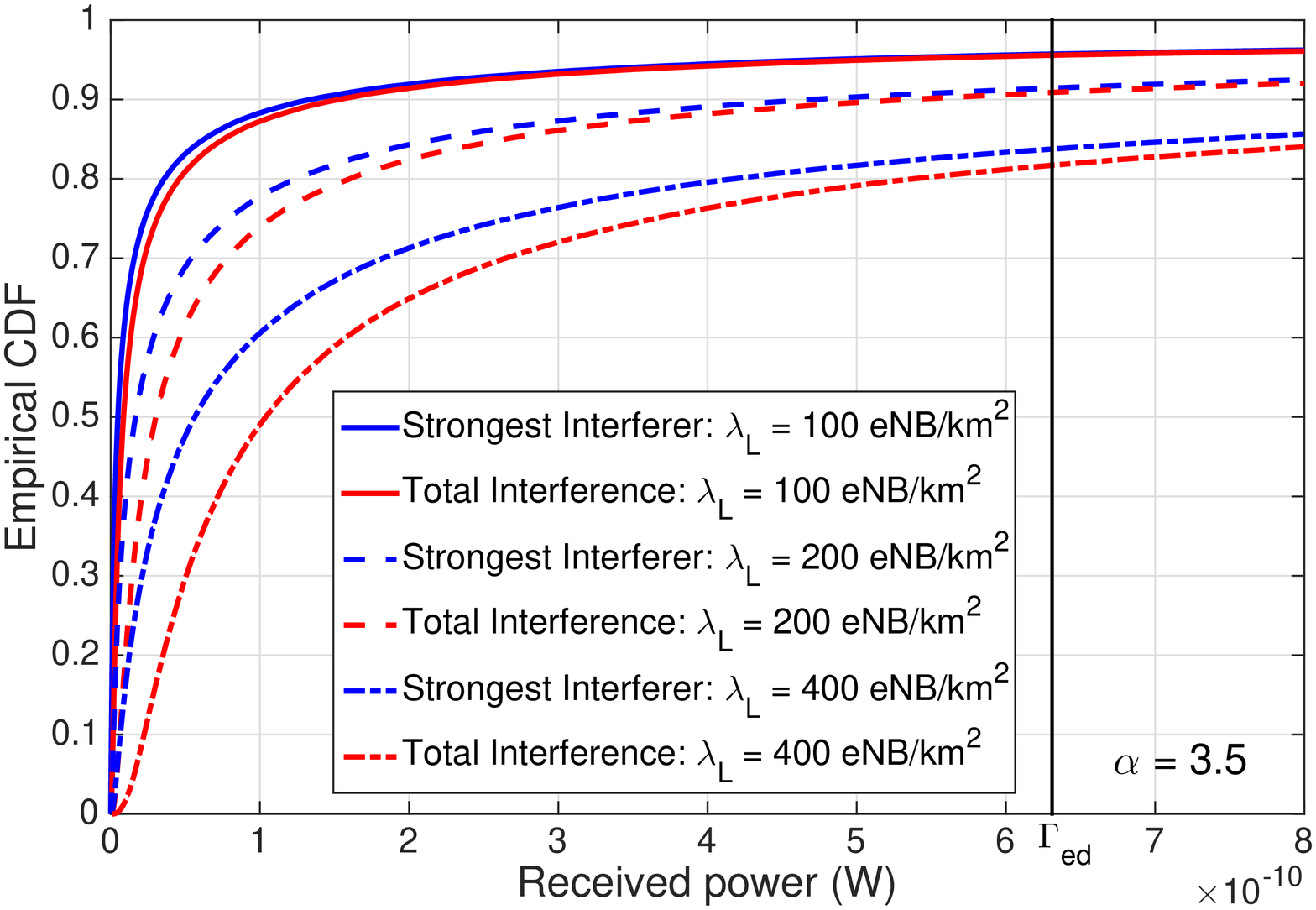}
	\end{subfigure}
	\hfill
	\begin{subfigure}[b]{0.45\textwidth}
		\centering
		\includegraphics[height=2.0in, width=3in]{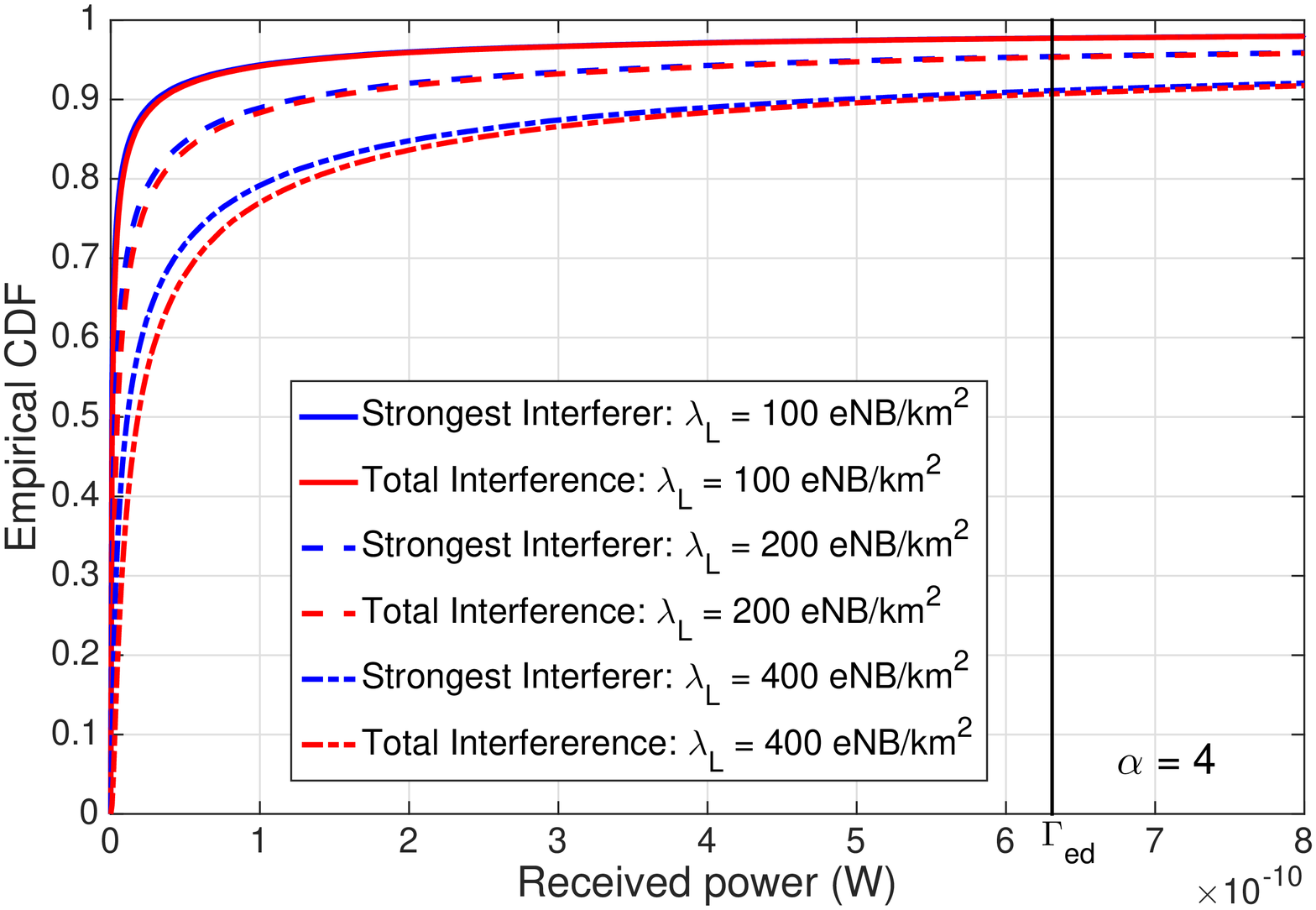}
	\end{subfigure}
	\caption{Empirical CDF of total LTE interference and strongest LTE interferer at a typical AP.}\label{RSSIvsStrongestFig}
\end{figure}
\begin{lemma}\label{MAPWi-FiW1lemma}
When LTE transmits continuously with no protocol modifications, the MAP for a typical Wi-Fi AP is given by:
\allowdisplaybreaks
\begin{align}\label{MAPWi-FiW1Eq}
\allowdisplaybreaks
p_{0,\text{MAP}}^W(\lambda_W,\lambda_L) = \exp(-N^{L}) \frac{1-\exp(-N^{W})}{N^{W}},
\end{align}where $N^{L}$ and $N^{W}$ 
are defined in Table~\ref{FunctionTable}.
\end{lemma}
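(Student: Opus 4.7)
The goal is to compute $p_{0,\text{MAP}}^W = \mathbb{E}^{!x_i}[e_i^W]$, the Palm expectation of the medium access indicator given an AP at $x_i$. My plan is to invoke Slivnyak's theorem to reduce the Palm computation to an ordinary expectation under $\Phi_W \cup \{o\}$, place the tagged AP at the origin, and then exploit two structural facts that will make the product in~(\ref{MAIWi-FiW1}) factorize: (i) $\Phi_W$ and $\Phi_L$ are independent, so the LTE-contender factor and the Wi-Fi-contender factor decouple; (ii) the fading variables $\{G_{ki}^{LW}\}$ and $\{G_{ji}^{W}\}$ and the marks $\{t_j^W\}$ are i.i.d.\ and independent of each point process, so I can take expectations atom-by-atom and apply the Laplace/PGFL functional of a PPP.

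For the LTE factor, I would first integrate out the Rayleigh fading to get, for a generic eNB at $y$,
\begin{align*}
\mathbb{E}\bigl[\mathbbm{1}_{G^{LW} \le \Gamma_{ed}\,l(\|y\|)/P_L}\bigr]
= 1-\exp\!\bigl(-\mu\,\tfrac{\Gamma_{ed}}{P_L}\,l(\|y\|)\bigr),
\end{align*}
then apply the PGFL of $\Phi_L$. The resulting $1-f(y)=\exp(-\mu\Gamma_{ed}l(\|y\|)/P_L)$ integrates against $\lambda_L\,\mathrm{d}y$ to precisely $N^L$, giving the factor $\exp(-N^L)$, matching the definition in Table~\ref{FunctionTable}.

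For the Wi-Fi factor, the trick is to condition on the back-off mark $t_i^W=t$ of the tagged AP. Given $t$, each other AP $x_j$ contributes independently a factor whose mean (over $t_j^W$ uniform on $[0,1]$ and over $G_{ji}^W$) is
\begin{align*}
\mathbb{E}\bigl[\mathbbm{1}_{t_j^W\ge t}+\mathbbm{1}_{t_j^W<t}\mathbbm{1}_{G_{ji}^W\le \Gamma_{cs}l(\|x_j\|)/P_W}\bigr]
= 1 - t\exp\!\bigl(-\mu\,\tfrac{\Gamma_{cs}}{P_W}l(\|x_j\|)\bigr).
\end{align*}
The PGFL of $\Phi_W$ then gives the conditional factor $\exp(-t\,N^W)$, and finally averaging $t$ uniformly on $[0,1]$ yields $\int_0^1 \exp(-tN^W)\,\mathrm{d}t = (1-\exp(-N^W))/N^W$. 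Multiplying the two independent factors gives~(\ref{MAPWi-FiW1Eq}).

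The only subtle step is the Wi-Fi factor: one must be careful that the product inside the expectation mixes the random mark $t_i^W$ of the tagged AP with the marks and fadings of all other APs, and that the PGFL can only be applied after the AP-by-AP expectation is taken conditionally on $t_i^W=t$. The LTE factor and the multiplicative separation across $\Phi_W$ vs.\ $\Phi_L$ are straightforward consequences of independence. No concentration or truncation argument is needed; everything reduces to two applications of the PPP PGFL with Rayleigh fading, plus a one-dimensional integral in $t$.
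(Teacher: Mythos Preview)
Your proposal is correct and follows essentially the same route as the paper's proof: reduce the Palm expectation via Slivnyak's theorem, use the independence of $\Phi_W$ and $\Phi_L$ to split the product, condition on the tagged timer $t_i^W=t$ to handle the Wi-Fi factor via the PGFL, then integrate $t$ over $[0,1]$. The only cosmetic difference is that you place the tagged AP at the origin (using stationarity) whereas the paper keeps it at $x_i$.
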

\begin{proof}
The MAP of Wi-Fi AP $x_i$ is the Palm probability that its medium access indicator is equal to 1. Given its timer $t_i^{W} = t$, the MAP can be derived as:
\allowdisplaybreaks
\begin{align*}
\allowdisplaybreaks
&\mathbb{E}_{\Phi_{W}}^{x_i} \biggl[ \prod\limits_{y_{k} \in \Phi_{L}} \mathbbm{1}_{G_{ki}^{LW}/ \mathit{l}(\|y_{k} - x_{i}\|) \leq \frac{\Gamma_{ed}}{P_L}}  \prod\limits_{x_{j} \in \Phi_{W} \setminus \{x_{i}\}} \biggl(\mathbbm{1} _{t_{j}^{W} \geq t}  +\mathbbm{1} _{t_{j}^{W} < t} \mathbbm{1}_{G_{ji}^{W}/ \mathit{l}(\|x_{j}- x_{i}\|) \leq \frac{\Gamma_{cs}}{P_W }} \biggl)\biggl] \\
\overset{(a)}{=} &\mathbb{E}\left[\prod_{k} \left(1-\exp(-\mu \frac{\Gamma_{ed}}{P_L} \mathit{l}(\|y_{k}-x_i\|))\right)\right]  \times \mathbb{E}_{\Phi_{W}}^{!x_i} \biggl[\prod_{j} (1-t\exp(-\mu \frac{\Gamma_{cs}}{P_W} \mathit{l}(\|x_j-x_i\|)) \biggl]\\
\overset{(b)}{=} &\exp(-N^{L})\exp(-t N^{W}),
\end{align*}where (a) follows from the fact that $\Phi_L$ is independent of $\Phi_W$, and (b) follows from Slyvniak's theorem and the PGFL of a homogeneous PPP. 
Finally noting that $t \sim U(0,1)$ and deconditioning on $t$ gives the desired result.  
\end{proof}

\begin{remark}
By adding the LTE network with intensity $\lambda_L$, the MAP for a typical AP is degraded by $\exp(-N^{L})$ compared to the Wi-Fi only scenario. Note that the decrease is exponential w.r.t. $\lambda_L$, the LTE eNB intensity. 
\end{remark}

Based on the system parameters listed in Table~\ref{SysParaTable}, the MAP for the typical Wi-Fi AP is plotted in Fig.~\ref{MAPW1Wi-FiFig} w.r.t. different AP and eNB intensities. From Fig.~\ref{MAPW1Wi-FiFig}, it can be observed that with low LTE eNB intensity (e.g. LTE eNB intensity is less than 100/km$^2$), the MAP for the typical Wi-Fi AP is not much affected by the additional LTE network as a result of the high energy detection threshold for LTE signals. However, when the LTE eNB intensity increases to over 100/km$^2$, the additional eNBs significantly degrade the MAP of the typical Wi-Fi AP. 
\begin{figure}[h]
		\centering
 		\includegraphics[height=2in, width=3.2in]{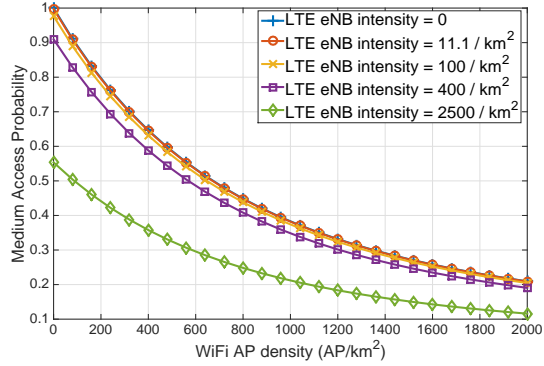}
        \caption{Effect of AP and eNB intensities on the MAP for the typical Wi-Fi AP.}\label{MAPW1Wi-FiFig}
\end{figure}

Since the tagged Wi-Fi AP is closer to the typical STA than other APs, the MAP of the tagged AP will be a biased version for the MAP of the typical AP:
\begin{corollary}\label{MAPWi-FitagW1lemma}
	When LTE transmits continuously with no protocol modifications, the MAP for the tagged Wi-Fi AP is given by:
	\allowdisplaybreaks
	\begin{align}\label{MAPWi-FitageW1Eq}
	\allowdisplaybreaks
	\hat{p}_{0,\text{MAP}}^W(\lambda_W,\lambda_L) =  \int_{0}^{\infty}\frac{1-\exp(-N^W_2(r_0))}{N^W_2(r_0)} \exp(-N^L) f_{W}(r_0){\rm d}r_0,
	\end{align}
	where $f_W$ is defined in Table~\ref{SysParaTable}, while $N^L$ and $N^W_2$ are defined in Table~\ref{FunctionTable}.
\end{corollary}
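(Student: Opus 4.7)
The plan is to mimic the derivation of Lemma~\ref{MAPWi-FiW1lemma}, but now condition on the event that the AP under consideration is the one closest to the typical STA at the origin. The key difference from the untagged case is that conditioning on $x_0$ being the tagged AP forces an exclusion region $B(0,\|x_0\|)$ in which no other Wi-Fi APs lie, whereas the LTE eNB process is unaffected because $\Phi_W$ and $\Phi_L$ are independent.

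First, I would condition on $\|x_0\| = r_0$, whose density is $f_W(r_0) = 2\pi\lambda_W r_0 \exp(-\lambda_W \pi r_0^2)$. By Remark~\ref{RemarkAngleInvariance}, I can place the tagged AP at polar coordinates $(r_0,0)$ without loss of generality. The Palm distribution of $\Phi_W$ given that $x_0$ is the closest AP to the origin is, by the standard reasoning for nearest-neighbor distance of a PPP combined with Slivnyak's theorem, a homogeneous PPP of intensity $\lambda_W$ on $\mathbb{R}^2 \setminus B(0,r_0)$ plus the atom at $x_0$. Also condition on the back-off timer of the tagged AP being $t_0^W = t$.

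Next, I would write the medium access indicator $e_0^W$ exactly as in~(\ref{MAIWi-FiW1}) with $x_i$ replaced by $x_0$, and take the conditional expectation. The LTE part factors out and, via the independence of $\Phi_L$ and $\Phi_W$ together with the PGFL of a PPP, yields
\begin{align*}
\mathbb{E}\Bigl[\prod_{y_k \in \Phi_L} \bigl(1-\exp(-\mu \tfrac{\Gamma_{ed}}{P_L} \mathit{l}(\|y_k - x_0\|))\bigr)\Bigr] = \exp(-N^L),
\end{align*}
where stationarity of $\Phi_L$ removes the dependence on $x_0$. The Wi-Fi part reduces, by the PGFL applied to the reduced PPP on $\mathbb{R}^2 \setminus B(0,r_0)$, to
\begin{align*}
\exp\!\Bigl(-t\,\lambda_W \!\int_{\mathbb{R}^2 \setminus B(0,r_0)} \!\exp(-\mu \tfrac{\Gamma_{cs}}{P_W}\mathit{l}(\|x-x_0\|)) {\rm d}x\Bigr) = \exp(-t\,N^W_2(r_0)),
\end{align*}
using the definition of $N^W_2$ in Table~\ref{FunctionTable} (with $x_0 = (r_0,0)$).

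Finally I would decondition. Since $t_0^W \sim U(0,1)$, integrating $\exp(-t N^W_2(r_0))$ over $t \in [0,1]$ gives $\tfrac{1-\exp(-N^W_2(r_0))}{N^W_2(r_0)}$. Multiplying by $\exp(-N^L)$ and then averaging over $r_0$ with density $f_W$ produces exactly~(\ref{MAPWi-FitageW1Eq}). The only subtlety I would need to be careful about is the correct form of the Palm distribution of $\Phi_W$ given the nearest-AP condition; all subsequent manipulations are essentially the PGFL computation already carried out in Lemma~\ref{MAPWi-FiW1lemma}, but over a punctured plane rather than $\mathbb{R}^2$.
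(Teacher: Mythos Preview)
Your proposal is correct and follows essentially the same approach as the paper: condition on the tagged AP at $(r_0,0)$, use Slivnyak's theorem together with the nearest-AP condition to reduce $\Phi_W\setminus\{x_0\}$ to a PPP on $\mathbb{R}^2\setminus B(0,r_0)$, factor the LTE and Wi-Fi parts, apply the PGFL, integrate over the uniform timer, and then average over $r_0$. The paper's own proof is slightly more compressed but contains exactly the same steps.
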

\begin{proof}
According to Remark~\ref{RemarkAngleInvariance}, given the tagged AP is located at $x_0 = (r_0,0)$, we have:
\allowdisplaybreaks
\begin{align*}
\allowdisplaybreaks
&\mathbb{P}(e_0^W = 1 | x_0 = (r_0,0)) \\
= &\mathbb{E}_{\Phi_W}^{x_0}\left(\prod\limits_{y_{k} \in \Phi_{L}} \mathbbm{1}_{G_{ki}^{LW}/ \mathit{l}(\|y_{k} - x_{0}\|) \leq \frac{\Gamma_{ed}}{P_L}} \prod\limits_{x_{j} \in \Phi_{W} \setminus \{x_{0}\}} \left(\mathbbm{1} _{t_{j}^{W} \geq t_{0}^{W}}  +\mathbbm{1} _{t_{j}^{W} < t_{0}^{W}} \mathbbm{1}_{G_{j0}^{W}/ \mathit{l}(\|x_{j}- x_{0}\|) \leq \frac{\Gamma_{cs}}{P_W }} \right)| \Phi_W(B^o(0,r_0)) = 0 \right)\\
=& \mathbb{E}\left(\prod\limits_{y_{k} \in \Phi_{L}} \mathbbm{1}_{G_{k0}^{LW}/ \mathit{l}(\|y_{k} - x_{0}\|) \leq \frac{\Gamma_{ed}}{P_L}} \prod\limits_{x_{j} \in \Phi_{W}\cap B^c(0,r_0)} \left( \mathbbm{1} _{t_{j}^{W} \geq t_{0}^{W}}  +\mathbbm{1} _{t_{j}^{W} < t_{0}^{W}} \mathbbm{1}_{G_{j0}^{W}/ \mathit{l}(\|x_{j}- x_{0}\|) \leq \frac{\Gamma_{cs}}{P_W }} \right)\right)\\
=&\frac{1-\exp(-N^W_2(r_0))}{N_2^W(r_0)} \exp(-N^L),
\end{align*}Finally, Corollary~\ref{MAPWi-FitagW1lemma} is derived by incorporating the distribution of $\|x_0\|$. 
\end{proof}
\subsection{SINR Coverage Probability}
\subsubsection{SINR Coverage Probability of Typical Wi-Fi STA}
Since LTE eNBs transmit continuously with no protocol modifications, the medium access indicator for each LTE eNB is 1 almost surely. The medium access indicator $e_j^W$ in~(\ref{MAIWi-FiW1}) depends on both $\Phi_L$ and $\Phi_W$. So there exists a correlation between the interference from LTE eNBs and that from the Wi-Fi APs. Later we will show that if we substitute $\Phi_L$ by another independent PPP $\Phi_L^{'}$ with intensity $\lambda_L$ in~(\ref{MAIWi-FiW1}), the corresponding SINR coverage is an accurate approximation. This means the correlation between the interference from eNBs and APs is mostly captured by the statistical effect of $\Phi_L$ on determining the MAP for Wi-Fi APs. Given the tagged AP is located at $x_0$, we first derive the conditional MAP for another Wi-Fi AP and $x_0$ to transmit simultaneously. 


\begin{corollary}\label{PWSelecProbW1Wi-FiCoro}
	Conditionally on the fact that the tagged AP $x_0 = (r_0,0)$ transmits, the probability for another AP $x \in \Phi_W \cap B^c(0,r_0)$ to transmit is:
	\allowdisplaybreaks
	\allowdisplaybreaks
	\begin{align}
	\allowdisplaybreaks
	h_{1}(r_0,x) = \frac{V(x-x_0,\frac{\Gamma_{cs}}{P_W},\frac{\Gamma_{cs}}{P_W},N^W_2(r_0),N^W_0(x,r_0,\Gamma_{cs}),C^W_1(x,x_0)) }
	{U(x-x_0,\frac{\Gamma_{cs}}{P_W},N^W_2(r_0))\exp(N^L - C^L_2(x-x_0))},\label{PWSelecProbW1Wi-FiEq}
	\end{align}where $B^c(0,r_0)$ is defined in Table~\ref{SysParaTable}.
\end{corollary}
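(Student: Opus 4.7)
The plan is to write $h_1(r_0,x)$ as a Bayes ratio
$\mathbb{P}(e_{0}^{W}=1,\, e_{x}^{W}=1\mid\mathcal{C})/\mathbb{P}(e_{0}^{W}=1\mid\mathcal{C})$,
where the conditioning $\mathcal{C}$ encodes that $x_0$ is the tagged AP at $(r_0,0)$, $x$ is another Wi-Fi AP with $\|x\|>r_0$, and no other AP lies in $B^{o}(0,r_0)$. Under the reduced second-order Palm distribution of $\Phi_W$ at $(x_0,x)$, Slyvniak's theorem makes $\tilde{\Phi}_W := \Phi_W\setminus\{x_0,x\}$ an independent PPP, which the tagged-AP constraint further restricts to $B^{c}(0,r_0)$. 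Both probabilities can then be computed by PGFL arguments parallel to those used for Lemma~\ref{MAPWi-FiW1lemma} and Corollary~\ref{MAPWi-FitagW1lemma}.

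For the denominator, I would adapt the computation of Corollary~\ref{MAPWi-FitagW1lemma} to account for $x$ as a certain extra Wi-Fi contender of $x_0$. Splitting $e_{0}^{W}$ into its LTE-contender product, the single factor at $x$, and the factor over $\tilde{\Phi}_W$, the LTE PGFL gives $\exp(-N^{L})$ as in Lemma~\ref{MAPWi-FiW1lemma}. Averaging the remaining terms over the fadings and over $t_{0}^{W},t_{x}^{W}\sim U(0,1)$ produces the polynomial $U(x-x_0,\Gamma_{cs}/P_W,N^{W}_{2}(r_0))$, where the two pieces of $U$ come from the two orderings of $t_{0}^{W}$ and $t_{x}^{W}$ (the $x$-factor being $1$ when $t_{x}^{W}\ge t_{0}^{W}$, and the mean of the weak-channel indicator between $x$ and $x_0$ otherwise).

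For the numerator I would decompose $e_{0}^{W}\cdot e_{x}^{W}$ into three independent pieces: a joint LTE-contender product seen by both $x_0$ and $x$, a mutual-contender pair of factors at $x_0$ and $x$, and a per-$w$ factor for each $w\in\tilde{\Phi}_W$. The LTE PGFL on $\Phi_L$ with integrand $1-(1-e^{-\mu(\Gamma_{ed}/P_L)\mathit{l}(\|y-x_0\|)})(1-e^{-\mu(\Gamma_{ed}/P_L)\mathit{l}(\|y-x\|)})$ yields $\exp(-2N^{L}+C^{L}_{2}(x-x_0))$. The mutual-contender pair, averaged over the $x\leftrightarrow x_0$ fading, contributes $1-\exp(-\mu(\Gamma_{cs}/P_W)\mathit{l}(\|x-x_0\|))$ in either timer ordering, since in each ordering exactly one of the two directions requires the cross channel to be weak. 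For each $w$, averaging over the independent fadings $w\to x_0$ and $w\to x$ and then over $t_{w}^{W}\sim U(0,1)$ produces $1-E_w = t_{x}^{W}\alpha_w + t_{0}^{W}\beta_w - \min(t_{0}^{W},t_{x}^{W})\,\alpha_w\beta_w$, where $\alpha_w = \exp(-\mu(\Gamma_{cs}/P_W)\mathit{l}(\|w-x\|))$ and $\beta_w = \exp(-\mu(\Gamma_{cs}/P_W)\mathit{l}(\|w-x_0\|))$; the PGFL of $\tilde{\Phi}_W$ on $B^{c}(0,r_0)$ then converts this into an exponential involving $N^{W}_{0}(x,r_0,\Gamma_{cs})$, $N^{W}_{2}(r_0)$, and $C^{W}_{1}(x,x_0)$.

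The last step is the double integration over $t_{0}^{W},t_{x}^{W}\sim U(0,1)$. Splitting into $\{t_{0}^{W}<t_{x}^{W}\}$ and $\{t_{0}^{W}>t_{x}^{W}\}$, each triangle evaluates to a copy of the function $M$ with its first two arguments swapped. Summing them and multiplying by the symmetric mutual factor $1-\exp(-\mu(\Gamma_{cs}/P_W)\mathit{l}(\|x-x_0\|))$ assembles into $V(x-x_0,\Gamma_{cs}/P_W,\Gamma_{cs}/P_W,N^{W}_{2}(r_0),N^{W}_{0}(x,r_0,\Gamma_{cs}),C^{W}_{1}(x,x_0))$; dividing by the denominator and noting $\exp(-2N^{L}+C^{L}_{2}(x-x_0))/\exp(-N^{L})=\exp(-(N^{L}-C^{L}_{2}(x-x_0)))$ gives the stated formula. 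The main obstacle is the bookkeeping across the two timer orderings: the coefficient $\min(t_{0}^{W},t_{x}^{W})$ on the cross-fading term and the symmetric mutual-contender factor must be tracked carefully so that the final algebra collapses exactly into the $V$ and $U$ functions of Table~\ref{FunctionTable}.
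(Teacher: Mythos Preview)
Your proposal is correct and follows essentially the same route as the paper's proof in Appendix~\ref{CondiCOPWiFiAppdx}: write the conditional MAP as the Bayes ratio $\mathbb{E}_{\Phi_W}^{x_i}(\hat e_i^W\hat e_0^W)/\mathbb{E}_{\Phi_W}^{x_i}(\hat e_0^W)$, apply Slyvniak's theorem to reduce to an ordinary PPP on $B^c(0,r_0)$ plus the atoms at $x_0$ and $x$, separate the LTE and Wi-Fi products, and evaluate via the PGFL after conditioning on the two back-off timers. Your handling of the mutual-contender factor (exactly one direction is active in each timer ordering, giving the symmetric $1-\exp(-\mu(\Gamma_{cs}/P_W)\mathit{l}(\|x-x_0\|))$) and of the per-$w$ term with the $\min(t_0^W,t_x^W)$ coefficient on the cross term matches the paper's computation, and the final splitting into the two triangles to recover $V$ and $U$ is exactly what the paper does.
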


The proof of Corollary~\ref{PWSelecProbW1Wi-FiCoro} is provided in Appendix~\ref{CondiCOPWiFiAppdx}. 
Then the SINR coverage performance of the typical STA, denoted by $p_0^W(T, \lambda_W,\lambda_L)$, is obtained as follows:

\begin{lemma}\label{COPW1Wi-FiLemma}
	The SINR coverage probability of the typical Wi-Fi STA with the SINR threshold $T$ can be approximated as:
	\begin{align}\label{COPCondiW1Wi-FiEq}
	\allowdisplaybreaks
	p_0^W(T,\lambda_W,\lambda_L)  \approx &\int_{0}^{\infty} \exp\biggl( -\mu T \mathit{l}(r_0) \frac{\sigma_{N}^2}{P_W}\biggr) \exp\biggl(  -\int_{\mathbb{R}^2} \frac{T \mathit{l}(r_0) \lambda_L}{\frac{P_W}{P_L}\mathit{l}(\|x\|) + T \mathit{l}(r_0)} {\rm d}x\biggr) \nonumber \\
	&\times \exp\biggl(- \int_{\mathbb{R}^2 \setminus B(0,r_0)} \frac{T\mathit{l}(r_0) \lambda_W h_1(r_0,x)}{\mathit{l}(\|x\|) + T \mathit{l}(r_0)}   {\rm d}x \biggl) f_{W}(r_0) {\rm d}r_0.
	\end{align} 
\end{lemma}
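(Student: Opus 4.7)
The plan is to exploit the exponentiality of $F_{0,0}^W$ and condition on the location of the tagged AP to reduce the coverage probability to a product of Laplace-type transforms, handling the LTE and Wi-Fi interferences separately.

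First I would condition on $\|x_0\| = r_0$ (WLOG $x_0 = (r_0,0)$ by Remark~\ref{RemarkAngleInvariance}) and use the fact that $F_{0,0}^W \sim \text{Exp}(\mu)$ to write
\[
\mathbb{P}(\text{SINR}_0^W > T \mid e_0^W = 1, x_0) = \mathbb{E}\!\left[\exp\!\left(-\tfrac{\mu T \mathit{l}(r_0)}{P_W}(I_W + I_L + \sigma_N^2)\right) \,\Big|\, e_0^W=1,\,x_0\right],
\]
where $I_W$ and $I_L$ are the Wi-Fi and LTE interference terms in~(\ref{SINRWiFiW1Eq}). The noise factor immediately contributes $\exp(-\mu T \mathit{l}(r_0)\sigma_N^2/P_W)$, matching the first exponential.

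Next, for the LTE interference, since LTE transmits continuously ($e_m^L \equiv 1$) and $\Phi_L$ is independent of $\Phi_W$, of the marks $\{t_j^W\}$ and of the fades, the conditional law of $\Phi_L$ is unaffected by $\{e_0^W=1,x_0\}$. Hence $I_L$ is a shot noise on a homogeneous PPP with i.i.d.\ exponential fades, and a direct application of the PGFL gives
\[
\mathbb{E}[\exp(-sI_L)] = \exp\!\left(-\lambda_L \int_{\mathbb{R}^2}\tfrac{sP_L}{\mu \mathit{l}(\|x\|) + sP_L}\,{\rm d}x\right),
\]
which after substituting $s = \mu T \mathit{l}(r_0)/P_W$ yields the second exponential in~(\ref{COPCondiW1Wi-FiEq}).

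The Wi-Fi interference is the main obstacle, because the medium access indicators $e_j^W$ depend both on the back-off marks and on the common process $\Phi_L$, introducing correlations between $I_W$, $I_L$, and among summands of $I_W$ themselves. My plan is to adopt the approximation described in the paragraph before the lemma: replace the $\Phi_L$ appearing inside $e_j^W$ by an independent copy $\Phi_L'$, which decouples $I_W$ from $I_L$, and then approximate the point process of transmitting Wi-Fi APs (other than $x_0$) by an inhomogeneous PPP of intensity $\lambda_W h_1(r_0,x)$ on $\mathbb{R}^2\setminus B(0,r_0)$, with $h_1$ being the conditional transmission probability from Corollary~\ref{PWSelecProbW1Wi-FiCoro}. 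Under this independent-thinning approximation, applying the PGFL once more with exponential fades gives
\[
\mathbb{E}[\exp(-sI_W)\mid e_0^W=1,x_0] \approx \exp\!\left(-\lambda_W\!\int_{\mathbb{R}^2\setminus B(0,r_0)}\tfrac{sP_W/\mathit{l}(\|x\|)}{\mu + sP_W/\mathit{l}(\|x\|)}\,h_1(r_0,x)\,{\rm d}x\right),
\]
which, with $s = \mu T\mathit{l}(r_0)/P_W$, reduces to the third exponential. Finally I would decondition on $r_0$ using the distance PDF $f_W$ from Table~\ref{SysParaTable}, obtaining the claimed integral expression. The hardest and only non-routine step is the inhomogeneous-PPP approximation of the transmitting-AP process; everything else is standard Laplace-functional computation.
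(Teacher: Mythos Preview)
Your approach matches the paper's proof, but there is one logical slip worth flagging. You claim that because $\Phi_L$ is a priori independent of $\Phi_W$, the backoff marks, and the fades, the conditional law of $\Phi_L$ given $\{e_0^W = 1, x_0\}$ coincides with its unconditional law. This does not follow: the event $\{e_0^W = 1\}$ itself depends on $\Phi_L$ through the factor $\prod_{y_k \in \Phi_L}\mathbbm{1}_{G_{k0}^{LW}/\mathit{l}(\|y_k - x_0\|) \leq \Gamma_{ed}/P_L}$ in~(\ref{MAIWi-FiW1}), so conditioning on it does bias $\Phi_L$ (heuristically, it makes eNBs near $x_0$ less likely). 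Hence the factorization of the conditional Laplace transform into independent LTE and Wi-Fi pieces is already part of the approximation, not an exact step.

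The paper treats this exactly the way you later treat the Wi-Fi term: by substituting an independent copy $\Phi_L'$ for $\Phi_L$ inside \emph{all} the medium access indicators (including $\hat e_0^W$), after which your independence argument for the LTE interference becomes valid. Once you acknowledge that the LTE-term computation is also approximate (or, equivalently, that the $\Phi_L\to\Phi_L'$ substitution is applied globally before step~(c)), the remainder of your argument---PGFL for the LTE shot noise, the non-homogeneous PPP approximation with intensity $\lambda_W h_1(r_0,\cdot)$ for the transmitting APs, and deconditioning over $f_W$---is correct and coincides with the paper's derivation.
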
 

\begin{proof}
	The conditional SINR coverage of the typical Wi-Fi STA is derived as follows:
	\allowdisplaybreaks	
	\begin{align*}
	&\mathbb{P}(\text{SINR}_{0}^{W} > T | x_0 = (r_0,0),e_0^W = 1) \nonumber\\
	\overset{(a)}{=}& \mathbb{P}_{\Phi_W}^{x_0}(\frac{ F_{0,0}^{W} / \mathit{l}(\|x_{0} \|)}{  \sum\limits_{x_{j} \in \Phi_W \setminus  \{ x_{0}\}}  F_{j,0}^{W} e_{j}^W / \mathit{l}(\|x_{j} \|)+ \sum\limits_{y_{m} \in \Phi_{L}}\frac{P_L}{P_W}   F_{m,0}^{LW}/ \mathit{l}(\|y_{m}\|) + \frac{\sigma_{N}^2}{P_W}} > T | \Phi_W(B^o(0,r_0))= 0, e_0^W = 1)\nonumber\\
	\overset{(b)}{=}& \mathbb{P}(\frac{ F_{0,0}^{W} / \mathit{l}(\|x_{0} \|)}{  \sum\limits_{x_{j} \in \Phi_W \cap B^c(0,r_0)}  F_{j,0}^{W} \hat{e}_{j}^W / \mathit{l}(\|x_{j} \|)+ \sum\limits_{y_{m} \in \Phi_{L}}\frac{P_L}{P_W}   F_{m,0}^{LW}/ \mathit{l}(\|y_{m}\|) + \frac{\sigma_{N}^2}{P_W}} > T | \hat{e}_{0}^W = 1)\nonumber\\
	\overset{(c)}{\approx} &\exp( -\mu T \mathit{l}(r_0) \frac{\sigma_{N}^2}{P_W}) \mathbb{E}\biggl[-\mu T \mathit{l}(r_0)( \!\!\!\!\!\!\!\sum\limits_{x_{i} \in \Phi_{W}\cap B^c(0,r_0)  } \!\!\!\!\!\!\frac{P_W}{P_L}  \frac{F_{i,0}^{WL} \hat{e}_{i}^W }{\mathit{l}(\|x_{i}\|)}) \biggl| \hat{e}_{0}^W = 1 \biggl]\mathbb{E}\biggl[-\mu T \mathit{l}(r_0)(\sum\limits_{y_{m} \in \Phi_L }\frac{ F_{m,0}^{L}}{\mathit{l}(\|y_{m}\|)}) \biggl],
	\end{align*}where (a) follows from Baye's rule by re-writing $x_0 = (r_0,0)$ as $x_0 \in \Phi_W$ and $\Phi_W(B^o(0,r_0))= 0$. Here $B^o(0,r_0)$ is defined in Table~\ref{SysParaTable}. Step (b) is derived from Slyvniak's theorem and by de-conditioning on $\Phi_W(B^o(0,r_0))= 0$. The modified medium access indicator for AP $x_i \in (\Phi_{W}\cap B^c(0,r_0)+\delta_{x_0}) $ is given by~(\ref{MAIW1Wi-FiEq}).
	The conditional probability for the Wi-Fi AP $x_j \in \Phi_W \cap B^c(0,r_0)$ to transmit given $x_0$ transmits, i.e., $\mathbb{P}(\hat{e}_i^W  = 1 | \hat{e}_0^W =1)$, is derived in Corollary~\ref{PWSelecProbW1Wi-FiCoro}. Step (c) uses the assumption that the interference from LTE eNBs is independent of the Wi-Fi network. 
	
	Since the interfering AP process is a non-independent thinning of $\Phi_W$, the Laplace transform of Wi-Fi interference (i.e., the second term in step (c)) is not known in closed-from. Therefore, similar to~\cite{nguyen2007dense80211,baccelli2010stochasticpt2}, we approximate the Wi-Fi interferers as a non-homogeneous PPP with intensity $\lambda_W h_1(r_0,x)$, which gives (\ref{COPCondiW1Wi-FiEq}).
\end{proof}
\begin{figure}
	\centering
	\includegraphics[height=2in, width=3.2in]{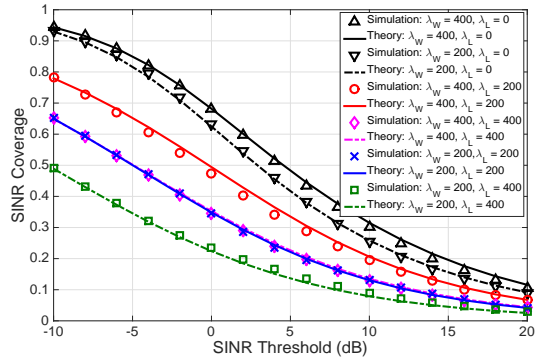}
	\caption{SINR coverage for the typical Wi-Fi STA.}\label{GlobCOPW1Wi-FiFig}
\end{figure}

\begin{remark}
For the rest of the paper, given the tagged AP or tagged eNB located at $(r_0,0)$ transmits, we use ``non-homogeneous PPP approximation'' to refer to the process of approximating Wi-Fi/LTE interferers as a non-homogeneous PPP with intensity $\lambda_W h(r_0,x)$/$\lambda_L h(r_0,x)$, where $h$ denotes the conditional MAP of the AP/eNB located at $x$. This will be used to derive the SINR distribution under various coexistence scenarios.
\end{remark}





Based on the parameters in Table~\ref{SysParaTable}, Fig.~\ref{GlobCOPW1Wi-FiFig} gives the SINR coverage performance of the typical Wi-Fi STA, where the simulation results are obtained from the definition of SINR in~(\ref{SINRWiFiW1Eq}). It can be observed from Fig.~\ref{GlobCOPW1Wi-FiFig} that Lemma~\ref{COPW1Wi-FiLemma} provides an accurate estimate of the actual SINR coverage. When $\lambda_L = 0$, Wi-Fi achieves good SINR performance due to the carrier sensing for Wi-Fi interferers. However, when coexisting with LTE, the additional interference contributed by the consistently transmitting LTE eNBs significantly impacts the SINR coverage of the typical Wi-Fi STA. 
The smaller the AP intensity $\lambda_W$, the more significant the LTE interference, which will lead to worse Wi-Fi SINR coverage performance. In Fig.~\ref{GlobCOPW1Wi-FiFig}, given $\lambda_L$, the Wi-Fi SINR coverage for $\lambda_W = 200$ is worse than the case when $\lambda_W = 400$. 


\subsubsection{SINR Coverage Probability of Typical LTE UE} The SINR coverage probability of the typical UE, which is denoted by $p_0^L(T, \lambda_W,\lambda_L)$, is given in the following lemma:

\begin{lemma}\label{SINRLTEW1Lemma}
	The SINR coverage probability for a typical LTE UE with SINR threshold $T$ can be approximated by:
	\allowdisplaybreaks
	\begin{align}\label{SINRLTEW1ApproxEq}
	\allowdisplaybreaks
	&p_{0}^L(T,\lambda_W,\lambda_L) \nonumber\\
	\approx & \int_{0}^{\infty}\exp\left(-\mu T \mathit{l}(r_0) \frac{\sigma_{N}^2}{P_L}\right) \exp\biggl( - \int_{\mathbb{R}^2 \setminus B(0,r_0)}  \frac{T\lambda_L\mathit{l}(r_0) {\rm d}y}{T \mathit{l}(r_0) + \mathit{l}(\|y\|)} \biggr) \exp\biggl(-\int_{\mathbb{R}^2} \frac{T\mathit{l}(r_0)\lambda_Wh_1^W(r_0,x)}{T\mathit{l}(r_0) + \frac{P_L}{P_W} \mathit{l}(\|x\|)} \biggr) \nonumber \\
	& \times f_{L}(r_0) {\rm d}r_0
	\end{align}where $h_1^W(r_0,x) = \frac{1-\exp(-N^W)}{N^W} \exp(-N^L_0(x ,r_0, \Gamma_{ed})) (1-\exp(-\mu \frac{\Gamma_{ed}}{P_L} \mathit{l}(\|y_0-x\|))) $ denotes the conditional MAP for AP $x$ given the tagged eNB $y_0 = (r_0,0)$ transmits. 
\end{lemma}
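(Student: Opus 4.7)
The plan is to follow the same template as the proof of Lemma~\ref{COPW1Wi-FiLemma}, but now with the reference point being the tagged eNB $y_0$ rather than the tagged AP $x_0$. First I would fix $y_0 = (r_0, 0)$, justified by rotational invariance (Remark~\ref{RemarkAngleInvariance}), and apply Slivnyak's theorem together with the nearest-neighbor condition $\Phi_L(B^o(0,r_0)) = 0$ to rewrite the conditional LTE process as $\{y_0\} \cup \Phi_L'$, where $\Phi_L'$ is an independent PPP of intensity $\lambda_L$ restricted to $B^c(0,r_0)$. The Wi-Fi PPP $\Phi_W$ is untouched by this conditioning since it is independent of $\Phi_L$.

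Because the serving fade $F_{0,0}^L$ is exponential with parameter $\mu$, the conditional SINR coverage reduces to the joint Laplace transform of the thermal noise and the two interference terms, evaluated at $s = \mu T \mathit{l}(r_0)/P_L$. The LTE interference is particularly clean: since every eNB transmits continuously ($e_m^L = 1$ a.s.), applying the PGFL of $\Phi_L'$ on $B^c(0, r_0)$ with the exponential-fading Laplace functional immediately yields the second exponential factor in~(\ref{SINRLTEW1ApproxEq}).

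The main work, and the main obstacle, is the Wi-Fi interference, whose intensity depends on the MAP of each AP conditionally on $y_0$ being present. To handle this, I would first derive the conditional MAP $h_1^W(r_0, x)$ for an AP at $x$ in three pieces. The Wi-Fi--Wi-Fi contention factor $\frac{1-\exp(-N^W)}{N^W}$ comes from integrating the carrier-sense indicator over the uniform back-off timer $t_i^W$ and applying the PGFL to $\Phi_W \setminus \{x\}$, exactly as in Lemma~\ref{MAPWi-FiW1lemma}, with no coupling to $r_0$ because $\Phi_W$ is independent of $\Phi_L$. The LTE energy-detection constraint at $x$ then splits into the contribution of $y_0$ alone, giving $1-\exp(-\mu (\Gamma_{ed}/P_L) \mathit{l}(\|y_0-x\|))$ through the CDF of the fading $G_{0,x}^{LW}$, and that of $\Phi_L'$, which by a second PGFL on $B^c(0,r_0)$ produces $\exp(-N^L_0(x, r_0, \Gamma_{ed}))$. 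Multiplying these pieces together reproduces the stated formula for $h_1^W(r_0, x)$.

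Finally, I would invoke the non-homogeneous PPP approximation introduced after Lemma~\ref{COPW1Wi-FiLemma}, approximating the process of transmitting Wi-Fi interferers as an inhomogeneous PPP of intensity $\lambda_W h_1^W(r_0, x)$ that is treated as independent of $\Phi_L'$. A third PGFL computation, again with exponential fading, yields the third exponential factor in~(\ref{SINRLTEW1ApproxEq}), after which de-conditioning on $r_0$ against $f_L$ finishes the proof. The delicate point is precisely this independence assumption: the genuine dependence between the Wi-Fi MAP selection and the realization of $\Phi_L'$ is sacrificed in exchange for a closed-form first-moment measure, and justifying this step rests on the same type of numerical validation carried out for the Wi-Fi STA coverage in Lemma~\ref{COPW1Wi-FiLemma}.
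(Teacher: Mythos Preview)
Your proposal is correct and follows essentially the same approach as the paper: condition on $y_0=(r_0,0)$ via Remark~\ref{RemarkAngleInvariance}, apply Slivnyak's theorem and de-conditioning on $\Phi_L(B^o(0,r_0))=0$, use the exponential serving fade to reduce to a Laplace transform, derive the conditional MAP $h_1^W(r_0,x)$ from the three factors you list, and then invoke the independence assumption between LTE and Wi-Fi interference together with the non-homogeneous PPP approximation for the Wi-Fi interferers. The paper's proof is organized into steps (a)--(c) with the same content, including the explicit acknowledgment that the correlation between the two interference terms is neglected at step (c).
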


\begin{proof} According to Remark~\ref{RemarkAngleInvariance}, given the tagged eNB is located at $y_0 = (r_0,0)$, denoting the conditional SINR coverage probability by $p_{0}^L(r_0,T,\lambda_W,\lambda_L)$, we have:
	\allowdisplaybreaks
	\begin{align*}
	\allowdisplaybreaks
	&p_{0}^L(r_0,T,\lambda_W,\lambda_L) \nonumber \\
	\overset{(a)}{=} &\mathbb{E}\biggl[ \exp(-\mu T\mathit{l}(r_0))(\frac{\sigma_{N}^2}{P_L}  +  \sum\limits_{y_{m} \in \Phi_L \setminus \{ y_{0}\}} \frac{ F_{m,0}^{L}}{ \mathit{l}(\|y_{m}\|)}+ \sum\limits_{x_{j} \in \Phi_{W}} \frac{P_W}{P_L}  \frac{F_{j,0}^{WL} e_{j}^W}{ \mathit{l}(\|x_{j}\|)}) \biggl| y_0 \in \Phi_L , \Phi_L(B^0(0,r_0)) = 0 \biggr]  \nonumber \\
	\overset{(b)}{=} &\mathbb{E} \biggl[ \exp(-\mu T\mathit{l}(r_0))(\frac{\sigma_{N}^2}{P_L}  +  \sum\limits_{y_{m}^{L} \in \Phi_L \cap B^c(0,r_0) } \frac{F_{m,0}^{L}}{ \mathit{l}(\|y_{m}\|)}+ \sum\limits_{x_{j} \in \Phi_{W}  } \frac{P_W}{P_L} \frac{ F_{j,0}^{WL} \hat{e}_{j}^W} {\mathit{l}(\|x_{j}\|)}) \biggr] \nonumber\\ 
	\overset{(c)}{\approx}   &\exp(-\mu T\mathit{l}(r_0) \frac{\sigma_{N}^2}{P_L}) \mathbb{E} \biggl[ \exp\biggl(-\mu T\mathit{l}(r_0) \!\!\!\!\!\!\sum\limits_{y_{m} \in \Phi_L \cap B^c(0,r_0) }\!\!\! \frac{F_{m,0}^{L}}{\mathit{l}(\|y_{m}\|)}\biggr) \biggr] \mathbb{E} \biggl[ \exp\biggl(-\mu T\mathit{l}(r_0) \!\!\sum\limits_{x_{j} \in \Phi_{W} } \!\!\frac{P_W}{P_L}  \frac{F_{j,0}^{WL} \hat{e}_{j}^W}{ \mathit{l}(\|x_{j}\|)}\biggr) \biggr],
	\end{align*}where (a) is because the channels have Rayleigh fading and $y_0^L$ is the closest eNB to the typical user. Step (b) is obtained by using Slyvniak's theorem and de-conditioning on $\Phi_L(B^0(0,r_0)) = 0$. The modified medium access indicator for each AP in step (b) is given by:
	\begin{align*}
	\hat{e}_j^W = & \!\!\!\!\!\!\!\prod\limits_{y_{k} \in \Phi_{L} \cap B^c(0,r_0)} \!\!\!\biggl(\mathbbm{1}_{G_{kj}^{LW}/ \mathit{l}(\|y_{k} - x_{j}\|) \leq \frac{\Gamma_{ed}}{P_L }}\mathbbm{1}_{G_{0j}^{LW}/ \mathit{l}(\|y_{0} - x_{j}\|) \leq \frac{\Gamma_{ed}}{P_L }}\biggl) \!\!\!\!\! \prod\limits_{x_{i} \in \Phi_{W} \setminus \{x_{j}\}} \!\!\! \biggl(\mathbbm{1} _{t_{i}^{W} \geq t_{j}^{W}} \nonumber +\mathbbm{1} _{t_{i}^{W} < t_{j}^{W}} \mathbbm{1}_{ G_{ij}^{W}/ \mathit{l}(\|x_{i}-x_{j}\|) \leq \frac{\Gamma_{cs}}{P_W}} \biggl).
	\end{align*}
	For each Wi-Fi AP $x_j \in \Phi_W$, its modified MAP given the tagged eNB is at $y_0 = (r_0,0)$ is:
	\begin{align}\label{SINRLTEW1ProofEq2}
	&\mathbb{P}_{\Phi_W}^{x_j} (\hat{e}_{j}^W = 1) = \frac{1-\exp(-N^{W})}{N^{W}} \exp(-N^L_0(x_j,r_0,\Gamma_{ed}))  (1-\exp(-\mu \frac{\Gamma_{ed}}{P_L} \mathit{l}(\|y_0-x_j\|)),
	\end{align}where $N^W$ and $N^L_0(x_j,r_0,\Gamma_{ed})$ are defined in Table~\ref{FunctionTable}. In step (c), the correlation between the interference from eNBs and APs is neglected for simplicity. Finally, the desired result is obtained by treating $\hat{e}_j^W$ as independent for each AP $x_j$, and applying the non-homogeneous PPP approximation to  Wi-Fi interferers. 	
\end{proof}
\begin{figure}
	\centering
	\includegraphics[height=2in, width=3.2in]{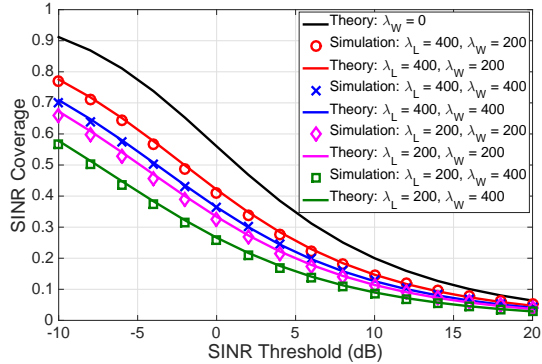}
	\caption{SINR coverage for the typical LTE UE.}\label{GlobCOPW1LTEFig}
\end{figure}
\begin{remark}
	The first two terms in~(\ref{SINRLTEW1ApproxEq}) come from the thermal noise and the eNB interferers respectively, which give the same result as Theorem 2 in~\cite{trac}. In contrast, the effect of coexisting transmitting Wi-Fi APs is reflected in the third term, 
	which decreases by increasing the Wi-Fi AP intensity $\lambda_W$ or the energy detection threshold $\Gamma_{ed}$.
\end{remark}

The SINR coverage for the typical LTE UE is evaluated 
in Fig.~\ref{GlobCOPW1LTEFig}, where the simulation results are obtained from the following procedure. First, 50 realizations for eNB PPP and 50 realizations for AP PPP are generated in an 1 km $\times$ 1 km area, which results in a total of 2500 combinations of the eNB and AP processes. For each combination, we determine the medium access indicator at each AP according to~(\ref{MAIWi-FiW1}). In addition, 50 uniformly chosen locations for the typical STA are generated, and we evaluate the received SINR at each STA location if the serving AP transmits. Finally, the SINR coverage probability is obtained as the fraction of STAs whose received SINR exceeds the threshold $T$.

The SINR coverage performance when $\lambda_W = 0$ is provided in Fig.~\ref{GlobCOPW1LTEFig}. It is independent of the eNB intensity under Rayleigh fading channels for negligible thermal noise power~\cite{trac}. From Fig.~\ref{GlobCOPW1LTEFig}, the accuracy of Lemma~\ref{SINRLTEW1Lemma} can be validated. In addition, it can be observed that the typical LTE UE achieves better SINR coverage when increasing the eNB intensity $\lambda_L$ or decreasing the AP intensity $\lambda_W$, with the SINR coverage for $\lambda_W = 0$ as an upper bound. In particular, when $\lambda_L = \lambda_W$, the MAP for each Wi-Fi AP becomes smaller by increasing $\lambda_L$, and therefore a better SINR coverage can be achieved by the LTE UE when $\lambda_L$ is larger. Overall, it can be observed from Fig.~\ref{GlobCOPW1LTEFig} that unless $\lambda_L \ll \lambda_W$, the typical LTE UE achieves reasonable SINR performance compared to the case when $\lambda_W = 0$, which demonstrates the robustness of the LTE system to the coexisting Wi-Fi system. 

Therefore, when LTE coexists with Wi-Fi without any protocol changes, LTE is able to maintain good SINR coverage performance, while Wi-Fi experiences drastically degraded SINR coverage. This imbalanced performance means some fair coexistence methods have to be implemented by LTE in order to guarantee a reasonable performance for Wi-Fi network. The DST and rate coverage performance for LTE and  Wi-Fi can be derived directly from Corollary~\ref{MAPWi-FitagW1lemma}, Lemma~\ref{COPW1Wi-FiLemma} and Lemma~\ref{SINRLTEW1Lemma}. The detailed discussions are provided in Section~\ref{NumEvaSec}.

Finally, although we consider a downlink only scenario for Wi-Fi, similar techniques can
be used to derive the MAP and SINR coverage performance when Wi-Fi uplink traffic also
exists. Since STAs will apply the same channel access mechanism as APs, the medium access indicator for each AP and STA will account for both the contending APs and STAs. The detailed performance when Wi-Fi uplink traffic exists is left to future work.

\section{LTE with Discontinuous Transmission}\label{LTEwDT}
A straightforward scheme to guarantee the fair-coexistence between Wi-Fi and LTE is to let LTE adopt a discontinuous, duty-cycle transmission pattern, which is also know as LTE-U~\cite{qrc2014LTE,lteu2015TRv1}. Specifically, LTE transmits for a fraction $\eta$ of time ($0 \leq \eta \leq 1 $), and is muted for the complementary 1-$\eta$ fraction. 

The LTE transmission duy cycle $\eta$ can be fixed or adaptively adjusted based on Wi-Fi medium utilization~\cite{qrc2014LTE}. Generally, $\eta$ needs to be chosen in such a way that LTE shall not impact Wi-Fi more than an additional Wi-Fi network w.r.t. SINR coverage probability, rate coverage, etc. 
We consider a static muting pattern for LTE, where all the eNBs follow the same muting pattern either synchronously or asynchronously. If the eNBs are muted synchronously, they transmit and mute at the same time. If the eNBs are muted asynchronously, the neighboring eNBs could adopt a shifted version of the muting pattern~\cite{jeon2014lteworkshop}. For simplicity, we assume each eNB is transmitting with probability $\eta$ at a given time under the asynchronous scheme. In the rest of this section, the time-averaged DST and rate coverage performance when LTE transmits discontinuously are derived. 
\subsection{LTE with Synchronous Discontinuous Transmission Pattern}
In this case, since all eNBs transmit and mute at the same time, the MAP for the tagged Wi-Fi AP during LTE ``On" and ``Off" period are $\hat{p}_{0,MAP}^W(\lambda_W,\lambda_L)$ and $\hat{p}_{0,MAP}^W(\lambda_W,0)$ respectively, where $\hat{p}_{0,\text{MAP}}^W$ is given in~(\ref{MAPWi-FitageW1Eq}). 
Similarly, the SINR coverage probability of the typical Wi-Fi STA (resp. LTE UE) with threshold $T$ is $p_0^W(T,\lambda_W,\lambda_L)$ (resp. $p_0^L(T,\lambda_W,\lambda_L)$) and $p_0^W(T,\lambda_W,0)$ (resp. 0) during LTE ``On" and ``Off" period respectively, where $p_0^W$ and $p_0^L$ are provided in Lemma~\ref{COPW1Wi-FiLemma} and Lemma~\ref{SINRLTEW1Lemma}. Define the time-averaged DST with SINR threshold $T$ as the time-averaged fraction of links that can support SINR level $T$.
\begin{lemma}\label{DSTW2SyncLemma}
	When LTE adopts a synchronous discontinuous transmission pattern with duty cycle $\eta$, the time-averaged DST with threshold $T$ for the Wi-Fi and LTE network are given by:
	\allowdisplaybreaks
	\begin{align}\label{DSTW2Wi-FiEq}
	\allowdisplaybreaks
		&d_{1,suc}^{W}(\lambda_W,\lambda_L,T,\eta) = \eta \lambda_W \hat{p}_{0,MAP}^W(\lambda_W,\lambda_L)  p_0^W(T,\lambda_W,\lambda_L) + (1-\eta) \lambda_W \hat{p}_{0,MAP}^W(\lambda_W,0)  p_0^W(T,\lambda_W,0),\nonumber\\
		&d_{1,suc}^{L}(\lambda_W,\lambda_L,T,\eta) = \eta \lambda_L  p_0^L(T,\lambda_W,\lambda_L).
	\end{align}
\end{lemma}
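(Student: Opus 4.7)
The plan is to decompose time into the two regimes induced by the synchronous duty cycle and to apply Definition~\ref{DSTDefn} separately inside each regime, using the MAP and SINR coverage expressions already established in Section~\ref{LTEwCT}. Because all eNBs are muted simultaneously, within any fixed instant the joint state of $(\Phi_W, \Phi_L)$ either behaves exactly as in the continuous LTE model of Section~\ref{LTEwCT} (the ``On'' regime, occupying a fraction $\eta$ of time by ergodicity / stationarity of the duty cycle) or as a Wi-Fi-only network with the LTE eNBs removed (the ``Off'' regime, occupying the complementary fraction $1-\eta$). This clean dichotomy is the reason the synchronous case admits such a simple closed form, and in fact is why the proof is essentially just bookkeeping.

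First I would handle the Wi-Fi DST. In the On regime, the tagged AP's medium access indicator has Palm expectation $\hat{p}_{0,\text{MAP}}^W(\lambda_W,\lambda_L)$ by Corollary~\ref{MAPWi-FitagW1lemma}, and the conditional SINR coverage for the typical STA is $p_0^W(T,\lambda_W,\lambda_L)$ by Lemma~\ref{COPW1Wi-FiLemma}; plugging these into the Wi-Fi DST formula of Definition~\ref{DSTDefn} gives the instantaneous DST during On. In the Off regime, no eNB transmits, so in both the medium access indicator~(\ref{MAIWi-FiW1}) and the SINR expression~(\ref{SINRWiFiW1Eq}) the LTE-related product and sum drop out; equivalently, we may replace $\lambda_L$ by $0$. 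This yields the instantaneous DST during Off as $\lambda_W \hat{p}_{0,\text{MAP}}^W(\lambda_W,0)\, p_0^W(T,\lambda_W,0)$. The time-averaged DST is then the $(\eta, 1-\eta)$ convex combination of these two instantaneous DSTs, which is exactly the claimed expression for $d_{1,suc}^W$.

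Next I would treat the LTE DST in the same way. During the Off regime, every eNB has medium access indicator equal to zero, so the contribution to the LTE DST vanishes identically. During the On regime, the eNBs transmit continuously (with probability one), so $\mathbb{E}[e_0^L]=1$ and the conditional SINR coverage is $p_0^L(T,\lambda_W,\lambda_L)$ by Lemma~\ref{SINRLTEW1Lemma}; Definition~\ref{DSTDefn} gives instantaneous DST $\lambda_L\, p_0^L(T,\lambda_W,\lambda_L)$. Time-averaging with weights $\eta$ and $1-\eta$ produces $\eta \lambda_L\, p_0^L(T,\lambda_W,\lambda_L)$, matching the lemma.

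There is no real analytical obstacle here, since all the hard work (MAP derivations, Laplace transforms of interference, non-homogeneous PPP approximation) was done in Section~\ref{LTEwCT}. The one subtlety worth stating carefully is the justification for swapping the time average with the Palm expectation when computing the time-averaged DST: because the synchronous duty cycle is independent of $(\Phi_W,\Phi_L)$ and the spatial marks, the time-stationary distribution of the system is simply the $\eta/(1-\eta)$ mixture of the two conditional stationary distributions, so expected per-area counts of successful transmissions factor as in~(\ref{DSTW2Wi-FiEq}). I would make this explicit in one sentence before writing down the final expressions, so that the same argument can be reused (with a less clean decomposition) in the asynchronous case that presumably follows.
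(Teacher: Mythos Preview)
Your proposal is correct and follows essentially the same approach as the paper: decompose time into the On/Off regimes of the synchronous duty cycle and apply Definition~\ref{DSTDefn} in each regime, then take the $(\eta,1-\eta)$ time average. The paper's own proof is in fact a one-liner stating exactly this, so your version is simply a more detailed rendering of the same argument.
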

\begin{proof}
	Since LTE transmits for $\eta$ fraction of time and silences for $1-\eta$ fraction time, the time-averaged DST performance for Wi-Fi and LTE can be obtained directly from Definition~\ref{DSTDefn}.
\end{proof}
In addition, the time-averaged rate coverage probability with threshold $\rho$ is defined as the time-averaged fraction of eNBs/APs that can support an aggregate data rate of $\rho$. Since each LTE eNB transmits for $\eta$ fraction of time, we treat the MAP of the tagged eNB as $\eta$ in~(\ref{RCOPDefnEq}). 
\begin{lemma}\label{RCOPW2SyncLemma}
	When LTE adopts a synchronous discontinuous transmission pattern with duty cycle $\eta$, the time-averaged rate coverage probability with rate threshold $\rho$ for Wi-Fi and LTE are given by:
	\begin{align}\label{RCOPW2Wi-FiEq}
	&P_{1,rate}^{W}(\lambda_W,\lambda_L,\rho,\eta) = \eta p_0^W(2^{\frac{\rho}{B\hat{p}_{0,MAP}^W(\lambda_W,\lambda_L) }}-1,\lambda_W,\lambda_L) + (1-\eta) p_0^W(2^{\frac{\rho}{B\hat{p}_{0,MAP}^W(\lambda_W,0) }}-1,\lambda_W,0),\nonumber\\
	&P_{1,rate}^{L}(\lambda_W,\lambda_L,\rho,\eta) =  p_0^L(2^{\frac{\rho}{B \eta }}-1,\lambda_W,\lambda_L).
	\end{align}
\end{lemma}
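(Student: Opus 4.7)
The plan is to decompose time into the two regimes induced by the synchronous duty-cycle pattern---the LTE On phase (fraction $\eta$) and the LTE Off phase (fraction $1-\eta$)---and to apply Definition~\ref{RCOPDefn} within each phase using results already established in Corollary~\ref{MAPWi-FitagW1lemma}, Lemma~\ref{COPW1Wi-FiLemma}, and Lemma~\ref{SINRLTEW1Lemma}. Because the muting pattern is synchronous across all eNBs and independent of $\Phi_W$, at any fixed time instant the network is statistically identical either to the full coexistence model of Section~\ref{LTEwCT} (during On) or to the Wi-Fi only model obtained by setting $\lambda_L=0$ (during Off).

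For LTE, I would first note that the tagged eNB has channel access only during the On phase, hence $\mathbb{E}[e_0^L]=\eta$. Conditioning on $e_0^L=1$ places the network in the On regime, and in that regime the SINR of the typical UE has exactly the distribution characterized by $p_0^L(\cdot,\lambda_W,\lambda_L)$ in Lemma~\ref{SINRLTEW1Lemma}. Substituting $\mathbb{E}[e_0^L]=\eta$ into the rate condition $B\log(1+\text{SINR}_0^L)\,\mathbb{E}[e_0^L]>\rho$ of Definition~\ref{RCOPDefn} yields the equivalent SINR threshold $2^{\rho/(B\eta)}-1$, and the desired probability becomes $p_0^L(2^{\rho/(B\eta)}-1,\lambda_W,\lambda_L)$, as claimed. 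No Off-phase contribution arises because the tagged eNB is silent there.

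For Wi-Fi, I would treat the two phases separately. Adopting the interpretation stated just below Definition~\ref{RCOPDefn}---that the rate coverage equals the fraction of APs supporting rate $\rho$---I compute this fraction in each phase and then weight by the phase duration. During the On phase, the tagged AP has MAP $\hat{p}_{0,\text{MAP}}^W(\lambda_W,\lambda_L)$ from Corollary~\ref{MAPWi-FitagW1lemma}, and conditionally on transmitting its SINR is distributed according to $p_0^W(\cdot,\lambda_W,\lambda_L)$ from Lemma~\ref{COPW1Wi-FiLemma}; inverting the rate condition gives the On-phase fraction $p_0^W(2^{\rho/(B\hat{p}_{0,\text{MAP}}^W(\lambda_W,\lambda_L))}-1,\lambda_W,\lambda_L)$. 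During the Off phase the LTE interferers vanish, so the MAP reduces to $\hat{p}_{0,\text{MAP}}^W(\lambda_W,0)$ and the SINR distribution to $p_0^W(\cdot,\lambda_W,0)$, giving the analogous expression with $\lambda_L$ replaced by $0$. Combining the two with weights $\eta$ and $1-\eta$ produces the stated formula.

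The step that requires the most care is justifying the phase-wise substitution of the MAP into the rate threshold. The $\mathbb{E}[e_0^W]$ factor in Definition~\ref{RCOPDefn} is in principle a single time average, whereas the lemma effectively replaces it by the conditional (phase-specific) MAP inside each summand. The cleanest way to make this rigorous is to view the duty cycle as deterministic time sharing and to apply Definition~\ref{RCOPDefn} independently within each phase---where $e_0^W$ is precisely the phase-conditional medium access indicator with mean equal to the phase-specific MAP---before averaging over phases. A useful sanity check is that the LTE formula contains no $1-\eta$ term, consistent with the fact that the tagged eNB contributes zero instantaneous rate during the Off phase, while the Wi-Fi formula is a genuine convex combination of the two regimes.
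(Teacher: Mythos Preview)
Your proposal is correct and follows essentially the same approach as the paper: decompose into the On and Off phases, compute the per-phase rate coverage by substituting the phase-specific MAP into the rate condition of Definition~\ref{RCOPDefn} and inverting to an SINR threshold, then time-average with weights $\eta$ and $1-\eta$. Your discussion of why the phase-specific MAP (rather than a single time-averaged MAP) enters the threshold is more explicit than the paper's brief justification, but the underlying argument is the same.
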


\begin{proof}
	The time-averaged Wi-Fi rate coverage can be derived since the fraction of Wi-Fi APs that can support data rate $\rho$ is $p_0^W(2^{\frac{\rho}{B\hat{p}_{0,MAP}^W(\lambda_W,\lambda_L)}}-1,\lambda_W,\lambda_L)$ and $p_0^W(2^{\frac{\rho}{B\hat{p}_{0,MAP}^W(\lambda_W,0)}}-1,\lambda_W,0)$ during LTE ``on" and ``off" period respectively. In addition, the time-averaged LTE rate coverage is derived by noting that the typical LTE link is active for $\eta$ fraction of time.
\end{proof}

	It is straightforward from~(\ref{DSTW2Wi-FiEq}) and~(\ref{RCOPW2Wi-FiEq}) that better DST and rate coverage can be achieved by Wi-Fi when $\eta$ decreases. By contrast, since $p_0^L(T,\lambda_W,\lambda_L)$ is a decreasing function w.r.t. the SINR threshold $T$, LTE achieves better DST and rate coverage when $\eta$ increases.
\subsection{LTE with Asynchronous Discontinuous Transmission Pattern}
Since each eNB transmits independently with probability $\eta$ at a given time, the eNBs contributing to the interference of Wi-Fi form a PPP with intensity $\eta \lambda_L$. Therefore, the MAP for the tagged AP is $\hat{p}_{0,\text{MAP}}^W(\lambda_W, \eta \lambda_L)$, and the SINR coverage probability with threshold $T$ for the typical Wi-Fi STA is $p_{0}^W(T,\lambda_W, \eta \lambda_L)$. Correspondingly, the time-averaged DST of Wi-Fi is given by:
\allowdisplaybreaks
\begin{align}\label{DSTW2Wi-FiAsyncEq}
\allowdisplaybreaks
d_{2,suc}^{W}(\lambda_W,\lambda_L,T,\eta) = \lambda_W \hat{p}_{0,MAP}^W(\lambda_W,\eta \lambda_L)  p_0^W(T,\lambda_W,\eta \lambda_L),
\end{align}
and the time-averaged rate coverage probability of Wi-Fi is given by:
\allowdisplaybreaks
\begin{align}\label{RCOPW2Wi-FiAsyncEq}
\allowdisplaybreaks
P_{2,rate}^{W}(\lambda_W,\lambda_L,\rho,\eta) =  p_0^W(2^{\frac{\rho}{B\hat{p}_{0,MAP}^W(\lambda_W, \eta \lambda_L) }}-1,\lambda_W,\eta \lambda_L).
\end{align}
According to~(\ref{DSTW2Wi-FiAsyncEq}) and~(\ref{RCOPW2Wi-FiAsyncEq}), Wi-Fi achieves better DST and rate coverage when $\eta$ decreases.

For LTE, during the $\eta$ fraction of time that the tagged eNB transmits, the interfering eNBs form a PPP with intensity $\eta \lambda_L$. Thus, the time-averaged DST of LTE is given by:
\begin{align}\label{DSTW2LTEAsyncEq}
d_{2,suc}^{L}(\lambda_W,\lambda_L,T,\eta) = \lambda_L \eta \int_{0}^{\infty} p_0^L(r_0, T,\lambda_W,\eta \lambda_L) 2\pi \lambda_L r_0 \exp(-\lambda_L \pi r_0^2) {\rm d}r_0,
\end{align}
and the time-averaged rate coverage probability is given by:
\begin{align}\label{RCOPW2LTEAsyncEq}
P_{2,rate}^{L}(\lambda_W,\lambda_L,\rho,\eta) =   \int_{0}^{\infty} p_0^L(r_0, 2^{\frac{\rho}{B \eta}}-1,\lambda_W,\eta \lambda_L) 2\pi \lambda_L r_0 \exp(-\lambda_L \pi r_0^2) {\rm d}r_0,
\end{align}
where $p_0^L(r_0, T,\lambda_W,\lambda_L)$ is derived in Lemma~\ref{SINRLTEW1Lemma}.
\begin{figure}
	\begin{subfigure}[b]{0.5\textwidth}
		\includegraphics[height=2in, width=3.1in]{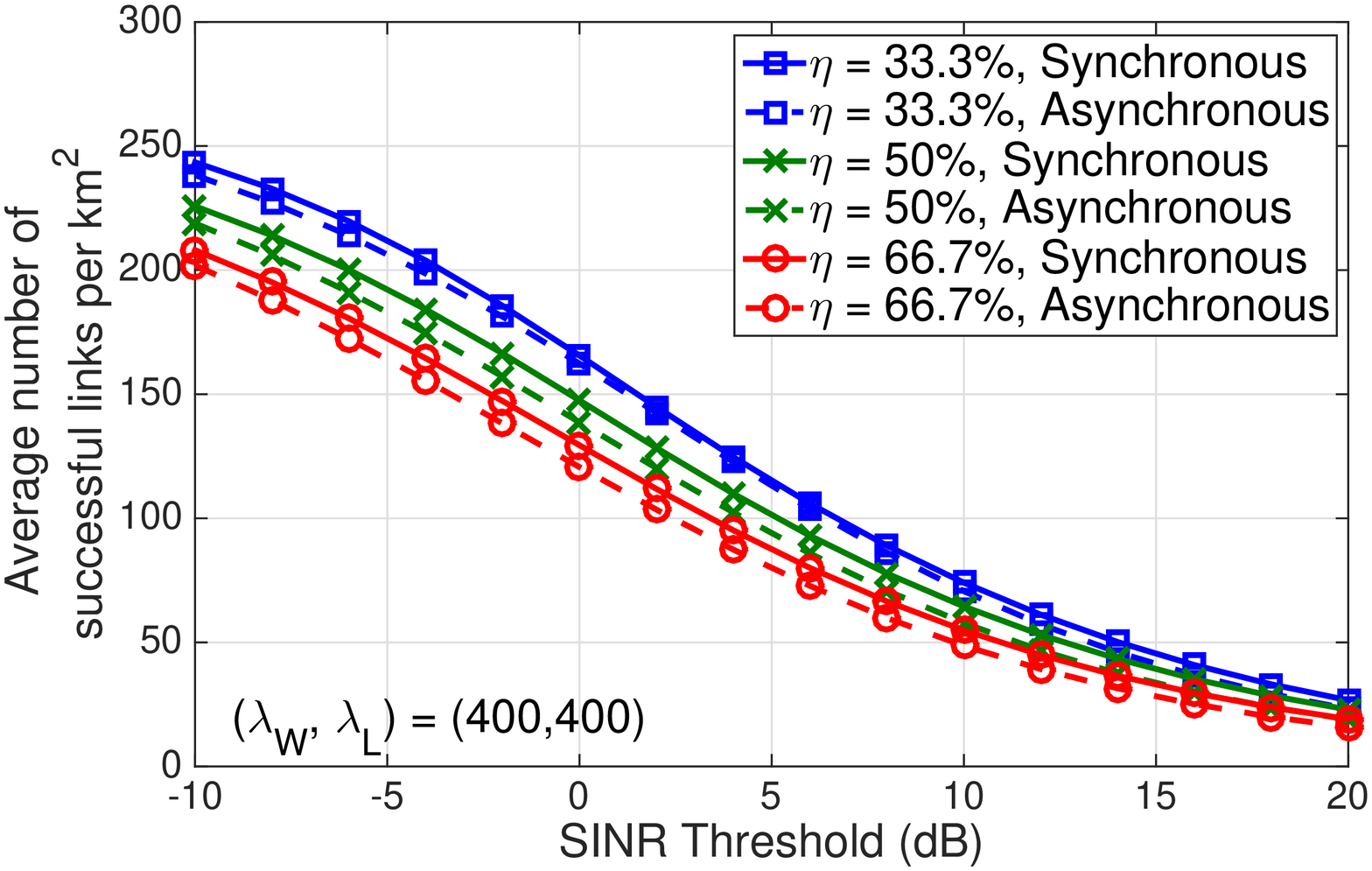}
		\caption{DST for Wi-Fi}\label{Wi-FiDTFig}
	\end{subfigure}
	\hfill
	\begin{subfigure}[b]{0.5\textwidth}
		\includegraphics[height=2in, width=3.1in]{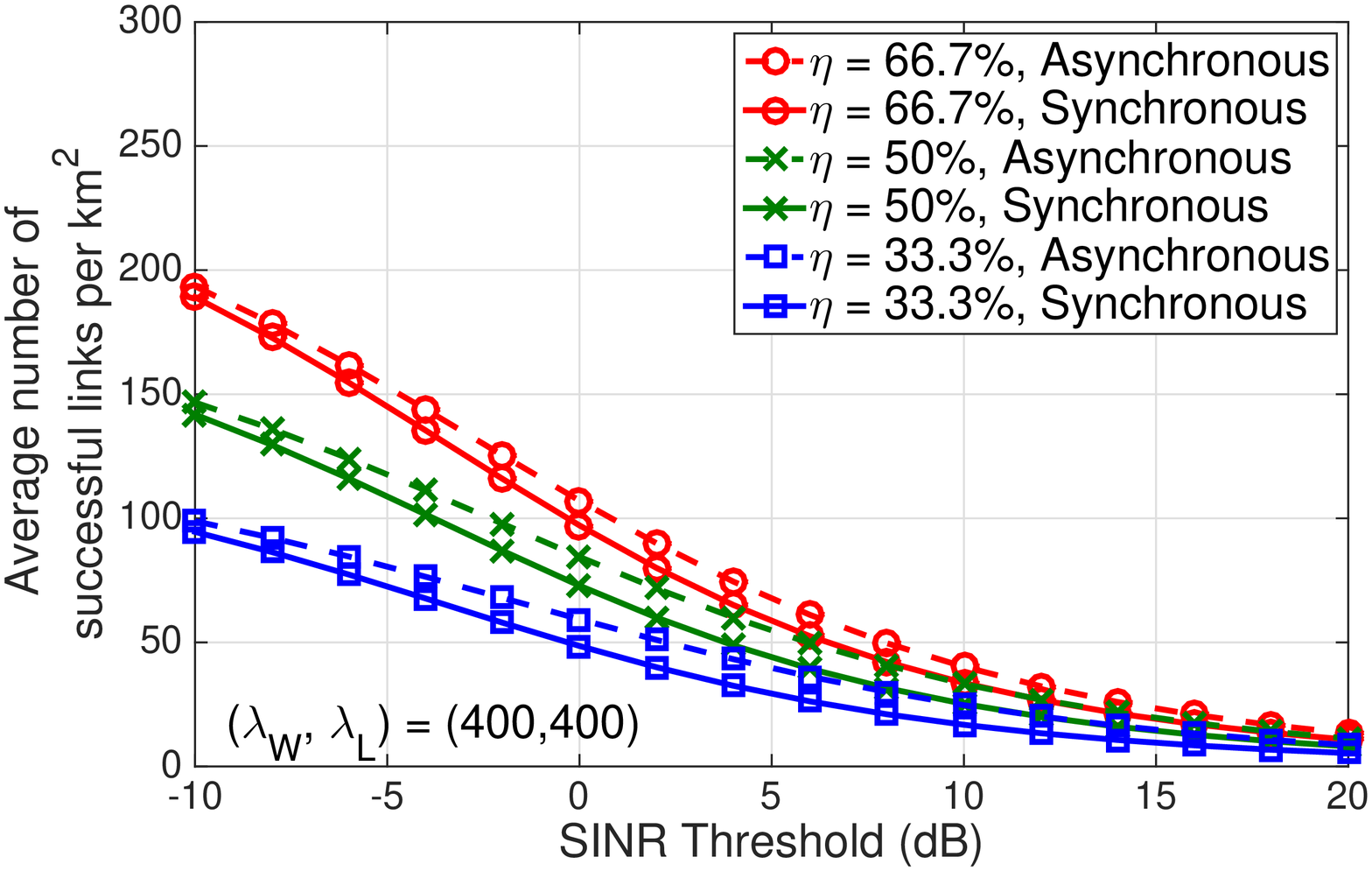}
		\caption{DST for LTE}\label{LTEDTFig}
	\end{subfigure}
	\caption{DST comparison of the synchronous and asynchronous muting pattern.}\label{LTEwDTFig}
\end{figure}

\begin{figure}
	\begin{subfigure}[b]{0.5\textwidth}
		\includegraphics[height=2in, width=3.1in]{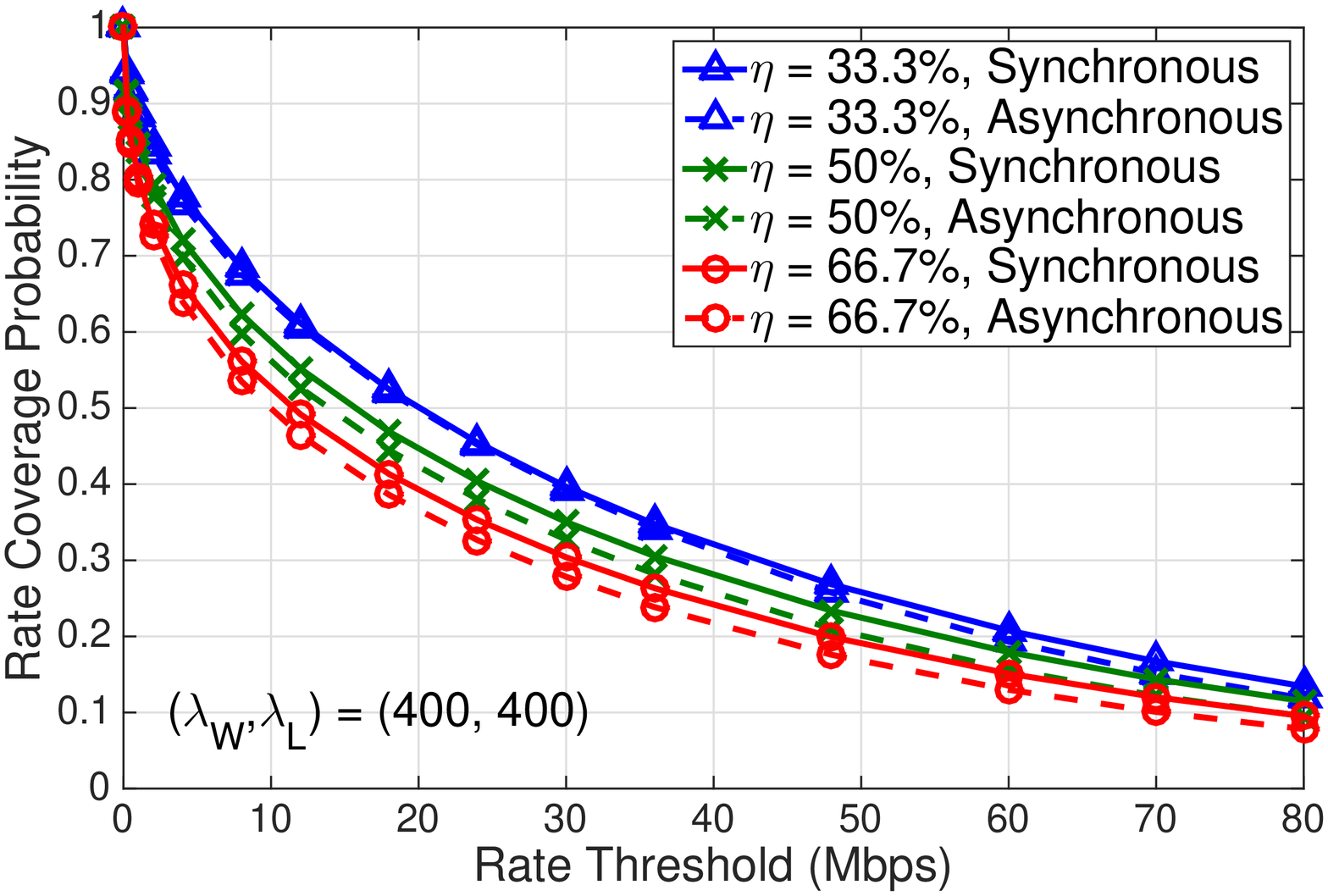}
		\caption{Rate coverage for Wi-Fi}\label{Wi-FiRateCompDTFig}
	\end{subfigure}
	\hfill
	\begin{subfigure}[b]{0.5\textwidth}
		\includegraphics[height=2in, width=3.1in]{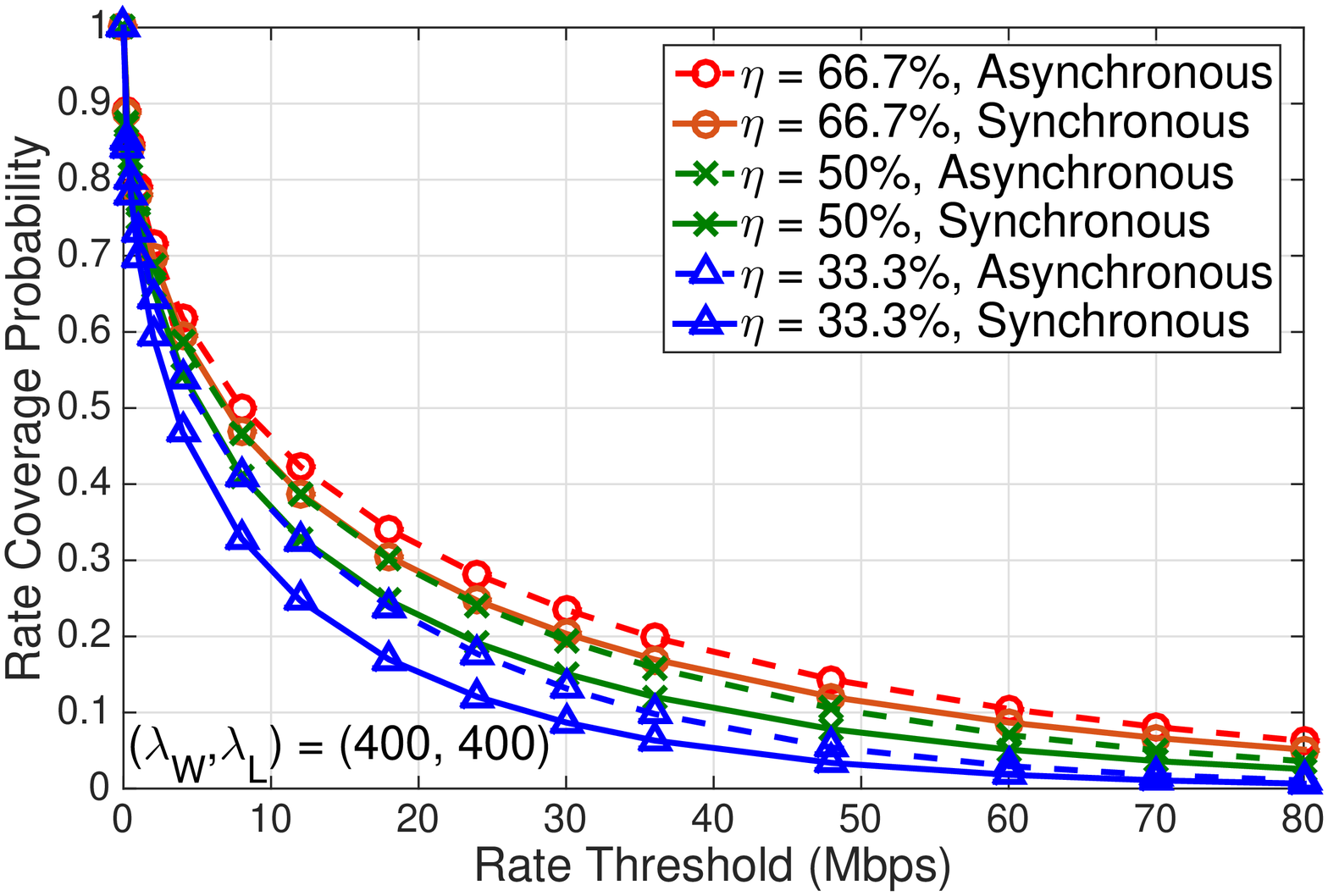}
		\caption{Rate coverage for LTE}\label{LTERateCompDTFig}
	\end{subfigure}
	\caption{Rate coverage comparison of the synchronous and asynchronous muting pattern.}\label{RateCompDTFig}
\end{figure}

\subsection{Comparison of Synchronous and Asynchronous Muting Patterns}
Fig.~\ref{LTEwDTFig} and Fig.~\ref{RateCompDTFig} show the analytical time-averaged DST and rate coverage performance when $\lambda_W = 400$ APs/km$^2$ and $\lambda_L = 400$ eNBs/km$^2$. In terms of Wi-Fi DST and rate coverage performance, the synchronous LTE muting pattern generally outperforms the asynchronous one. This is due to fact that when all LTE eNBs are muted, Wi-Fi APs observe a much cleaner channel and therefore benefit more from LTE muting compared to the asynchronous scheme. Since LTE interferers form an independent thinning of the eNB process under the asynchronous muting pattern, the latter outperforms the synchronous pattern in terms of DST and rate coverage.
In addition, Fig.~\ref{LTEwDTFig} and Fig.~\ref{RateCompDTFig} also indicate that LTE needs to adopt a short transmission duty cycle $\eta$ (e.g., less than 50\%) to protect Wi-Fi. However, LTE is also more sensitive to the transmission duty cycle compared to Wi-Fi, which means that a very small $\eta$ leads to much degraded performance of LTE. Therefore, a synchronous muting pattern with a reasonably short LTE transmission duty cycle (e.g., within 33.3\% to 50\%) is suggested to protect Wi-Fi.

\section{LTE with Listen-before-talk and Random Backoff}\label{LTEwLBTBFSec}
Besides LTE with discontinuous transmission, another fair coexistence method is to let LTE implement the listen-before-talk (LBT) and random backoff (BO) mechanism similar to Wi-Fi. 
Specifically, we consider each eNB implements carrier sense mechanism to detect strong interfering LTE and Wi-Fi neighbors with a common threshold $\Gamma^L$. In addition, each eNB implements a random back off timer, which is uniformly distributed between $a$ and $b$. The value of $(a,b)$ determines how aggressively LTE eNBs access the channel. The medium access indicators for AP $x_i$ and eNB $y_k$ are given as follows:
\allowdisplaybreaks
\begin{align}\label{MAIEqW4}
\allowdisplaybreaks
e_{i}^{W} = & \!\!\!\!\!\!\prod\limits_{x_{j} \in \Phi_{W} \setminus \{x_{i}\}} \!\!\!\biggl(\mathbbm{1} _{t_{j}^{W} \geq t_{i}^{W}} +\mathbbm{1} _{t_{j}^{W} < t_{i}^{W}} \mathbbm{1}_{ G_{ji}^{W}/ \mathit{l}(\|x_{j} - x_{i}\|) \leq \frac{\Gamma_{cs}}{P_W}} \biggl) \!\!\prod\limits_{y_{m} \in \Phi_{L}} \biggl(\mathbbm{1} _{t_{m}^{L} \geq t_{i}^{W}} +\mathbbm{1} _{t_{m}^{L} < t_{i}^{W}} \mathbbm{1}_{G_{mi}^{LW}/ \mathit{l}(\|y_{m} - x_{i}\|) \leq \frac{\Gamma_{ed}}{P_L}} \biggl), \nonumber\\
e_{k}^{L} = & \!\!\prod\limits_{x_{j} \in \Phi_{W}} \!\!\biggl(\mathbbm{1} _{t_{j}^{W} \geq t_{k}^{L} } +\mathbbm{1} _{ t_{j}^{W} < t_{k}^{L} } \mathbbm{1}_{ G_{jk}^{WL}/ \mathit{l}(\|y_{k}-x_{j}\|) \leq \frac{\Gamma^{L}}{P_W}} \biggl)  \!\!\!\!\prod\limits_{y_{m} \in \Phi_{L} \setminus \{y_{k}\}} \!\!\biggl(\mathbbm{1}_{t_{m}^{L} \geq t_{k}^{L}} +\mathbbm{1} _{t_{m}^{L} < t_{k}^{L}} \mathbbm{1}_{ G_{mk}^{L}/ \mathit{l}(\|y_{m} - y_{k}\|) \leq \frac{\Gamma^{L}}{P_L}} \biggl).
\end{align}The expression for $e_i^W$ means AP $x_i$ dose not transmit whenever the power it receives from any AP or eNB with a smaller back-off timer exceeds $\Gamma_{cs}$ or $\Gamma_{ed}$; while the expression for $e_k^L$ means eNB $y_k$ does not transmit whenever the power it receives from any AP or eNB with a smaller back-off timer exceeds $\Gamma^L$. By implementing the LBT and random BO scheme, LTE has some flexibility in choosing the sensing threshold (i.e., $\Gamma^{L}$), and the channel access priority (i.e., $(a,b)$) to better coexist with Wi-Fi. In particular, two channel access priority scenarios of LTE are considered, namely when LTE has the same channel access priority as Wi-Fi, and when LTE has the lower channel access priority than Wi-Fi. These two scenarios correspond to the cases when $(a,b) = (0,1)$ and $(a,b) = (1,2)$, which are analyzed in the rest of the section. 

\subsection{LTE with Same Channel Access Priority as Wi-Fi when $(a,b) = (0,1)$}
In this case, 
LTE has the same channel access priority as Wi-Fi in terms of the random backoff procedure. In addition, the sensitivity of LTE to interfering signals is controlled by the threshold $\Gamma^{L}$. A more sensitive $\Gamma^{L}$ provides a better protection to Wi-Fi, and vice versa. The MAP for a typical AP and eNB can be easily derived from~(\ref{MAIEqW4}):
\begin{lemma}\label{MAPW4Case1Lemma}
	When LTE follows the LBT and random BO mechanism with $(a,b) = (0,1)$, the MAPs for typical AP and eNB, denoted by $p_{3,MAP}^{W}$ and $p_{3,MAP}^{L}$ respectively, are given by:
	\begin{align*} 
	p^{W}_{3,\text{MAP}} (\lambda_W,\lambda_L)&= \frac{1-\exp(-N^W-N^L)}{N^W+N^L},\\
	p^{L}_{3,\text{MAP}} (\lambda_W,\lambda_L) &= \frac{1-\exp(-N^W_3(\Gamma^{L})-N^L_3(\Gamma^{L}))}{N^W_3(\Gamma^{L}) + N^L_3(\Gamma^{L})},
	\end{align*}where $N^W$, $N^L$, $N^W_3(\Gamma^{L})$ and $N^L_3(\Gamma^{L})$ are defined in Table~\ref{FunctionTable}.
\end{lemma}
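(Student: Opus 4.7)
The plan is to mimic the proof of Lemma~\ref{MAPWi-FiW1lemma}, but to account for the fact that under LBT with random backoff, LTE eNBs no longer unconditionally block a tagged Wi-Fi AP; instead each eNB blocks only when its backoff timer is smaller than that of the AP and the fading exceeds the threshold. I would therefore compute each MAP as a Palm probability by (i) conditioning on the tagged node's backoff timer $t$, (ii) using Slivnyak's theorem to remove the Palm atom and the independence of $\Phi_W$ and $\Phi_L$ to factor the expectation, (iii) applying the PGFL of a homogeneous PPP on each factor, and finally (iv) integrating $t$ uniformly over $[0,1]$.

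For the AP MAP, I would start from the expression for $e_i^W$ in~(\ref{MAIEqW4}), condition on $t_i^W = t$, and factor the Palm expectation into the Wi-Fi product and the LTE product. The Wi-Fi factor is handled exactly as in Lemma~\ref{MAPWi-FiW1lemma}: each contender contributes $1 - t\exp(-\mu(\Gamma_{cs}/P_W)\mathit{l}(\|x_j - x_i\|))$ after averaging over $t_j^W$ and $G_{ji}^W$, and the PGFL yields $\exp(-t N^W)$. The LTE factor is new but structurally identical: averaging over $t_m^L \sim U(0,1)$ and the exponential fading $G_{mi}^{LW}$ gives a per-point contribution $1 - t\exp(-\mu(\Gamma_{ed}/P_L)\mathit{l}(\|y_m - x_i\|))$, and the PGFL over $\Phi_L$ produces $\exp(-t N^L)$. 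Multiplying the two factors and integrating over $t \in [0,1]$ gives
\[
\int_0^1 \exp(-t(N^W + N^L))\,{\rm d}t = \frac{1 - \exp(-N^W - N^L)}{N^W + N^L},
\]
which is the claimed expression for $p^{W}_{3,\text{MAP}}$.

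For the eNB MAP the computation is symmetric. Conditioning on $t_k^L = t$, both the Wi-Fi and LTE contender products are handled by the same recipe, but with threshold $\Gamma^L$ applied to the received power (and thus the fading exponents $\mu\Gamma^L/P_W$ and $\mu\Gamma^L/P_L$ respectively). The PGFLs of $\Phi_W$ and $\Phi_L$ give $\exp(-t N^W_3(\Gamma^L))$ and $\exp(-t N^L_3(\Gamma^L))$ respectively, and a final integration over $t \in [0,1]$ yields $p^{L}_{3,\text{MAP}}$.

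There is no real obstacle here beyond bookkeeping; the only subtlety worth stressing is why $\Phi_W$ and $\Phi_L$ may be treated as mutually independent inside the Palm expectation. This is justified by the independence of the two PPPs together with Slivnyak's theorem, which states that the reduced Palm distribution of $\Phi_W$ at $x_i$ (resp.\ $\Phi_L$ at $y_k$) coincides with the law of the same PPP, leaving the other process untouched. After invoking these two facts, the expectation splits as a product and the PGFL identity applies directly to each factor.
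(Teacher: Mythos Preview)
Your proposal is correct and follows exactly the approach the paper indicates (and spells out in a commented-out proof): condition on the tagged node's timer $t$, use Slivnyak's theorem and the independence of $\Phi_W$ and $\Phi_L$ to factor the Palm expectation, apply the PGFL of each PPP to obtain $\exp(-t N^W)\exp(-t N^L)$ (resp.\ $\exp(-t N^W_3(\Gamma^L))\exp(-t N^L_3(\Gamma^L))$), and integrate $t$ over $[0,1]$. There is nothing to add.
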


Lemma~\ref{MAPW4Case1Lemma} can be proved similarly to Lemma~\ref{MAPWi-FiW1lemma}; thus the detailed proof is omitted.

\begin{remark}
It is straightforward from Lemma~\ref{MAPW4Case1Lemma} that $\frac{1}{1 + N^W + N^L} \leq p^{W}_{3,\text{MAP}}(\lambda_W,\lambda_L) < \frac{1}{N^W + N^L}$. Therefore, the MAP for the typical Wi-Fi AP is inversely proportional to the total number of its Wi-Fi and LTE contenders. Similarly, the MAP for the typical eNB is inversely proportional to the total number of APs and eNBs whose power received by the typical eNB exceeds $\Gamma^L$.
\end{remark}

\begin{corollary}\label{MAPTagAPW4Case1Lemma}
	When LTE implements LBT and random BO with contention window size $[a,b] = [0,1]$, the MAPs of the tagged Wi-Fi AP and LTE eNB are given by:
		\allowdisplaybreaks
		\begin{align*} 
		\allowdisplaybreaks
		\mathbb{E}(e_0^W)&= \int_{0}^{\infty}\frac{1-\exp(-N^W_2(r_0)-N^L)}{N^W_2(r_0)+N^L} f_{W}(r_0)  {\rm d}r_0,\\
		\mathbb{E}(e_0^L) &= \int_{0}^{\infty} \frac{1-\exp(-N^W_3(\Gamma^L) - N^L_1(r_0,\Gamma^L))}{N^W_3(\Gamma^L) + N^L_1(r_0,\Gamma^L)} f_{L}(r_0) {\rm d}r_0.
		\end{align*}
\end{corollary}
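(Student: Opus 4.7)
The plan is to mirror the derivation of Corollary~\ref{MAPWi-FitagW1lemma}, but with the medium access indicator $e_i^W$ replaced by the LBT expression in~(\ref{MAIEqW4}), which is now symmetric in its treatment of Wi-Fi and LTE neighbors. By Remark~\ref{RemarkAngleInvariance}, I would first place the tagged AP at $x_0 = (r_0,0)$ and use Bayes' rule to rewrite the conditioning event ``$x_0$ is the tagged AP'' as $x_0 \in \Phi_W$ together with $\Phi_W(B^o(0,r_0)) = 0$. Then Slyvniak's theorem would let me work with a plain PPP $\Phi_W$ (no Palm), restricted to $B^c(0,r_0)$ by the emptiness condition, while $\Phi_L$ is unaffected since it is independent of $\Phi_W$ and unconditioned.

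Next I would condition on the back-off timer $t_0^W = t$ and take expectations. Each Wi-Fi contender $x_j \in \Phi_W \cap B^c(0,r_0)$ independently fails to block $x_0$ with probability $1 - t\exp(-\mu \frac{\Gamma_{cs}}{P_W}\mathit{l}(\|x_j - x_0\|))$, since only the fraction $t$ of timers land below $t_0^W$ and the Rayleigh-fading block probability is exponential. Similarly, because $(a,b) = (0,1)$, each eNB $y_m \in \Phi_L$ independently fails to block with probability $1 - t\exp(-\mu \frac{\Gamma_{ed}}{P_L}\mathit{l}(\|y_m - x_0\|))$. Applying the PGFL of the PPP to $\Phi_W$ on $B^c(0,r_0)$ yields $\exp(-t\,N^W_2(r_0))$, and applying it to $\Phi_L$ on $\mathbb{R}^2$ yields $\exp(-t\,N^L)$, using the stationarity of $\Phi_L$ to shift the center of integration from $x_0$ to the origin. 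Multiplying and integrating $t$ over $[0,1]$ gives the geometric-series-like factor
\begin{align*}
\int_0^1 \exp\bigl(-t(N^W_2(r_0) + N^L)\bigr)\,{\rm d}t = \frac{1-\exp(-N^W_2(r_0) - N^L)}{N^W_2(r_0) + N^L},
\end{align*}
and a final integration against $f_W(r_0)$ recovers the first identity.

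The tagged eNB case is analogous with the roles of $\Phi_W$ and $\Phi_L$ swapped, the common sensing threshold $\Gamma^L$ in place of $(\Gamma_{cs},\Gamma_{ed})$, and the timer $t_0^L$ uniform on $[0,1]$. Conditioning on $y_0 = (r_0,0) \in \Phi_L$ and $\Phi_L(B^o(0,r_0)) = 0$, the eNB contenders integrate on $B^c(0,r_0)$ and produce $N^L_1(r_0,\Gamma^L)$, while the AP contenders live on all of $\mathbb{R}^2$ and, by the stationarity of $\Phi_W$, produce $N^W_3(\Gamma^L)$; the same $t$-integral and deconditioning on $r_0$ via $f_L$ yield the second identity.

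The proof is largely mechanical once the framework of Corollary~\ref{MAPWi-FitagW1lemma} is in place. The only substantive check is the bookkeeping of domains and thresholds: for the tagged AP the Wi-Fi integral must be over $B^c(0,r_0)$ (due to the emptiness condition) while the LTE integral is over the whole plane (no emptiness), and the reverse holds for the tagged eNB. Everything else reduces to a single uniform-$t$ integral that produces the $(1-e^{-S})/S$ form with $S$ equal to the total expected contender count, exactly paralleling Lemma~\ref{MAPW4Case1Lemma}.
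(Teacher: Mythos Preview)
Your proposal is correct and follows exactly the approach implicit in the paper: the paper omits the proof of this corollary, but it is obtained by repeating the argument of Corollary~\ref{MAPWi-FitagW1lemma} with the LBT medium access indicators of~(\ref{MAIEqW4}), precisely as you outline. The bookkeeping of integration domains and thresholds you highlight (Wi-Fi contenders restricted to $B^c(0,r_0)$ and LTE contenders over $\mathbb{R}^2$ for the tagged AP, and vice versa for the tagged eNB, with stationarity used to recenter the unrestricted integral) is exactly what produces $N^W_2(r_0)+N^L$ and $N^W_3(\Gamma^L)+N^L_1(r_0,\Gamma^L)$ respectively.
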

In terms of MAP, the effect of the additional LTE network on Wi-Fi is similar to that of deploying another CSMA/CA network with intensity $\lambda_L$ and transmit power $P_L$. However, since each STA (UE) can only associate with its closest AP (eNB), the LTE (Wi-Fi) network becomes a closed access CSMA/CA network to Wi-Fi (LTE), which may have significant impact on SINR performance. 
Since the transmitting eNB/AP process is a dependent thinning of $\Phi_L$/$\Phi_W$, whose Laplace functional is generally unknown in a closed form, the independent non-homogeneous PPP approximation to the transmitting eNB and AP point processes is used. First, we derive the following conditional MAP for each AP/eNB given the tagged AP transmits. 
\begin{corollary}\label{CondiRetainProbW4Wi-FiLemma}
	Conditionally on the fact that tagged AP $x_0 =(r_0,0)$ transmits, the probability for another AP $x_i \in \Phi_W \cap B^c(0,r_0)$ to transmit is:
	\allowdisplaybreaks
	\begin{align} \label{CondiRetainProbW4Wi-FiEq}
	\allowdisplaybreaks
	 h_2^W(r_0,x_i) = \frac{V(x_i-x_0,\frac{\Gamma_{cs}}{P_W}, \frac{\Gamma_{cs}}{P_W},N_1,N_2,N_3)}{U(x_i-x_0,\frac{\Gamma_{cs}}{P_W},N_2)},
	\end{align}where $N_1= N^W_0(x_i,r_0,\Gamma_{cs})+ N^L$, $N_2= N^W_2(r_0)+ N^L$ and $N_3 =  C^W_1(x_0,x_i)+ C^L_2(x_i-x_0)$.
\end{corollary}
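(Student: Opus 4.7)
The plan is to apply Bayes' rule in the form
\[
h_2^W(r_0, x_i) = \frac{\mathbb{P}(e_i^W = 1,\, e_0^W = 1 \mid x_0,\, x_i \in \Phi_W,\, \Phi_W(B^o(0, r_0)) = 0)}{\mathbb{P}(e_0^W = 1 \mid x_0,\, x_i \in \Phi_W,\, \Phi_W(B^o(0, r_0)) = 0)},
\]
and to evaluate both probabilities via two applications of Slivnyak's theorem combined with the PGFL of a homogeneous PPP. Under this conditioning, $\Phi_W \setminus \{x_0, x_i\}$ is a PPP of intensity $\lambda_W$ on $\mathbb{R}^2 \setminus B(0, r_0)$, independent of $\Phi_L$, which remains a PPP of intensity $\lambda_L$ on $\mathbb{R}^2$; all timers and fading variables remain independent across points.

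\textbf{Denominator.} I would first condition on $t_0^W = t$. The indicator $e_0^W$ in~(\ref{MAIEqW4}) factors over (i) the reduced process $\Phi_W \setminus \{x_0, x_i\}$, (ii) the singleton $\{x_i\}$, and (iii) $\Phi_L$. Averaging fading and the remaining timers, the per-point factor becomes $1 - t\exp(-\mu (\Gamma_{cs}/P_W) \mathit{l}(\|x - x_0\|))$ for APs other than $x_i$, $1 - t\exp(-\mu (\Gamma_{ed}/P_L) \mathit{l}(\|y - x_0\|))$ for eNBs, and $1 - t\exp(-\mu (\Gamma_{cs}/P_W) \mathit{l}(\|x_i - x_0\|))$ for the singleton $x_i$. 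The first two products exponentiate through the PGFL to $\exp(-tN_2)$ with $N_2 = N^W_2(r_0) + N^L$, and integrating $(1 - t\exp(-\mu (\Gamma_{cs}/P_W) \mathit{l}(\|x_i - x_0\|)))\exp(-tN_2)$ over $t \in [0, 1]$ yields exactly $U(x_i - x_0, \Gamma_{cs}/P_W, N_2)$, the denominator in~(\ref{CondiRetainProbW4Wi-FiEq}).

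\textbf{Numerator.} For the joint event I would condition on $t_0^W = t_1$ and $t_i^W = t_2$ and split on the ordering of $t_1$ and $t_2$, since the cross fading constraint between $x_0$ and $x_i$ is active only for the point with the larger timer. For a third AP $x \in \Phi_W \setminus \{x_0, x_i\}$, the probability that $x$ blocks neither $x_0$ nor $x_i$ equals
\[
1 - t_1 \phi_0 - t_2 \phi_i + \min(t_1, t_2)\, \phi_0 \phi_i,
\]
where $\phi_0 = \exp(-\mu (\Gamma_{cs}/P_W) \mathit{l}(\|x - x_0\|))$ and $\phi_i = \exp(-\mu (\Gamma_{cs}/P_W) \mathit{l}(\|x - x_i\|))$, by the independence of $G_{x,0}^W$ and $G_{x,i}^W$; an analogous per-eNB factor holds with threshold $\Gamma_{ed}/P_L$. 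The PGFL then turns the product over $\Phi_W \setminus \{x_0, x_i\}$ and $\Phi_L$ into $\exp(-\min(t_1, t_2)(N_2 - N_3) - \max(t_1, t_2) N_1)$, with $N_1, N_2, N_3$ as in the statement. An independent Bernoulli factor $1 - \exp(-\mu (\Gamma_{cs}/P_W) \mathit{l}(\|x_i - x_0\|))$ from the cross constraint multiplies the result in both orderings. Integrating over $\{0 < t_1 < t_2 < 1\}$ yields $(1 - \exp(-\mu (\Gamma_{cs}/P_W) \mathit{l}(\|x_i - x_0\|))) M(N_1, N_2, N_3)$, and the symmetric region yields the same expression with $N_1 \leftrightarrow N_2$. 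Summing produces $V(x_i - x_0, \Gamma_{cs}/P_W, \Gamma_{cs}/P_W, N_1, N_2, N_3)$, and dividing by the denominator completes the proof.

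\textbf{Main obstacle.} The main bookkeeping difficulty will be the three-regime derivation of the per-point PGFL integrand in the numerator and the identification of the linear combination $\min(t_1, t_2)(N_2 - N_3) + \max(t_1, t_2) N_1$ in the exponential. Once the double integral is cast as $\int_0^1 \int_0^{t_2} \exp(-t_1(N_2 - N_3) - t_2 N_1) \, dt_1 \, dt_2$, the identification with $M(N_1, N_2, N_3)$ is routine. A secondary subtlety will be reconciling the Palm integration domain $\mathbb{R}^2 \setminus B(0, r_0)$ with the definitions of $C^W_1$ and $C^L_2$ in Table~\ref{FunctionTable}, which parallels the treatment already used in Corollary~\ref{PWSelecProbW1Wi-FiCoro}.
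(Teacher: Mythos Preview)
Your proposal is correct and follows exactly the paper's approach in Appendix~\ref{CondiMAPW4Appdx}: write $h_2^W(r_0,x_i)=\mathbb{E}_{\Phi_W}^{x_i}(\hat e_i^W\hat e_0^W)/\mathbb{E}_{\Phi_W}^{x_i}(\hat e_0^W)$ via Bayes and Slivnyak, then evaluate numerator and denominator with the PGFL after conditioning on the timers. One small clarification: your displayed exponent $\min(t_1,t_2)(N_2-N_3)+\max(t_1,t_2)N_1$ only holds on $\{t_1<t_2\}$; on $\{t_2<t_1\}$ the roles of $N_1$ and $N_2$ swap, which is precisely the symmetry you then invoke to obtain $M(N_2,N_1,N_3)$.
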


\begin{corollary}\label{CondiRetainProbW4LTELemma}
	Conditionally on the fact that tagged AP $x_0 = (r_0,0)$ transmits, the probability for eNB $y_k \in \Phi_L$ to transmit is:
		\allowdisplaybreaks
		\begin{align}\label{CondiRetainProbW4LTEEq}
		h_2^{L}(r_0,y_k) = \frac{V(y_k-x_0, \frac{\Gamma^L}{P_W}, \frac{\Gamma_{ed}}{P_L},N_4,N_5,N_6)}{U(y_k-x_0,\frac{\Gamma_{ed}}{P_L},N_5)},
		\end{align}where $N_4 = N^W_0(y_k,r_0,\Gamma^L) + N^L_3(\Gamma^L)$, $N_5 = N^W_2(r_0) + N^L$, and $N_6 = C^W_0(y_k,\Gamma^L,x_0,\Gamma_{cs}) + C^L_0(y_k-x_0,\Gamma^L,o,\Gamma_{ed})$.
\end{corollary}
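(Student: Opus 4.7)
The plan is to mirror the proof of Corollary~\ref{PWSelecProbW1Wi-FiCoro} (for $h_1$) given in Appendix~\ref{CondiCOPWiFiAppdx}, with the additional complication that the LTE eNB $y_k$ now carries its own back-off timer $t_k^L$ and listens with threshold $\Gamma^L$. First I would write the conditional MAP via Bayes' rule as
\begin{align*}
h_2^L(r_0, y_k) = \frac{\mathbb{P}_{\Phi_L}^{y_k}\bigl(e_k^L = 1,\, e_0^W = 1 \mid x_0 = (r_0,0)\bigr)}{\mathbb{P}\bigl(e_0^W = 1 \mid x_0 = (r_0,0),\, y_k \in \Phi_L\bigr)},
\end{align*}
and use Slyvniak's theorem on the two independent PPPs to replace the Palm distributions by the original PPPs augmented with the fixed points $x_0$ and $y_k$; the conditioning on $x_0 = (r_0,0)$ further reduces the relevant Wi-Fi APs to $\Phi_W \cap B^c(0,r_0)$.

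Next I would condition on the back-off marks $t_0^W = s$ and $t_k^L = t$, both uniform on $[0,1]$. Since $\Phi_W$ is independent of $\Phi_L$, the joint event $\{e_0^W = 1, e_k^L = 1\}$ factorises into a Wi-Fi contribution and an LTE contribution; within each factor the medium access constraint decouples over points, so I can apply the PGFL of the PPP after averaging the indicators over the Rayleigh fading. For a generic Wi-Fi AP $x_j$, the per-point non-blocking probability involves the thresholds $\Gamma_{cs}/P_W$ (seen from $x_0$) and $\Gamma^L/P_W$ (seen from $y_k$), combined with the orderings of $t_j^W$ against $s$ and $t$; for a generic eNB $y_m$, the thresholds are $\Gamma_{ed}/P_L$ (seen from $x_0$) and $\Gamma^L/P_L$ (seen from $y_k$), with orderings of $t_m^L$ against $s$ and $t$. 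Grouping these contributions produces PGFL exponents that are linear combinations of $N^W_2(r_0)$, $N^W_0(y_k,r_0,\Gamma^L)$, $C^W_0(y_k,\Gamma^L,x_0,\Gamma_{cs})$ on the Wi-Fi side and $N^L$, $N^L_3(\Gamma^L)$, $C^L_0(y_k - x_0,\Gamma^L,o,\Gamma_{ed})$ on the LTE side; these are exactly the composite quantities $N_4, N_5, N_6$ in the statement.

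Finally, I would integrate out $(s,t) \in [0,1]^2$. Splitting $[0,1]^2$ according to whether $s < t$ or $s > t$ and recognising each of the two resulting double integrals as an instance of the function $M$ yields the two-term structure of $V$, with the summands weighted by $(1 - \exp(-\mu (\Gamma^L/P_W)\, \mathit{l}(\|y_k - x_0\|)))$ and $(1 - \exp(-\mu (\Gamma_{ed}/P_L)\, \mathit{l}(\|y_k - x_0\|)))$ respectively. The denominator is obtained by running the same computation on $e_0^W = 1$ alone (with $y_k$ still treated as a deterministic extra point of $\Phi_L$), which produces the $U$ function at threshold $\Gamma_{ed}/P_L$ and intensity $N_5$.

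The main obstacle is bookkeeping rather than new probabilistic input. Each interferer carries two potentially different thresholds (as seen from $x_0$ and from $y_k$), and the timer domain $[0,1]^2$ must be partitioned according to whether $t_j^W$ (resp.\ $t_m^L$) lies below $\min(s,t)$, between $\min(s,t)$ and $\max(s,t)$, or above $\max(s,t)$, so that the resulting double integral matches the antisymmetric denominator of $M$ and the symmetric two-term sum inside $V$. Once this accounting is done carefully, the derivation is a mechanical extension of the proof of Corollary~\ref{PWSelecProbW1Wi-FiCoro}.
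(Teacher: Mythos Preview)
Your proposal is correct and follows essentially the same approach as the paper: the paper states that Corollary~\ref{CondiRetainProbW4LTELemma} ``can be proved in a similar way to Corollary~\ref{CondiRetainProbW4Wi-FiLemma}'', whose proof in Appendix~\ref{CondiMAPW4Appdx} is itself modeled on Appendix~\ref{CondiCOPWiFiAppdx} and uses exactly the Bayes/Slyvniak/PGFL/timer-integration recipe you describe. Your identification of the asymmetric thresholds ($\Gamma^L/P_W$ versus $\Gamma_{cs}/P_W$ on the Wi-Fi side, $\Gamma^L/P_L$ versus $\Gamma_{ed}/P_L$ on the LTE side), the direct $x_0$--$y_k$ interaction producing the two weights in $V$, and the assembly of $N_4,N_5,N_6$ from the $N^W_0,N^L_3,N^W_2,N^L,C^W_0,C^L_0$ pieces are all on target.
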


The proof of Corollary~\ref{CondiRetainProbW4Wi-FiLemma} is provided in the Appendix~\ref{CondiMAPW4Appdx}, while Corollary~\ref{CondiRetainProbW4LTELemma} can be proved in a similar way to Corollary~\ref{CondiRetainProbW4Wi-FiLemma}; thus we omit the detailed proof. 

Given the tagged AP $x_0$ transmits, we resort to approximating the interfering AP and eNB process by two independent non-homogeneous PPPs with intensities $\lambda_W h_2^W(r_0,x)$ and $\lambda_L h_2^L(r_0,x)$ respectively, which leads to the following approximate SINR coverage of the typical Wi-Fi STA:
\begin{lemma}\label{CondiCOPW4Wi-FiLemma}
	When LTE implements the listen-before-talk and random backoff mechanism with $(a,b) = (0,1)$, the approximate SINR coverage probability of the typical STA is given by:
	\allowdisplaybreaks
	\begin{align}\label{CondiCOPW4Wi-FiEq}
	\allowdisplaybreaks
	p_3^W(T,\lambda_W,\lambda_L) \approx & \int_{0}^{\infty}\exp\biggl( -\mu T \mathit{l}(r_0) \frac{\sigma_{N}^2}{P_W}\biggr) \exp\biggl(  -\int_{\mathbb{R}^2} \frac{T \mathit{l}(r_0) \lambda_L h_{2}^{L}(r_0,y)}{\frac{P_W}{P_L}\mathit{l}(\|y\|) + T \mathit{l}(r_0)} {\rm d}y\biggr) \nonumber \\
	&\times \exp\biggl(- \int_{\mathbb{R}^2 \setminus B(0,r_0)} \frac{T\mathit{l}(r_0) \lambda_W h^{W}_{2}(r_0,x)}{\mathit{l}(\|x\|) + T \mathit{l}(r_0)}   {\rm d}x \biggl) f_{W}(r_0){\rm d}r_0.
	\end{align}
\end{lemma}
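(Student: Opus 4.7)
The plan is to mimic the argument of Lemma~\ref{COPW1Wi-FiLemma}, with the essential change being that the interfering eNBs now form a dependent thinning of $\Phi_L$ (instead of the full process $\Phi_L$), so a non-homogeneous PPP approximation has to be applied to both the Wi-Fi and LTE interferer fields.

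First, by Remark~\ref{RemarkAngleInvariance} I condition on the tagged AP being at $x_0=(r_0,0)$ and on $e_0^W=1$, and factor the SINR coverage as
\[
\mathbb{P}(\text{SINR}_0^W>T\mid x_0=(r_0,0),\,e_0^W=1).
\]
Writing $x_0\in\Phi_W$ and $\Phi_W(B^o(0,r_0))=0$ and invoking Slyvniak's theorem, the conditioning reduces to studying the Palm version of $\Phi_W$ shifted by $x_0$, restricted to $B^c(0,r_0)$, while $\Phi_L$ remains its original (unconditioned) PPP. Rayleigh fading on the direct link converts the SINR-threshold event into an exponential tail, which produces the noise factor $\exp(-\mu T\,\mathit{l}(r_0)\,\sigma_N^2/P_W)$ together with two Laplace-transform terms, one for the LTE interferers and one for the Wi-Fi interferers.

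Next, I would handle the two interference fields. Under the LBT/BO rule~(\ref{MAIEqW4}), the transmitting AP process $\{x_j:e_j^W=1\}$ and the transmitting eNB process $\{y_m:e_m^L=1\}$ are dependent thinnings of $\Phi_W$ and $\Phi_L$ whose joint Laplace functional is not in closed form. Following the approximation used in Lemma~\ref{COPW1Wi-FiLemma} and in~\cite{nguyen2007dense80211,baccelli2010stochasticpt2}, I would (a) treat the retained AP process and the retained eNB process, conditionally on the tagged AP transmitting at $x_0=(r_0,0)$, as two \emph{independent} non-homogeneous PPPs, and (b) set their intensities equal to the conditional MAPs $\lambda_W h_2^W(r_0,x)$ and $\lambda_L h_2^L(r_0,y)$ obtained in Corollaries~\ref{CondiRetainProbW4Wi-FiLemma} and~\ref{CondiRetainProbW4LTELemma}. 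Applying the PGFL of a non-homogeneous PPP together with the exponential distribution of $F_{j,0}^W,F_{m,0}^{LW}$ and simplifying yields the two integrals
\[
\exp\!\Bigl(-\!\int_{\mathbb{R}^2}\frac{T\,\mathit{l}(r_0)\,\lambda_L\,h_2^L(r_0,y)}{\tfrac{P_W}{P_L}\mathit{l}(\|y\|)+T\,\mathit{l}(r_0)}\,{\rm d}y\Bigr),\qquad
\exp\!\Bigl(-\!\int_{\mathbb{R}^2\setminus B(0,r_0)}\frac{T\,\mathit{l}(r_0)\,\lambda_W\,h_2^W(r_0,x)}{\mathit{l}(\|x\|)+T\,\mathit{l}(r_0)}\,{\rm d}x\Bigr),
\]
exactly as in the statement. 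Deconditioning on $r_0=\|x_0\|$ against the density $f_W$ finishes the derivation.

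The routine part of the proof is the Laplace-transform/PGFL manipulation, which is essentially identical to the derivation carried out for Lemma~\ref{COPW1Wi-FiLemma} once the retention probabilities $h_2^W$ and $h_2^L$ are given. The main obstacle—and the place where the ``$\approx$'' enters—is justifying the independent non-homogeneous PPP approximation for the two retained point processes: under~(\ref{MAIEqW4}) the indicators $e_j^W$ and $e_m^L$ are all coupled through the common back-off marks and through the mutual carrier-sensing constraints, so both the retention events within each network and the cross-correlation between Wi-Fi and LTE retentions are genuinely non-independent. Collapsing these dependencies into the single-atom marginal retention probabilities $h_2^W(r_0,\cdot)$ and $h_2^L(r_0,\cdot)$ is the same heuristic that was shown numerically to be accurate in Fig.~\ref{GlobCOPW1Wi-FiFig}; validating it here would again be deferred to the simulation comparison rather than argued analytically.
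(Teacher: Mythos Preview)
Your proposal is correct and follows essentially the same route as the paper: the paper simply states that Lemma~\ref{CondiCOPW4Wi-FiLemma} ``can be easily proved using the non-homogeneous PPP approximation'' and omits the details, deferring to the argument of Lemma~\ref{COPW1Wi-FiLemma}. Your outline---condition on $x_0=(r_0,0)$ and $e_0^W=1$, apply Slyvniak's theorem, use Rayleigh fading to factor out the noise term and obtain Laplace transforms, then approximate both the retained AP and the retained eNB processes as independent non-homogeneous PPPs with intensities $\lambda_W h_2^W(r_0,\cdot)$ and $\lambda_L h_2^L(r_0,\cdot)$, and decondition on $\|x_0\|$---is exactly this program, and you correctly identify that the new ingredient relative to Lemma~\ref{COPW1Wi-FiLemma} is that the LTE interferers now also require the thinning approximation.
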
 

Lemma~\ref{CondiCOPW4Wi-FiLemma} can be easily proved using the non-homogeneous PPP approximation; thus we omit the detailed proof. 

\begin{remark}
The first and second terms in~(\ref{CondiCOPW4Wi-FiEq}) are stems from thermal noise and LTE interferers respectively, while the third term stems from the transmitting Wi-Fi interferers. The intensity of the Wi-Fi interferers at $x \in \mathbb{R}^2 \cap B^{o}(0,r_0)$ is described by the function $\lambda_W h_2^W(r_0,x)$. Note that when $\|x\| \rightarrow \infty$, we have $N^W_0(x,r_0,\Gamma_{cs}) \rightarrow N^W$, $C^W_1(x_0,x) \rightarrow 0$ and $C^L_2(x-x_0) \rightarrow 0$, which gives the following asymptotic result: $	\lim\limits_{\|x\| \rightarrow \infty} \lambda_W h_2^W(r_0,x) = \lambda_W p^{W}_{3,\text{MAP}} (\lambda_W,\lambda_L).$
The intensity of LTE interferers also satisfies the asymptotic result: $\lim\limits_{\|y\| \rightarrow \infty} \lambda_L h_2^L(r_0,y) = \lambda_L p^{L}_{3,\text{MAP}} (\lambda_W,\lambda_L)$.
\end{remark}

Similar to Wi-Fi, given the tagged eNB is located at $y_0 = (r_0,0)$, the modified medium access indicators for each AP and eNB are given by:
\allowdisplaybreaks
\begin{align*}
\allowdisplaybreaks
\hat{e}_{i}^{W} = & \!\!\!\!\prod\limits_{x_{j} \in \Phi_{W} \setminus \{x_{i}\}} \!\!\!\!\!\!\biggl(\mathbbm{1} _{t_{j}^{W} \geq t_{i}^{W}} +\mathbbm{1} _{t_{j}^{W} < t_{i}^{W}} \mathbbm{1}_{G_{ji}^{W}/ \mathit{l}(\|x_{j} - x_{i}\|) \leq \frac{\Gamma_{cs}}{P_W }} \biggl)  \!\!\!\!\!\!\!\!\!\!\prod\limits_{y_{m} \in (\Phi_{L}  \cap B^c(0,r_0)+ \delta_{y_0})}\!\!\!\!\!\!\!\! \biggl(\mathbbm{1} _{t_{m}^{L} \geq t_{i}^{W}} +\mathbbm{1} _{t_{m}^{L} < t_{i}^{W}} \mathbbm{1}_{G_{mi}^{LW}/ \mathit{l}(\|y_{m} - x_{i}\|) \leq \frac{\Gamma_{ed}}{P_L }} \biggl),\nonumber\\
\hat{e}_{k}^{L} = & \!\!\prod\limits_{x_{j} \in \Phi_{W} }  \!\!\biggl(\mathbbm{1} _{t_{j}^{W} \geq t_{k}^{L} } +\mathbbm{1}_{ t_{j}^{W} < t_{k}^{L}} \mathbbm{1}_{G_{jk}^{WL}/ \mathit{l}(\|y_{k}-x_{j}\|) \leq \frac{\Gamma^L}{P_W }} \biggl)  \!\!\!\!\!\!\!\! \!\!\prod\limits_{y_{m} \in (\Phi_{L} \cap B^c(0,r_0)+ \delta_{y_0})  \setminus \{y_{k}\}} \!\!\!\!\!\!\!\!\!\! \biggl(\mathbbm{1}_{t_{m}^{L} \geq t_{k}^{L}} +\mathbbm{1}_{t_{m}^{L} < t_{k}^{L}} \mathbbm{1}_{G_{mk}^{L}/ \mathit{l}(\|y_{m} - y_{k}\|) \leq \frac{\Gamma^L}{P_L }} \biggl).
\end{align*}By following the same procedure as Corollary~\ref{CondiRetainProbW4Wi-FiLemma} and Corollary~\ref{CondiRetainProbW4LTELemma}, we can calculate the conditional MAP for each AP and eNB, given the tagged eNB of the typical UE transmits. 
\begin{corollary}\label{Wi-FiCondiRetainProbGivenLTEW4Coro}
	Conditionally on the fact that the tagged eNB $y_0 =(r_0,0)$ transmits, the probability for another AP $x_i \in \Phi_W \cap B^c(0,r_0)$ to transmit is:
	\allowdisplaybreaks
	\begin{align*}
	\allowdisplaybreaks
	h_3^{W}(r_0,x_i) = \frac{V(x_i-y_0,\frac{\Gamma_{ed}}{P_L},\frac{\Gamma^L}{P_W},N_1,N_2,N_3)}{U(x_i-y_0,\frac{\Gamma^L}{P_W},N_1)},
	\end{align*}where $N_1 = N^W_3(\Gamma^{L}) + N^L_1(r_0,\Gamma^{L})$, $N_2 = N^W + N^L_0(x_i,r_0,\Gamma_{ed})$, and $N_3 = C^W_0(y_0-x_i,\Gamma^{L},o,\Gamma_{cs}) + C^L_0(x_i,\Gamma_{ed},y_0,\Gamma^{L})$.
\end{corollary}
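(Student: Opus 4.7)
The plan is to mirror the derivation of Corollary~\ref{CondiRetainProbW4Wi-FiLemma}, writing $h_3^W(r_0,x_i) = \mathbb{P}(\hat e_i^W=1,\,\hat e_0^L=1)/\mathbb{P}(\hat e_0^L=1)$ under the two-point Palm distribution that places the tagged eNB at $y_0=(r_0,0)$ and adds $x_i$ as a fixed Wi-Fi atom. The denominator is the MAP of the tagged eNB accounting for $x_i$ as an additional Wi-Fi contender; averaging over $x_i$'s timer and fading against the PGFL of the residual PPP reproduces the same one-atom-perturbed computation that produced $U$ in Corollary~\ref{CondiRetainProbW4Wi-FiLemma}, and yields $U(x_i-y_0,\Gamma^L/P_W,N_1)$.

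For the numerator, I would condition on the uniform timers $t_0^L=s$ and $t_i^W=t$, then use Slyvniak's theorem to reduce the remaining process to independent PPPs $\Phi_W$ on $\mathbb{R}^2$ and $\Phi_L$ on $\mathbb{R}^2\setminus B(0,r_0)$, the latter restriction because no eNB lies closer to the typical UE than $y_0$. The joint-transmit contribution of a generic AP $x_j$ splits into three timer regimes relative to $\min(t,s)$ and $\max(t,s)$; within each regime the independent Rayleigh fadings to $x_i$ (threshold $\Gamma_{cs}/P_W$) and to $y_0$ (threshold $\Gamma^L/P_W$) yield multiplicative no-block probabilities, so that after averaging in $t_j^W$ and the fading the integrand $1-f(x_j)$ is a linear combination of $\exp(-\mu\Gamma^L l(\|x-y_0\|)/P_W)$, $\exp(-\mu\Gamma_{cs} l(\|x-x_i\|)/P_W)$, and their product. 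The PGFL of $\Phi_W$ then returns exactly $N^W_3(\Gamma^L)$, $N^W$, and $C^W_0(y_0-x_i,\Gamma^L,o,\Gamma_{cs})$ in the exponent. The analogous computation on $\Phi_L$ (with thresholds $\Gamma_{ed}/P_L$ for blocking $x_i$ and $\Gamma^L/P_L$ for blocking $y_0$, integrated over $\mathbb{R}^2\setminus B(0,r_0)$) produces $N^L_1(r_0,\Gamma^L)$, $N^L_0(x_i,r_0,\Gamma_{ed})$, and $C^L_0(x_i,\Gamma_{ed},y_0,\Gamma^L)$. Summing, the combined PGFL exponent equals $sN_1+t(N_2-N_3)$ on $\{t<s\}$ and $tN_2+s(N_1-N_3)$ on $\{t>s\}$.

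The final ingredient is the direct interaction between the two fixed atoms $x_i$ and $y_0$. When $t<s$, $x_i$ is a past contender to $y_0$, so the Rayleigh-faded signal $P_W G^{WL}_{i,0}/l(\|x_i-y_0\|)$ must stay below $\Gamma^L$, contributing $1-\exp(-\mu(\Gamma^L/P_W)l(\|x_i-y_0\|))$; symmetrically, when $t>s$, $y_0$ is the past contender to $x_i$ and the AP-side ED sensing yields $1-\exp(-\mu(\Gamma_{ed}/P_L)l(\|x_i-y_0\|))$. Integrating the PGFL--direct-factor product over $(t,s)\in[0,1]^2$ and splitting on the sign of $t-s$, each half reduces to a nested exponential integral that, by the definition of $M$, evaluates to $M(N_1,N_2,N_3)$ or $M(N_2,N_1,N_3)$; summing packages the two pieces into $V(x_i-y_0,\Gamma_{ed}/P_L,\Gamma^L/P_W,N_1,N_2,N_3)$, and dividing by $U$ produces the claimed $h_3^W$. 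The main obstacle is the careful bookkeeping of which direct-factor threshold pairs with which $M$-argument ordering: the AP and eNB sides use asymmetric sensing thresholds, and the roles of ``tagged'' ($N_1$) and ``other'' ($N_2$) are flipped relative to Corollary~\ref{CondiRetainProbW4Wi-FiLemma}, so the natural bookkeeping inherited from that proof needs to be transcribed with care; the remaining steps are routine manipulations of PGFLs and single-variable integrals.
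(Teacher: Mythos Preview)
Your proposal is correct and is precisely the procedure the paper has in mind: the paper states only that Corollaries~\ref{Wi-FiCondiRetainProbGivenLTEW4Coro} and~\ref{LTECondiRetainProbGivenLTEW4Coro} are obtained ``by following the same procedure as Corollary~\ref{CondiRetainProbW4Wi-FiLemma} and Corollary~\ref{CondiRetainProbW4LTELemma},'' and your plan (Bayes/Slyvniak reduction to $\mathbb{E}[\hat e_i^W\hat e_0^L]/\mathbb{E}[\hat e_0^L]$, conditioning on the two timers, PGFL on the residual PPPs $\Phi_W$ on $\mathbb{R}^2$ and $\Phi_L$ on $\mathbb{R}^2\setminus B(0,r_0)$, and the direct two-atom factor) reproduces that derivation step for step. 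One caution on the bookkeeping you flag: carrying out your own integrals, the half $\{t<s\}$ with exponent $sN_1+t(N_2-N_3)$ yields $M(N_1,N_2,N_3)$ paired with the $\Gamma^L/P_W$ direct factor, so the numerator is $V(x_i-y_0,\tfrac{\Gamma^L}{P_W},\tfrac{\Gamma_{ed}}{P_L},N_1,N_2,N_3)$; the threshold ordering printed in the statement appears to be a transposition (compare the consistent pattern in Corollary~\ref{CondiRetainProbW4LTELemma}), not an error in your method.
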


\begin{corollary}\label{LTECondiRetainProbGivenLTEW4Coro}
	Conditionally on the fact that the tagged eNB $y_0 =(r_0,0)$ transmits, the probability for another AP $x_i \in \Phi_W \cap B^c(0,r_0)$ to transmit is:
	\allowdisplaybreaks
	\begin{align*}
	\allowdisplaybreaks
	h_3^{L}(r_0,y_k) = \frac{V(y_k-y_0,\frac{\Gamma^L}{P_L},\frac{\Gamma^{L}}{P_L},N_4,N_5,N_6)}{U(y_k-y_0,\frac{\Gamma^L}{P_L},N_4)},
	\end{align*}where $N_4 = N^W_3(\Gamma^{L}) + N^L_1(r_0,\Gamma^{L})$, $N_5 = N^W_3(\Gamma^{L}) + N^L_0(y_k,r_0,\Gamma^{L})$, and $N_6 = C^W_0(y_0-y_k,\Gamma^{L},o,\Gamma^{L}) + C^L_0(y_k,\Gamma^{L},y_0,\Gamma^{L})$.
\end{corollary}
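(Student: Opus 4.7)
The plan is to mimic the derivation of Corollary~\ref{CondiRetainProbW4Wi-FiLemma} (whose proof is in Appendix~\ref{CondiMAPW4Appdx}), with the Wi-Fi/Wi-Fi contention replaced by an LTE/LTE contention and the Wi-Fi/LTE terms interchanged. First, I would write
$$
h_3^{L}(r_0,y_k) \;=\; \mathbb{P}(\hat{e}_k^L=1 \mid \hat{e}_0^L=1) \;=\; \frac{\mathbb{P}(\hat{e}_0^L=1,\hat{e}_k^L=1)}{\mathbb{P}(\hat{e}_0^L=1)},
$$
and invoke Slyvniak's theorem at the two points $y_0,y_k\in\Phi_L$, so that the reduced Palm law of $\Phi_L\setminus\{y_0,y_k\}$ is the same as the unconditional one. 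The exclusion $B(0,r_0)$ still applies to the remaining eNBs because $y_0$ is the closest eNB to the typical UE.

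Next, I would condition on the back-off marks $t_0^L=s_0$ and $t_k^L=s_k$ and compute the joint transmission probability by partitioning the timers of the contenders into three regions: below $\min(s_0,s_k)$, between them, and above $\max(s_0,s_k)$. In each region, each Wi-Fi AP $x_j$ (with threshold $\Gamma^L/P_W$) and each other LTE eNB $y_m\in\Phi_L\cap B^c(0,r_0)$ (with threshold $\Gamma^L/P_L$) contributes an explicit factor after taking the Rayleigh fading expectation. Applying the PGFL to $\Phi_W$ and to $\Phi_L\setminus\{y_0,y_k\}$ turns the products into exponentials whose exponents are linear combinations of $N^W_3(\Gamma^L)$, $N^L_1(r_0,\Gamma^L)$, $N^L_0(y_k,r_0,\Gamma^L)$, together with the joint contender terms $C^W_0(y_0-y_k,\Gamma^L,o,\Gamma^L)$ and $C^L_0(y_k,\Gamma^L,y_0,\Gamma^L)$; these are exactly the three aggregates $N_4$, $N_5$, $N_6$ appearing in the statement.

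I would also incorporate the mutual contention between $y_0$ and $y_k$: when $s_0<s_k$, $y_0$ silences $y_k$ unless the fading attenuation from $y_0$ to $y_k$ exceeds $\Gamma^L/P_L$, contributing a factor of the form $1-s_0(1-\exp(-\mu \tfrac{\Gamma^L}{P_L}\mathit{l}(\|y_k-y_0\|)))$, and symmetrically when $s_k<s_0$. Since $y_0$ and $y_k$ are both LTE eNBs with common transmit power $P_L$ and common threshold $\Gamma^L$, the two arguments $s_1,s_2$ of $V$ coincide and equal $\Gamma^L/P_L$. Integrating over $(s_0,s_k)\in[0,1]^2$ by splitting on the order of $s_0$ and $s_k$, each one-dimensional integral collapses, after routine algebra, to the quotient defining $M(N_1,N_2,N_3)$; summing the two symmetric contributions yields the numerator $V(y_k-y_0,\tfrac{\Gamma^L}{P_L},\tfrac{\Gamma^L}{P_L},N_4,N_5,N_6)$.

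For the denominator, I would repeat the same derivation but only for $\hat{e}_0^L=1$ (conditioning only on $y_k\in\Phi_L$), integrating over $s_0\in[0,1]$ alone. The only residual dependence on $y_k$ comes from the mutual-contention factor, and the integration gives exactly $U(y_k-y_0,\tfrac{\Gamma^L}{P_L},N_4)$. Taking the ratio produces the claimed expression. The main obstacle will be bookkeeping: making sure the exclusion ball $B(0,r_0)$ is applied only to LTE integrals (since the tagged-eNB condition constrains eNB positions but not AP positions), correctly translating the $C^W_0$ term via the change of variable $x\mapsto x+y_k$ so that its arguments match the convention of Table~\ref{FunctionTable}, and verifying that the six $N$-quantities produced by the PGFL step combine into the three aggregates $N_4,N_5,N_6$ used in the statement.
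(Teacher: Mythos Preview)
Your plan is correct and is exactly the route the paper takes: the paper does not spell out a separate proof but states that the result ``follows the same procedure as Corollary~\ref{PWSelecProbW1Wi-FiCoro} and Corollary~\ref{CondiRetainProbW4Wi-FiLemma}'', i.e., apply Slivnyak at $y_0,y_k$, condition on the two back-off timers, use the PGFL of $\Phi_W$ and of $\Phi_L\cap B^c(0,r_0)$, and integrate to recover the $V/U$ ratio. One small slip: in the numerator the mutual-contention contribution between $y_0$ and $y_k$ is simply the constant factor $\bigl(1-\exp(-\mu\tfrac{\Gamma^L}{P_L}\mathit{l}(\|y_k-y_0\|))\bigr)$ (cf.\ the factor $\mathbbm{1}_{G_{0i}^W/\mathit{l}(\|x_i-x_0\|)\le \Gamma_{cs}/P_W}$ in Appendix~\ref{CondiCOPWiFiAppdx}), not $1-s_0(1-\exp(\cdot))$; the timer-weighted expression $1-t\exp(\cdot)$ appears only in the \emph{denominator} computation leading to $U$.
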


Based on Corollary~\ref{Wi-FiCondiRetainProbGivenLTEW4Coro} and Corollay~\ref{LTECondiRetainProbGivenLTEW4Coro}, the SINR coverage of the typical UE can also be derived using the non-homogeneous PPP approximation of the interfering eNBs and APs:
\begin{lemma}\label{COPLTEW4Case1Lemma}
	When LTE implements listen-before-talk and random backoff mechanism with $(a,b) = (0,1)$, the approximate SINR coverage probability of the typical LTE UE is:
	\begin{align*}
	\allowdisplaybreaks
	p_3^L(T,\lambda_W,\lambda_L)  \approx &\int_{0}^{\infty}\exp\biggl( -\mu T \mathit{l}(r_0) \frac{\sigma_{N}^2}{P_W}\biggr)  \exp\biggl(- \int_{\mathbb{R}^2} \frac{T\mathit{l}(r_0) \lambda_W h^{W}_{3}(r_0,x)}{\frac{P_L}{P_W}\mathit{l}(\|x\|) + T \mathit{l}(r_0)}  {\rm d}x \biggl)  \nonumber \\
	&\times \exp\biggl(  -\int_{\mathbb{R}^2 \setminus B(0,r_0)} \frac{T \mathit{l}(r_0) \lambda_L h_{3}^{L}(r_0,y)}{\mathit{l}(\|y\|) + T \mathit{l}(r_0)} {\rm d}y\biggr) f_{L}(r_0){\rm d}r_0.
	\end{align*}
\end{lemma}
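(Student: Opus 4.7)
The plan is to mirror the proof of Lemma~\ref{CondiCOPW4Wi-FiLemma}, but swap the roles of the typical link: now the typical UE is at the origin, the tagged transmitter is the eNB $y_0$, and the two interfering processes are the other eNBs and the contending APs. By Remark~\ref{RemarkAngleInvariance}, I would condition on $y_0 = (r_0,0)$ and write the conditional SINR coverage as
\begin{align*}
&\mathbb{P}(\text{SINR}_0^L > T \mid y_0 = (r_0,0),\, e_0^L = 1) \\
= \; &\mathbb{P}\!\left(\frac{F_{0,0}^L/\mathit{l}(r_0)}{\sum\limits_{y_m \in \Phi_L \setminus \{y_0\}} F_{m,0}^L e_m^L/\mathit{l}(\|y_m\|) + \sum\limits_{x_j \in \Phi_W} \tfrac{P_W}{P_L} F_{j,0}^{WL} e_j^W / \mathit{l}(\|x_j\|) + \tfrac{\sigma_N^2}{P_L}} > T \,\Big|\, y_0 \in \Phi_L,\, \Phi_L(B^o(0,r_0)) = 0,\, e_0^L = 1\right).
\end{align*}
Applying Slyvniak's theorem and deconditioning on $\Phi_L(B^o(0,r_0)) = 0$ replaces the Palm point process of eNBs by $\Phi_L \cap B^c(0,r_0) + \delta_{y_0}$, and the medium-access indicators for each $x_j$ and $y_m$ become the modified $\hat e_j^W, \hat e_m^L$ defined in the excerpt immediately preceding Corollary~\ref{Wi-FiCondiRetainProbGivenLTEW4Coro}.

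Next, I would invoke Rayleigh fading on the desired signal $F_{0,0}^L \sim \operatorname{Exp}(\mu)$, which converts the SINR coverage event into a product of Laplace transforms of the noise term, the Wi-Fi interference, and the eNB interference. The thermal noise contributes the factor $\exp(-\mu T \mathit{l}(r_0)\sigma_N^2/P_L)$ (noting that the lemma statement lists $P_W$ in the exponent, but a cleaner check shows $P_L$ is the correct normalizer). The key approximation step, exactly as in the proof of Lemma~\ref{CondiCOPW4Wi-FiLemma}, is to treat $\hat e_j^W$ and $\hat e_m^L$ as mutually independent across $j$, $m$ and across the two processes, so that the joint Laplace transform factorizes into two separate expectations over the thinned AP process and the thinned eNB process.

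Each thinned process is then handled by the non-homogeneous PPP approximation introduced after Lemma~\ref{COPW1Wi-FiLemma}: the interfering APs are replaced by an inhomogeneous PPP with intensity $\lambda_W h_3^W(r_0,x)$ (from Corollary~\ref{Wi-FiCondiRetainProbGivenLTEW4Coro}), and the interfering eNBs on $B^c(0,r_0)$ are replaced by an inhomogeneous PPP with intensity $\lambda_L h_3^L(r_0,y)$ (from Corollary~\ref{LTECondiRetainProbGivenLTEW4Coro}). Applying the standard PGFL of a Poisson process to each Laplace transform, combined with the fact that $F \sim \operatorname{Exp}(\mu)$ yields $\mathbb{E}[\exp(-sF)] = \mu/(\mu+s)$ and hence integrands of the form $T\mathit{l}(r_0)/(T\mathit{l}(r_0)+\mathit{l}(\|y\|))$ (and analogously with the $P_L/P_W$ power ratio for the cross-tier term), gives exactly the two exponentials in the statement. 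Finally, I would deconditionon $\|y_0\|$ by integrating against the density $f_L(r_0)$ from Table~\ref{SysParaTable}.

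The main obstacle, as in all of the preceding SINR coverage proofs, is the independence approximation: the transmitting AP and transmitting eNB processes are jointly determined by the back-off marks $\{t_j^W\}$ and $\{t_m^L\}$ together with fading and positions, so $\{\hat e_j^W\}$ and $\{\hat e_m^L\}$ are not genuinely independent either across a single process or across the two processes. The step that treats them as independent (and then replaces each dependent thinning by an independent non-homogeneous PPP with matching first-order intensity) is exactly what makes the statement approximate rather than exact; its empirical justification would have to come from a simulation figure analogous to those in Sections~\ref{LTEwCT} and~\ref{LTEwDT}. Beyond this conceptual obstacle, the remaining manipulations are routine Laplace-transform/PGFL computations.
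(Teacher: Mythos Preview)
Your proposal is correct and follows essentially the same route as the paper: the paper does not spell out a proof of this lemma but states that it is obtained from Corollaries~\ref{Wi-FiCondiRetainProbGivenLTEW4Coro} and~\ref{LTECondiRetainProbGivenLTEW4Coro} via the non-homogeneous PPP approximation, exactly the conditioning/Slyvniak/Laplace-transform/PGFL sequence you outline (mirroring the proof of Lemma~\ref{CondiCOPW4Wi-FiLemma} and, in more detail, Lemma~\ref{SINRLTEW1Lemma}). Your observation that the noise exponent should read $\sigma_N^2/P_L$ rather than $\sigma_N^2/P_W$ is also well taken; the paper's derivation in Lemma~\ref{SINRLTEW1Lemma} confirms $P_L$ is the correct normalizer for the LTE link.
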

\subsection{LTE with Lower Channel Access Priority as Wi-Fi when (a,b) = (1,2)} 
In this case, since the random backoff timer for each LTE eNB is always larger than that of Wi-Fi APs, LTE has a lower channel access priority. Specifically, the medium access indicator for each Wi-Fi AP and LTE eNB in~(\ref{MAIEqW4}) can be simplified to: 
\allowdisplaybreaks
\begin{align}\label{MAIEqW4Case2}
\allowdisplaybreaks
e_{i}^{W} = & \prod\limits_{x_{j} \in \Phi_{W} \setminus \{x_{i}\}} \biggl(\mathbbm{1} _{t_{j}^{W} \geq t_{i}^{W}} +\mathbbm{1} _{t_{j}^{W} < t_{i}^{W}} \mathbbm{1}_{ G_{ji}^{W}/ \mathit{l}(\|x_{j} - x_{i}\|) \leq \frac{\Gamma_{cs}}{P_W}} \biggl) , \nonumber\\
e_{k}^{L} = & \prod\limits_{x_{j} \in \Phi_{W}}  \mathbbm{1}_{ G_{jk}^{WL}/ \mathit{l}(\|x_{j}-y_{k}\|) \leq \frac{\Gamma^{L}}{P_W}}  \prod\limits_{y_{m} \in \Phi_{L} \setminus \{y_{k}\}} \biggl(\mathbbm{1}_{t_{m}^{L} \geq t_{k}^{L}} +\mathbbm{1} _{t_{m}^{L} < t_{k}^{L}} \mathbbm{1}_{ G_{mk}^{L}/ \mathit{l}(\|y_{m} - y_{k}\|) \leq \frac{\Gamma^{L}}{P_L}} \biggl).
\end{align}
The MAP for the typical AP and eNB are given in the following lemma.
\begin{lemma}
	When LTE follows a LBT and random BO mechanism with $(a,b) = (1,2)$, the MAPs for typical AP and eNB, denoted by $p_{4,MAP}^{W}$ and $p_{4,MAP}^{L}$ respectively, are given by:
	\allowdisplaybreaks
	\begin{align*}
	\allowdisplaybreaks
	p^{W}_{4,\text{MAP}}(\lambda_W,\lambda_L) &= \frac{1-\exp(-N^W)}{N^W},\\
	p^{L}_{4,\text{MAP}}(\lambda_W,\lambda_L) &= \exp(-N^W_3(\Gamma^{L}) )\frac{1-\exp(-N^L_3(\Gamma^{L}))}{N^L_3(\Gamma^{L})}.
	\end{align*}
\end{lemma}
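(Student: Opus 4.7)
The plan is to mirror the argument of Lemma~\ref{MAPWi-FiW1lemma} (and Lemma~\ref{MAPW4Case1Lemma}), exploiting the crucial simplification that with $(a,b)=(1,2)$ every LTE timer $t_m^L\in[1,2]$ strictly exceeds every Wi-Fi timer $t_i^W\in[0,1]$. Hence $\mathbbm{1}_{t_m^L\ge t_i^W}=1$ and $\mathbbm{1}_{t_j^W<t_k^L}=1$ almost surely, which is exactly why (\ref{MAIEqW4}) reduces to (\ref{MAIEqW4Case2}): Wi-Fi never defers to LTE, while LTE must defer to every Wi-Fi AP that would trigger its sensing threshold $\Gamma^L$.

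For $p^W_{4,\text{MAP}}$, I would compute the Palm expectation $\mathbb{E}^{x_i}_{\Phi_W}[e_i^W]$ from the first line of (\ref{MAIEqW4Case2}). Conditioning on $t_i^W=t$, the factor for another AP $x_j$ evaluates, after averaging over $t_j^W\sim U(0,1)$ and the Rayleigh fading $G^W_{ji}$, to $1-t\exp(-\mu\frac{\Gamma_{cs}}{P_W}\mathit{l}(\|x_j-x_i\|))$. Applying Slyvniak's theorem to remove the conditioning at $x_i$ and the PGFL of the homogeneous PPP $\Phi_W$ then yields $\exp(-tN^W)$. Deconditioning on $t\sim U(0,1)$ gives $\int_0^1\exp(-tN^W)\,dt=\frac{1-\exp(-N^W)}{N^W}$, as claimed. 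Note this matches the Wi-Fi-only MAP, which is intuitively correct since LTE with $(a,b)=(1,2)$ never preempts Wi-Fi.

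For $p^L_{4,\text{MAP}}$, I would compute the Palm expectation $\mathbb{E}^{y_k}_{\Phi_L}[e_k^L]$ from the second line of (\ref{MAIEqW4Case2}). The independence of $\Phi_W$ and $\Phi_L$ lets the expectation factor into a Wi-Fi part and an LTE part. For the Wi-Fi part, averaging the indicator $\mathbbm{1}_{G^{WL}_{jk}/\mathit{l}(\|x_j-y_k\|)\le\Gamma^L/P_W}$ over the Rayleigh fading gives $1-\exp(-\mu\frac{\Gamma^L}{P_W}\mathit{l}(\|x_j-y_k\|))$, and the PGFL of $\Phi_W$ produces $\exp(-N^W_3(\Gamma^L))$. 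For the LTE part, I would condition on $t_k^L=t\in[1,2]$; the per-eNB expectation becomes $1-(t-1)\exp(-\mu\frac{\Gamma^L}{P_L}\mathit{l}(\|y_m-y_k\|))$ after averaging over $t_m^L\sim U(1,2)$ and the fading, and then Slyvniak plus the PGFL of $\Phi_L$ give $\exp(-(t-1)N^L_3(\Gamma^L))$. Deconditioning on $t\sim U(1,2)$ via the substitution $s=t-1$ produces $\frac{1-\exp(-N^L_3(\Gamma^L))}{N^L_3(\Gamma^L)}$, and multiplying the two factors yields the stated expression.

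No step is a genuine obstacle: the derivation is structurally identical to Lemma~\ref{MAPWi-FiW1lemma}, and the only thing to verify carefully is the simplification of the medium access indicators under $(a,b)=(1,2)$ and the correct handling of the translated uniform distribution on $[1,2]$ when deconditioning. I would therefore state the result and say that the proof follows the same lines as Lemma~\ref{MAPWi-FiW1lemma} with the indicator simplification above, omitting the calculation.
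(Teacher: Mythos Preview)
Your proposal is correct and follows exactly the methodology the paper uses: the paper does not give an explicit proof of this lemma, but the derivation is implicitly the same as Lemma~\ref{MAPWi-FiW1lemma} (Slyvniak's theorem plus the PGFL of a homogeneous PPP, after conditioning on the timer), applied to the simplified indicators~(\ref{MAIEqW4Case2}). Your handling of the $(a,b)=(1,2)$ simplification and the $t\mapsto t-1$ substitution is precisely what is needed.
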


\begin{corollary}
	When LTE follows a LBT and random BO mechanism with $(a,b) = (1,2)$, the MAP for the tagged Wi-Fi AP and LTE eNB are given by:
		\allowdisplaybreaks
		\begin{align*}
		\allowdisplaybreaks
		\hat{p}^{W}_{4,\text{MAP}}(\lambda_W,\lambda_L) &=  \int_{0}^{\infty} \frac{1-\exp(-N^W_2(r_0))}{N^W_2(r_0)}  f_{W}(r_0)  {\rm d}r_0,\\
		\hat{p}^{L}_{4,\text{MAP}}(\lambda_W,\lambda_L) &= \int_{0}^{\infty}\exp(-N^W_3(\Gamma^L)) \frac{1-\exp(-N^L_1(r_0,\Gamma^L))}{N^L_1(r_0,\Gamma^L)} f_{L}(r_0)  {\rm d}r_0.
		\end{align*}
\end{corollary}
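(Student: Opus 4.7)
The plan is to mimic the proof of Corollary~\ref{MAPTagAPW4Case1Lemma}, exploiting the simplification that the backoff timers of LTE eNBs (uniform on $[1,2]$) are deterministically larger than those of Wi-Fi APs (uniform on $[0,1]$). Consequently the Wi-Fi and LTE products in~(\ref{MAIEqW4Case2}) decouple: the AP indicator no longer involves $\Phi_L$ at all, while each eNB sees every Wi-Fi AP as a ``smaller-timer'' contender. Both MAP computations will then follow the standard Palm + Slyvniak + PGFL template used throughout the paper.

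For $\hat{p}^{W}_{4,\text{MAP}}$, I first condition on the tagged AP being at $x_0 = (r_0,0)$, which by Slyvniak's theorem and the closest-AP assumption amounts to replacing $\Phi_W\setminus\{x_0\}$ by $\Phi_W \cap B^c(0,r_0)$. Conditioning further on the timer $t_0^W = t \in [0,1]$, each remaining AP $x_j$ independently contributes a factor $1 - t\exp(-\mu\frac{\Gamma_{cs}}{P_W}\mathit{l}(\|x_j-x_0\|))$ after averaging over the Rayleigh fading $G_{j0}^W$. Applying the PGFL of the PPP yields $\exp(-t N^W_2(r_0))$; integrating $t$ uniformly over $[0,1]$ produces $\frac{1-\exp(-N^W_2(r_0))}{N^W_2(r_0)}$, and deconditioning on $\|x_0\|$ against $f_W$ completes the first identity.

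For $\hat{p}^{L}_{4,\text{MAP}}$, I condition on $y_0 = (r_0,0)$ and $t_0^L = t \in [1,2]$. The Wi-Fi product collapses to $\prod_{x_j \in \Phi_W}\mathbbm{1}_{G_{j0}^{WL}/\mathit{l}(\|x_j-y_0\|) \leq \Gamma^L/P_W}$; averaging the fadings and applying the PGFL over all of $\mathbb{R}^2$ gives $\exp(-N^W_3(\Gamma^L))$ (the integral being translation-invariant in $y_0$). Slyvniak's theorem reduces the LTE product to one over $\Phi_L \cap B^c(0,r_0)$, and with timer shift $s := t-1 \in [0,1]$, each factor becomes $1 - s\exp(-\mu\frac{\Gamma^L}{P_L}\mathit{l}(\|y_m-y_0\|))$. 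The PGFL contributes $\exp(-s N^L_1(r_0,\Gamma^L))$, and uniform averaging of $s$ over $[0,1]$ yields $\frac{1-\exp(-N^L_1(r_0,\Gamma^L))}{N^L_1(r_0,\Gamma^L)}$. Multiplying the Wi-Fi and LTE contributions and deconditioning on $\|y_0\|$ against $f_L$ gives the second identity.

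No genuine obstacle arises here; the independence of $\Phi_W$ and $\Phi_L$, together with the fact that the $(a,b)=(1,2)$ choice eliminates the LTE--Wi-Fi timer coupling that made Corollary~\ref{MAPTagAPW4Case1Lemma} involve a combined sum $N^W_2(r_0)+N^L$, makes the calculation strictly easier than in the $(a,b)=(0,1)$ case. The only subtle book-keeping step is to remember to shift $t\mapsto t-1$ in the LTE integral so that the uniform measure on $[1,2]$ reproduces the familiar $\frac{1-\exp(-\cdot)}{\cdot}$ primitive; I would flag this explicitly to avoid an off-by-one error, but I do not expect it to be difficult.
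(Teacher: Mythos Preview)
Your proposal is correct and follows exactly the approach the paper uses throughout: condition on the tagged node location (invoking Remark~\ref{RemarkAngleInvariance} and the closest-node exclusion), apply Slyvniak's theorem to remove the Palm point, average over fading and the uniform backoff timer, use the PGFL of the PPP, and decondition against $f_W$ or $f_L$. The paper does not spell out this proof at all (it states the corollary immediately after the corresponding lemma for the typical AP/eNB), so your write-up simply makes explicit what the paper regards as routine; the timer shift $s=t-1$ is the only extra bookkeeping beyond Corollary~\ref{MAPWi-FitagW1lemma} and Corollary~\ref{MAPTagAPW4Case1Lemma}, and you handle it correctly.
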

In terms of MAP, this scheme is optimal to protect Wi-Fi since each AP has the same MAP as if LTE does not exist. In contrast, since eNBs will not transmit whenever a strong interfering AP exists, eNBs will have a role similar to APs in the scenario when LTE transmits continuously. 

In order to determine the coverage probability for the typical STA and typical UE, procedures similar to that of the previous parts are used. In particular, 
the conditional MAP $\mathbb{P}_{\Phi_W}^{x_i}(e_i^W = 1|e_0^W = 1, x_0 = (r_0,0))$, denoted by $h_4^{W}(r_0,x_i)$, can be directly obtained from Corollary~\ref{PWSelecProbW1Wi-FiCoro} by making $\lambda_L = 0$, which is given by:
\begin{align*}
\allowdisplaybreaks
h_4^{W}(r_0,x_i) = \frac{V(x-x_0,\frac{\Gamma_{cs}}{P_W},\frac{\Gamma_{cs}}{P_W},N^W_2(r_0),N^W_0(x,r_0,\Gamma_{cs}),C^W_2(x,x_0)) }
{U(x-x_0,\frac{\Gamma_{cs}}{P_W},N^W_2(r_0))}.
\end{align*}In addition, denote the conditional probability $\mathbb{P}_{\Phi_L}^{y_k} (e_{k}^{L} = 1 | e_{0}^{W} =1, x_0 = (r_0,0))$ by $h_4^{L}(r_0,y_k)$, we get:
\allowdisplaybreaks
\begin{align*}
\allowdisplaybreaks
h_4^{L}(r_0,y_k) = \frac{\frac{1-\exp(-N^L_{3}(\Gamma^{L}))}{N^L_3(\Gamma^{L})}\frac{1-\exp(-N^W_2(r_0) + C^W_0(y_k, \Gamma^{L},x_0,\Gamma_{cs}))}{N^W_2(r_0) - C^W_0(y_k, \Gamma^{L},x_0,\Gamma_{cs})} (1-\exp(-\mu \frac{\Gamma^{L}}{P_W}\mathit{l}(\|x_0-y_k\|)))}{\frac{1-\exp(-N^W_2(r_0))}{N^W_2(r_0)}\exp(N^W_0(y_k,r_0,\Gamma^{L}))}.
\end{align*}
By applying the non-homogeneous PPP approximation to Wi-Fi and LTE interferers, the
SINR coverage probability of the typical STA can be derived by the following procedures which are similar to that of
Lemma~\ref{CondiCOPW4Wi-FiLemma}:
\begin{lemma}\label{COPW4Case2Wi-FiLemma}
	When LTE implements the listen-before-talk and random backoff mechanism with $(a,b) = (1,2)$, the approximate SINR coverage probability of a typical Wi-Fi STA is:
	\allowdisplaybreaks
	\begin{align*}
	\allowdisplaybreaks
	p_4^W(T,\lambda_W,\lambda_L)  \approx & \int_{0}^{\infty}\exp\biggl( -\mu T \mathit{l}(r_0) \frac{\sigma_{N}^2}{P_W}\biggr) \exp\biggl(  -\int_{\mathbb{R}^2} \frac{T \mathit{l}(r_0) \lambda_L h_{4}^{L}(r_0,y)}{\frac{P_W}{P_L}\mathit{l}(\|y\|) + T \mathit{l}(r_0)} {\rm d}y\biggr) \nonumber \\
	&\times \exp\biggl(- \int_{\mathbb{R}^2 \setminus B(0,r_0)} \frac{T\mathit{l}(r_0) \lambda_W h^{W}_{4}(r_0,x)}{\mathit{l}(\|x\|) + T \mathit{l}(r_0)}   {\rm d}x \biggl) f_{W}(r_0) {\rm d}r_0.
	\end{align*}	
\end{lemma}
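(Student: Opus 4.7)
The plan is to mirror the proof structure of Lemma~\ref{CondiCOPW4Wi-FiLemma} almost verbatim, using the already-computed conditional MAPs $h_4^W(r_0,x)$ and $h_4^L(r_0,y)$ stated just above the lemma. First I would condition on the tagged AP being located at $x_0 = (r_0,0)$ with $e_0^W = 1$, and write the SINR as in (\ref{SINRWiFiW1Eq}) with the medium access indicators given by (\ref{MAIEqW4Case2}). Using Slyvniak's theorem, the Palm expectation at $x_0$ reduces to an ordinary expectation over $\Phi_W \cap B^c(0,r_0)$ (plus an atom at $x_0$) and $\Phi_L$, with the indicators $e_j^W$, $e_k^L$ replaced by their modified versions $\hat{e}_j^W$, $\hat{e}_k^L$ that account for the presence of $x_0$ and the exclusion of $B^o(0,r_0)$.

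Next I would exploit Rayleigh fading on the direct link: since $F_{0,0}^W$ is exponential with parameter $\mu$, the event $\{\text{SINR}_0^W > T\}$ given everything else becomes $\exp(-\mu T l(r_0)(\text{noise} + I_W + I_L)/P_W)$, factoring the conditional coverage probability into three Laplace-type terms. The noise term yields $\exp(-\mu T l(r_0)\sigma_N^2/P_W)$ directly. For the Wi-Fi interference $I_W$ and the LTE interference $I_L$, I would invoke the same independence approximation used in Lemma~\ref{CondiCOPW4Wi-FiLemma}, treating $\hat{e}_i^W$ and $\hat{e}_k^L$ as independent across interferers and across the two point processes, so that the two Laplace transforms factor.

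The core step is then the non-homogeneous PPP approximation. Each interferer at $x \in \Phi_W \cap B^c(0,r_0)$ is kept with conditional probability $h_4^W(r_0,x)$ derived just above, and each eNB at $y \in \Phi_L$ is kept with conditional probability $h_4^L(r_0,y)$. Replacing the retained APs and retained eNBs by independent inhomogeneous PPPs with intensities $\lambda_W h_4^W(r_0,x)$ and $\lambda_L h_4^L(r_0,y)$ and applying the standard PGFL formula for a PPP with Rayleigh-faded marks,
\begin{align*}
\mathbb{E}\Bigl[\exp\bigl(-\mu T l(r_0) \sum_{z} F_z / l(\|z\|)\bigr)\Bigr] = \exp\Bigl(-\int \frac{T l(r_0) \lambda(z)}{l(\|z\|)/c + T l(r_0)}\,{\rm d}z\Bigr),
\end{align*}
with $c = 1$ for Wi-Fi and $c = P_L/P_W$ for LTE, produces the two exponential factors in the claimed expression. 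Finally I would decondition on $\|x_0\|$ with the density $f_W(r_0)$ provided in Remark~\ref{RemarkAngleInvariance} and Table~\ref{SysParaTable}.

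The only non-routine point is the independence approximation between the Wi-Fi and LTE interferer thinnings (and among the interferers within each process); this is the same approximation used throughout Sections~\ref{LTEwCT} and \ref{LTEwLBTBFSec} and its accuracy has already been validated against simulation in Fig.~\ref{GlobCOPW1Wi-FiFig}, so no new justification is needed. Everything else is a direct specialization of the template used for Lemma~\ref{CondiCOPW4Wi-FiLemma}, obtained simply by substituting the simpler conditional MAPs $h_4^W$ and $h_4^L$ that reflect the fact that with $(a,b)=(1,2)$ eNBs can no longer force APs to back off.
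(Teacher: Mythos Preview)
Your proposal is correct and matches the paper's approach exactly: the paper states that the result is obtained ``by the following procedures which are similar to that of Lemma~\ref{CondiCOPW4Wi-FiLemma}'' via the non-homogeneous PPP approximation, and your outline (condition on $x_0=(r_0,0)$, apply Slyvniak, exploit Rayleigh fading to factor, approximate the thinned interferer processes by independent inhomogeneous PPPs with intensities $\lambda_W h_4^W$ and $\lambda_L h_4^L$, then decondition on $\|x_0\|$) is precisely that template specialized to the $(a,b)=(1,2)$ indicators in~(\ref{MAIEqW4Case2}).
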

Next, given the tagged eNB of the typical UE is located at $y_0 = (r_0,0)$, the two conditional probabilities $\mathbb{P}_{\Phi_W}^{x_i}(e_{i}^W = 1 | e_{0}^{L} =1, y_0 = (r_0,0))$ and $\mathbb{P}_{\Phi_L}^{y_k}(e_{k}^L = 1 | e_{0}^{L} =1, y_0 = (r_0,0))$, denoted by $h_5^W(r_0,x_i)$ and $h_5^L(r_0,y_k)$ respectively, are given in~(\ref{CondiCOPCase2Wi-Fi2Eq}) and~(\ref{CondiCOPCase2LTE2Eq}):
\allowdisplaybreaks
\begin{align}
\allowdisplaybreaks
	&h_5^W(r_0,x_i) = \frac{1-\exp(-N^W + C^W_0(y_0-x_i,\Gamma^{L},o,\Gamma_{cs}))}{N^W - C^W_0(y_0-x_i,\Gamma^{L},o,\Gamma_{cs})} \label{CondiCOPCase2Wi-Fi2Eq},\\
     &h_5^L(r_0,y_k) =  \frac{(1-\exp(-\mu \frac{\Gamma^L}{P_L}\mathit{l}(\|y_k - y_0\|)) ) (M(N_4,N_5,N_6) + M(N_5,N_4,N_6))}{\exp(N^W_3(\Gamma^L)- C^W_0(y_0-y_k,\Gamma^L,o,\Gamma^L)) U(y_k-y_0,\frac{\Gamma^L}{P_L},N_4)}\label{CondiCOPCase2LTE2Eq},
\end{align}where $N_4 = N^L_1(r_0,\Gamma^{L})$, $N_5 = N^L_0(y_k,r_0,\Gamma^{L})$ and $N_6 = C^L_0(y_k,\Gamma^{L},y_0,\Gamma^{L})$ in~(\ref{CondiCOPCase2LTE2Eq}).

Based on $h_5^W$ and $h_5^L$, the SINR coverage probability of the typical UE can be derived by applying the non-homogeneous PPP approximation:
\begin{lemma}\label{COPLTEW4Case2Lemma}
	When LTE implements the listen-before-talk and random backoff mechanism with $(a,b) = (1,2)$, the approximate SINR coverage probability of the typical LTE UE is:
	\allowdisplaybreaks
	\begin{align*}
	\allowdisplaybreaks
	p_4^L(T,\lambda_W,\lambda_L)  \approx  &\int_{0}^{\infty} \exp\biggl( -\mu T \mathit{l}(r_0)\frac{\sigma_{N}^2}{P_W}\biggr) \exp\biggl(  -\int_{\mathbb{R}^2 \setminus B(0,r_0)} \frac{T \mathit{l}(r_0) \lambda_L h_{5}^{L}(r_0,y)}{\mathit{l}(\|y\|) + T \mathit{l}(r_0)} {\rm d}y\biggr) \\
	& \times \exp\biggl(- \int_{\mathbb{R}^2} \frac{T\mathit{l}(r_0) \lambda_W h^{W}_{5}(r_0,x)}{\frac{P_L}{P_W}\mathit{l}(\|x\|) + T \mathit{l}(r_0)}  {\rm d}x \biggl)f_{L} (r_0) {\rm d}r_0.
	\end{align*}		
\end{lemma}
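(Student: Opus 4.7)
The plan is to proceed by exactly mirroring the derivation already carried out for the typical Wi-Fi STA in Lemma~\ref{CondiCOPW4Wi-FiLemma} (and for the LTE UE in Lemma~\ref{COPLTEW4Case1Lemma}), adapting every step to the $(a,b)=(1,2)$ medium-access rule encoded in~(\ref{MAIEqW4Case2}). First I would condition on the tagged eNB being located at $y_0=(r_0,0)$ and on the event $\{e_0^L=1\}$. Invoking Remark~\ref{RemarkAngleInvariance}, rewriting $y_0\in\Phi_L$ together with $\Phi_L(B^o(0,r_0))=0$, and then applying Slyvniak's theorem, I would strip the conditioning and work with the modified medium-access indicators $\hat e_i^W$, $\hat e_k^L$ that take the extra point $\delta_{y_0}$ into account, exactly as was done just above Corollary~\ref{Wi-FiCondiRetainProbGivenLTEW4Coro}.

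Second, I would exploit the Rayleigh fading of $F_{0,0}^L$ to turn the SINR coverage event into an exponential indicator, yielding
\begin{align*}
&\mathbb{P}(\text{SINR}_0^L>T\mid y_0=(r_0,0),\,e_0^L=1)\\
&\qquad=\mathbb{E}\Bigl[\exp\Bigl(-\mu T\mathit{l}(r_0)\Bigl(\tfrac{\sigma_N^2}{P_L}+\!\!\!\sum_{y_m\in\Phi_L\cap B^c(0,r_0)}\!\!\tfrac{F_{m,0}^L\hat e_m^L}{\mathit{l}(\|y_m\|)}+\sum_{x_j\in\Phi_W}\tfrac{P_W}{P_L}\tfrac{F_{j,0}^{WL}\hat e_j^W}{\mathit{l}(\|x_j\|)}\Bigr)\Bigr)\Bigr].
\end{align*}
As in step (c) of Lemma~\ref{SINRLTEW1Lemma} and in Lemma~\ref{COPLTEW4Case1Lemma}, I would decouple the Wi-Fi interference from the LTE interference by treating them as independent. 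This separates the expectation into the deterministic noise factor $\exp(-\mu T\mathit{l}(r_0)\sigma_N^2/P_L)$ times two Laplace-type expectations.

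Third, for each of those two expectations I would apply the non-homogeneous PPP approximation announced in the remark following Lemma~\ref{COPW1Wi-FiLemma}: replace the thinned LTE interferer process on $B^c(0,r_0)$ by a PPP of intensity $\lambda_L h_5^L(r_0,y)$ and the thinned Wi-Fi interferer process on $\mathbb{R}^2$ by a PPP of intensity $\lambda_W h_5^W(r_0,x)$, where $h_5^W$ and $h_5^L$ are precisely the conditional MAPs already computed in~(\ref{CondiCOPCase2Wi-Fi2Eq})--(\ref{CondiCOPCase2LTE2Eq}). Averaging the exponential of each shot-noise sum via the PGFL of a (non-homogeneous) PPP, together with the Rayleigh fading of the $F_{m,0}^L$ and $F_{j,0}^{WL}$, produces the two integral factors
\[
\exp\Bigl(-\!\!\int_{\mathbb{R}^2\setminus B(0,r_0)}\!\!\frac{T\mathit{l}(r_0)\lambda_L h_5^L(r_0,y)}{\mathit{l}(\|y\|)+T\mathit{l}(r_0)}\,{\rm d}y\Bigr),\qquad \exp\Bigl(-\!\int_{\mathbb{R}^2}\!\frac{T\mathit{l}(r_0)\lambda_W h_5^W(r_0,x)}{\tfrac{P_L}{P_W}\mathit{l}(\|x\|)+T\mathit{l}(r_0)}\,{\rm d}x\Bigr).
\]
Finally, deconditioning on $\|y_0\|$ using the density $f_L(r_0)$ from Table~\ref{SysParaTable} yields the claimed expression.

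The routine pieces (Slyvniak, PGFL, Rayleigh-fading Laplace transform) are standard at this point in the paper. The main obstacle, and the only genuinely non-trivial step, is justifying the two approximations simultaneously: (i) that the Wi-Fi interferer process and the LTE interferer process can be treated as independent given $e_0^L=1$, even though they are coupled through the indicators $\hat e_j^W$ and $\hat e_k^L$ in~(\ref{MAIEqW4Case2}); and (ii) that each marked process can be replaced by a non-homogeneous PPP whose intensity is the true conditional MAP $h_5^W$ or $h_5^L$. The second issue is handled exactly as in all prior SINR results of Sections~\ref{LTEwCT}--\ref{LTEwLBTBFSec}, and the decoupling in (i) has already been invoked in Lemma~\ref{SINRLTEW1Lemma} and Lemma~\ref{COPLTEW4Case1Lemma}; accuracy will ultimately have to be validated, as elsewhere, against the simulation curves. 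No new calculation of conditional MAPs is required here because $h_5^W$ and $h_5^L$ have already been written down in~(\ref{CondiCOPCase2Wi-Fi2Eq}) and~(\ref{CondiCOPCase2LTE2Eq}).
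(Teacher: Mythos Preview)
Your proposal is correct and follows precisely the route the paper takes: the paper itself offers no separate proof here, stating only that the result follows ``by applying the non-homogeneous PPP approximation'' with the conditional MAPs $h_5^W$ and $h_5^L$, which is exactly your conditioning--Slyvniak--Rayleigh--decoupling--PGFL--decondition sequence. The one cosmetic discrepancy is that your derivation (correctly, from~(\ref{SINRLTEEq})) produces a noise factor $\exp(-\mu T\mathit{l}(r_0)\sigma_N^2/P_L)$, whereas the stated lemma writes $P_W$; this is a typo inherited from Lemma~\ref{COPLTEW4Case1Lemma} and is immaterial under the paper's assumption $P_W=P_L$.
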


The SINR coverage performance of the typical STA and UE under two LTE channel access priority schemes is plotted in Fig.~\ref{COPWiFiW4SimuvsTheoFig} and Fig.~\ref{COPLTEW4SimuvsTheoFig}, where the simulation results are obtained from the definition of SINR in~(\ref{SINRWiFiW1Eq}) and~(\ref{SINRLTEEq}). The accuracy of the approximations can be validated for various LTE sensing thresholds and AP/eNB densities. Since both Wi-Fi/LTE adopt the LBT and random backoff mechanism, a good overall SINR coverage probability can be achieved for Wi-Fi and LTE. In addition, given LTE contention window size $(a,b)$, both Wi-Fi STA and LTE UE can achieve better SINR performance with a more sensitive threshold $\Gamma^L$, which is due to less LTE interference. It can also be observed that when LTE has lower channel access priority, a less sensitive threshold $\Gamma^L$ is needed to obtain a similar Wi-Fi SINR performance as in the case when LTE has the same channel access priority as Wi-Fi.


 \begin{figure}[h]
 	\begin{subfigure}[b]{0.5\textwidth}
 		\includegraphics[height=2in, width=3.1in]{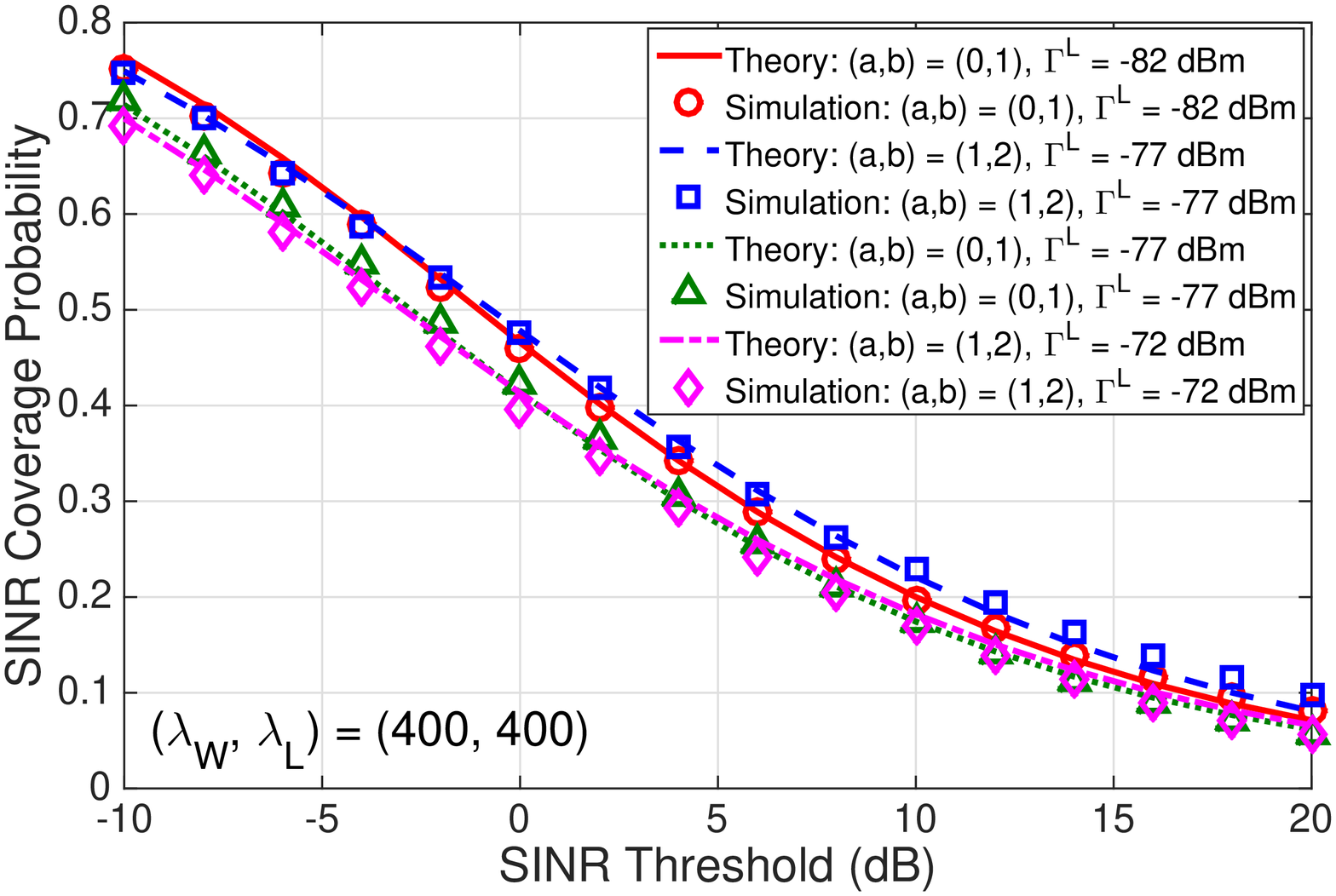}
 	\end{subfigure}
 	\hfill
 	\begin{subfigure}[b]{0.5\textwidth}
 		\includegraphics[height=2in, width=3.1in]{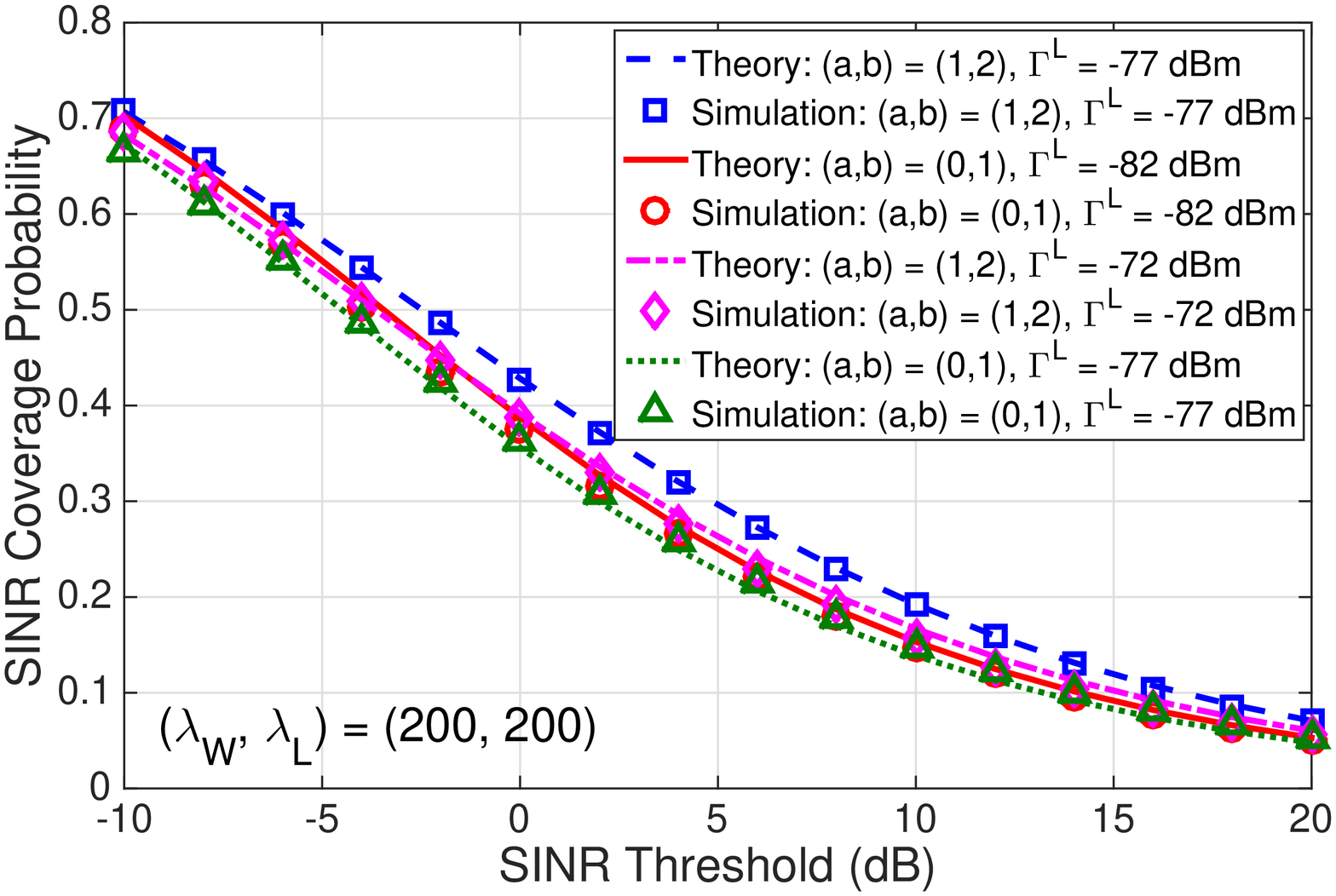}
 	\end{subfigure}
 	\caption{Wi-Fi SINR performance under different LTE channel access priorities and $\Gamma^L$.}\label{COPWiFiW4SimuvsTheoFig}
 \end{figure}

\begin{figure}[h]
	\begin{subfigure}[b]{0.5\textwidth}
		\includegraphics[height=2in, width=3.1in]{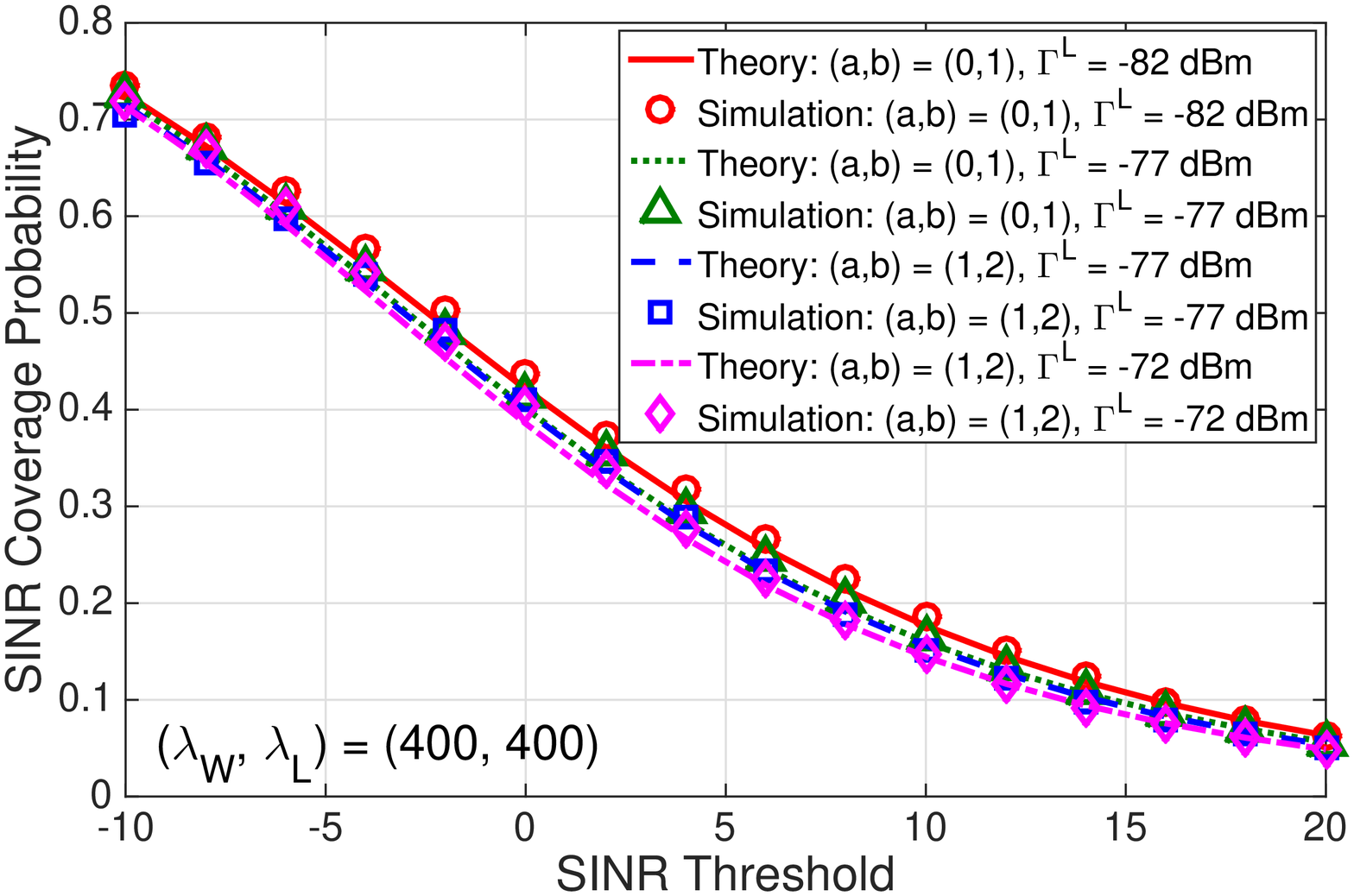}
	\end{subfigure}
	\hfill
	\begin{subfigure}[b]{0.5\textwidth}
		\includegraphics[height=2in, width=3.1in]{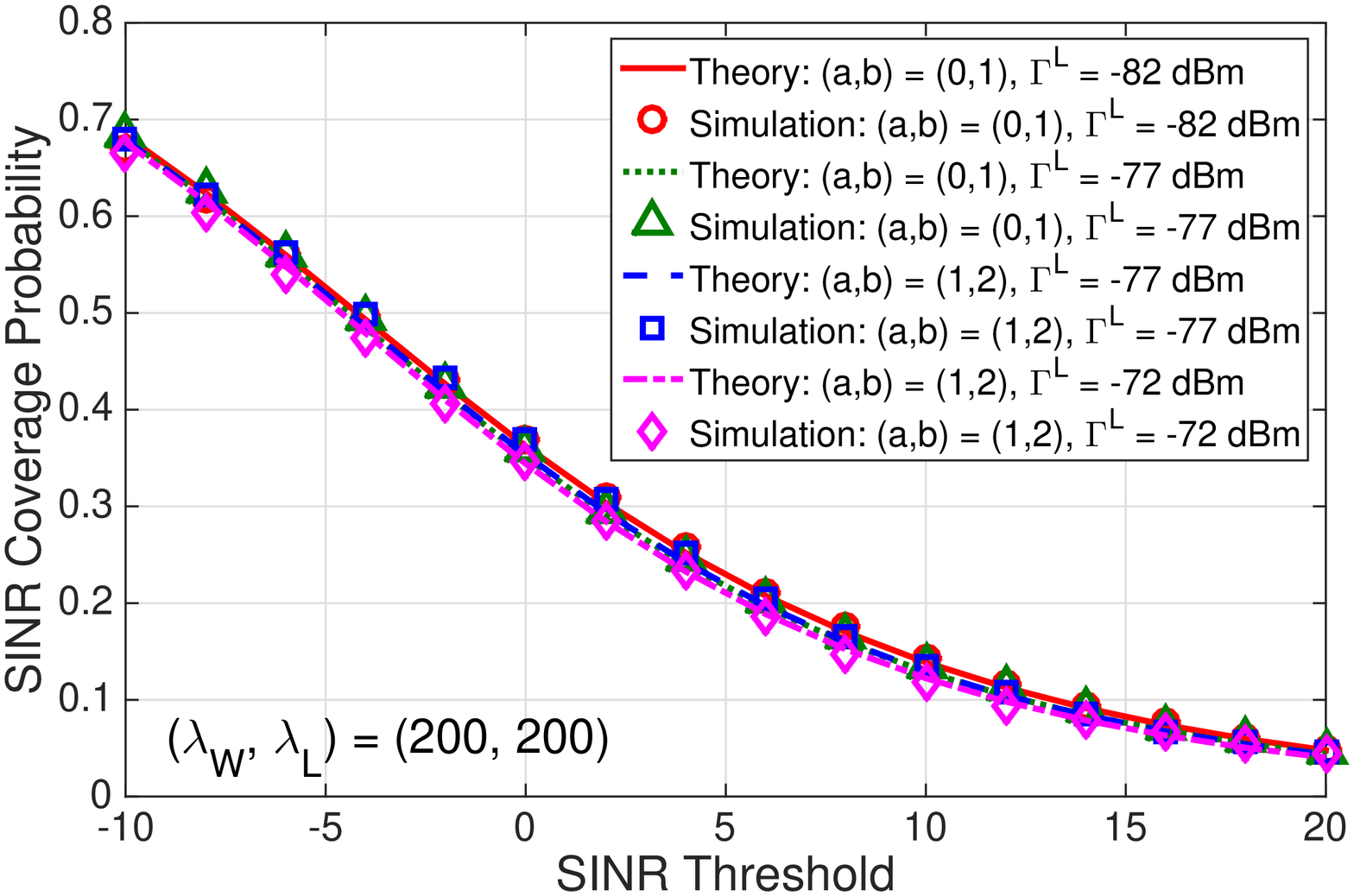}
	\end{subfigure}
	\caption{LTE SINR performance under different channel access priorities and $\Gamma^L$.}\label{COPLTEW4SimuvsTheoFig}
\end{figure}


\section{Performance Comparisons of Different Coexistence Scenarios}\label{NumEvaSec}
In this section, the DST and rate coverage performance for each coexistence scenario are compared through numerical evaluations. In particular, we use Wi-Fi + LTE (Wi-Fi + LTE-U, and Wi-Fi + LAA respectively) to denote the scenario when Wi-Fi operator 1 coexists with another operator 2, which uses LTE with no protocol change (LTE with discontinuous transmission, and LTE with LBT and random BO respectively). The baseline performance of Wi-Fi operator 1 is when operator 2 also uses Wi-Fi (i.e., Wi-Fi + Wi-Fi). The Wi-Fi MAP and SINR coverage of the baseline scenario can be obtained directly from Lemma~\ref{MAPW4Case1Lemma} and Lemma~\ref{CondiCOPW4Wi-FiLemma} by setting all the sensing thresholds to $\Gamma_{cs}$. In addition, we focus on a dense network deployment where $\lambda_W$ = 400 APs/km$^2$ and $\lambda_L$ = 400 eNBs/km$^2$. Based on the MAP and approximate SINR coverage probability, we have investigated the DST and rate coverage probability of Wi-Fi/LTE under all the coexistence scenarios in Fig.~\ref{DSTCompFig} and Fig.~\ref{RateCompFig}. 

\begin{figure}
	\begin{subfigure}[b]{0.5\textwidth}
		\includegraphics[height=2in, width=3.1in]{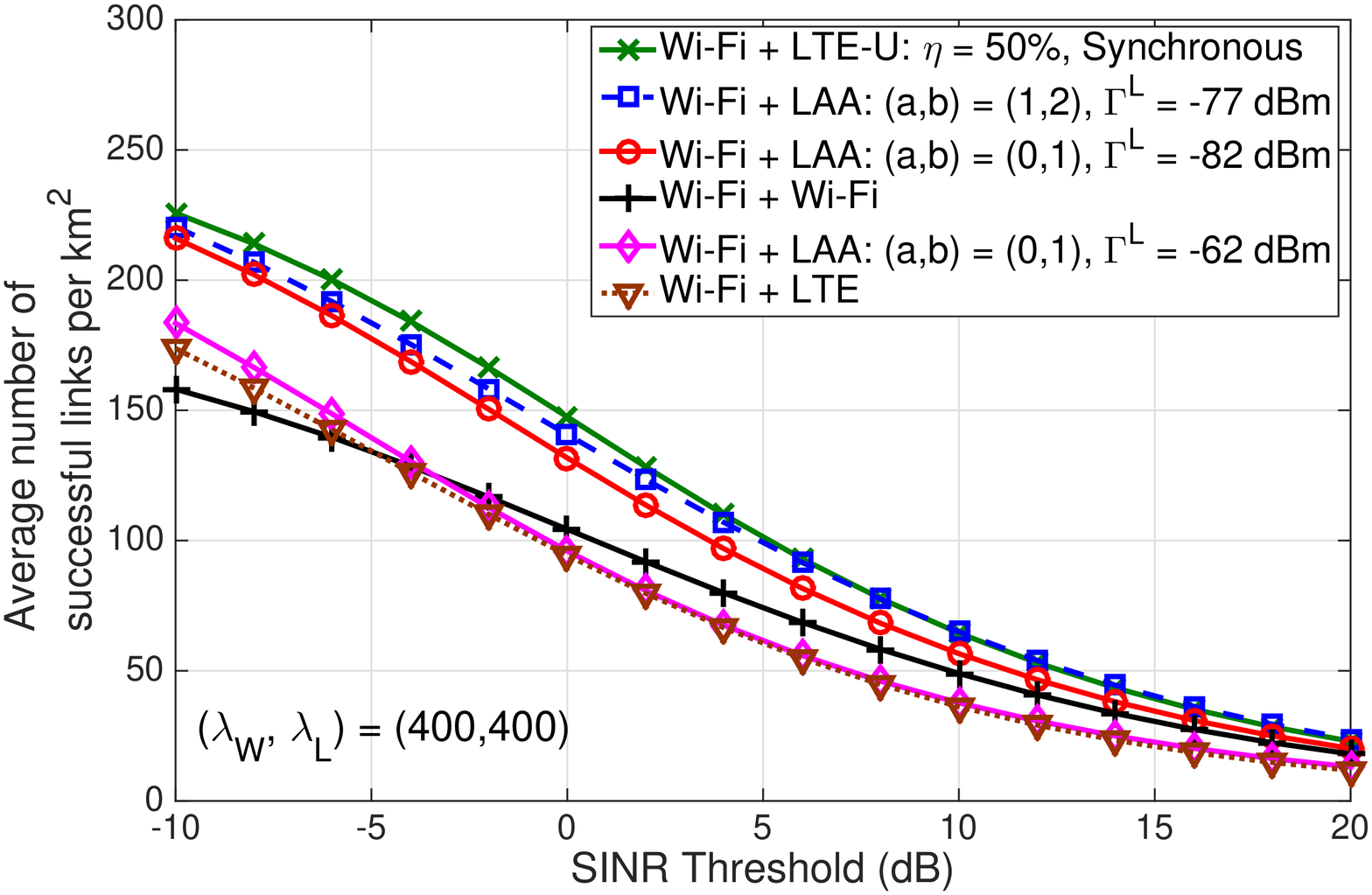}
		\caption{DST for Wi-Fi}\label{DSTCompFigWifi}
	\end{subfigure}
	\hfill
	\begin{subfigure}[b]{0.5\textwidth}
		\includegraphics[height=2in, width=3.1in]{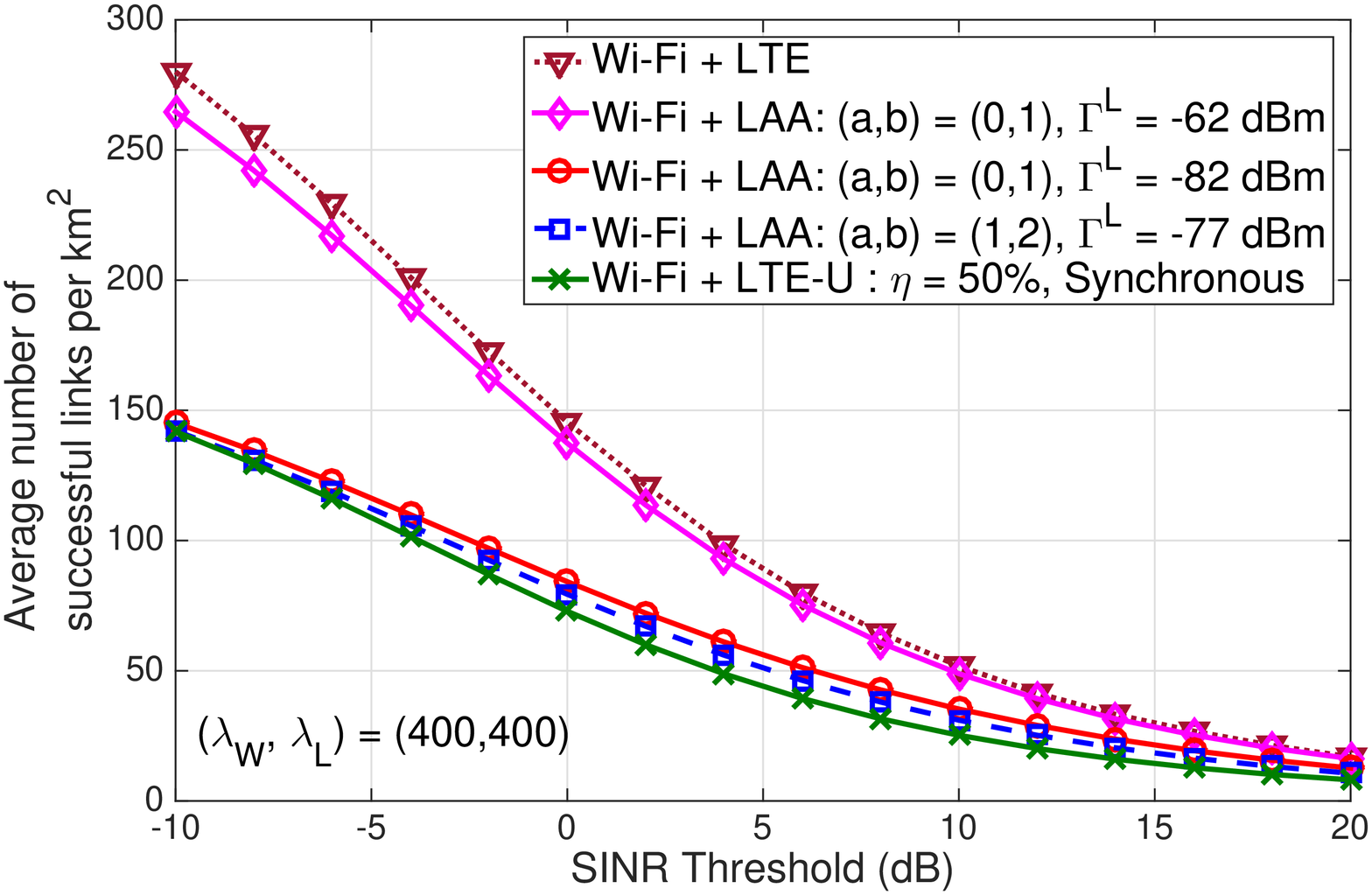}
		\caption{DST for LTE}\label{DSTCompFigLTE}
	\end{subfigure}
	\caption{DST comparisons under different coexistence scenarios.}\label{DSTCompFig}
\end{figure}

\begin{figure}
	\begin{subfigure}[b]{0.5\textwidth}
		\includegraphics[height=2in, width=3.1in]{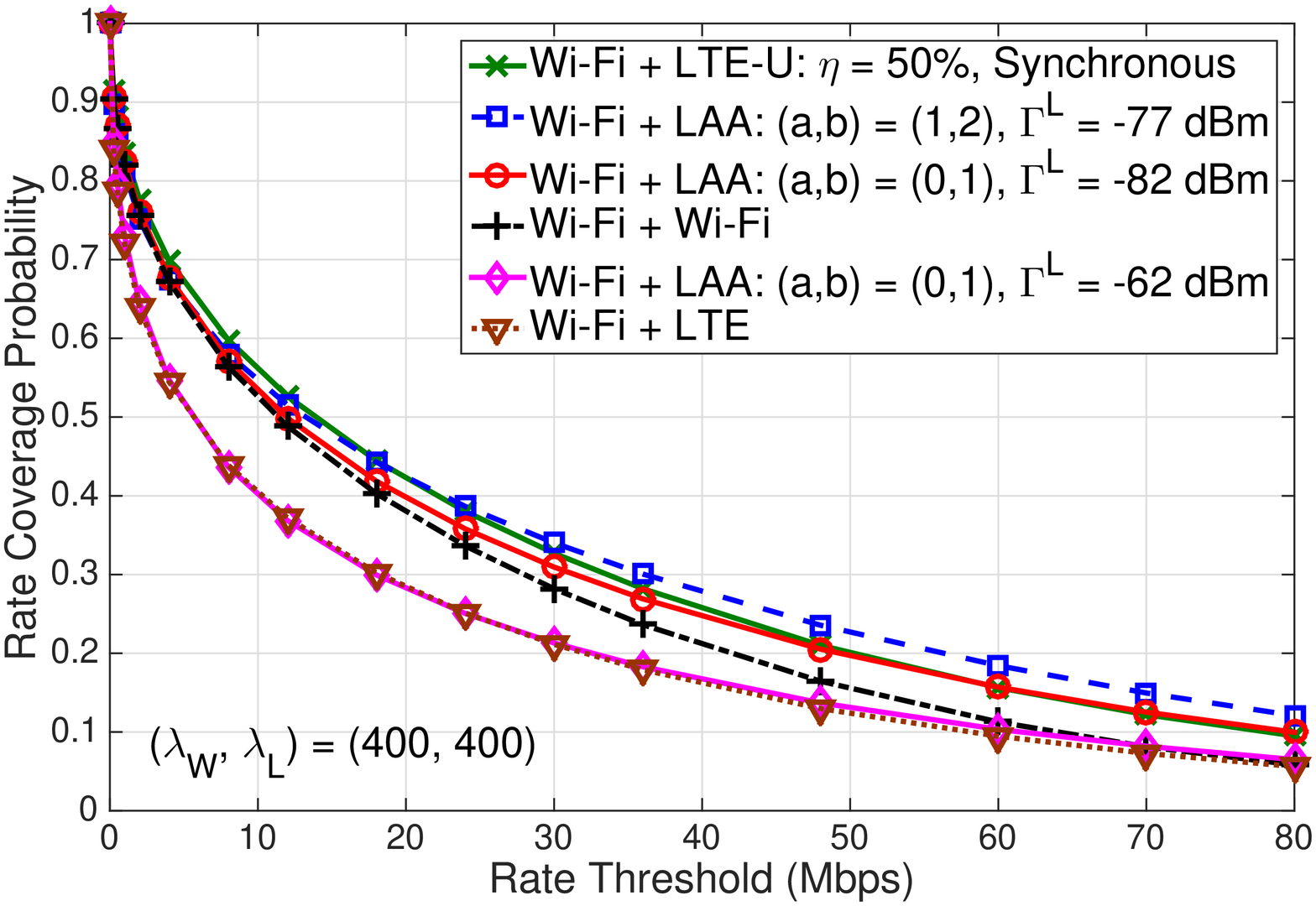}
		\caption{Rate coverage for Wi-Fi}\label{RateCompFigWifi}
	\end{subfigure}
	\hfill
	\begin{subfigure}[b]{0.5\textwidth}
		\includegraphics[height=2in, width=3.1in]{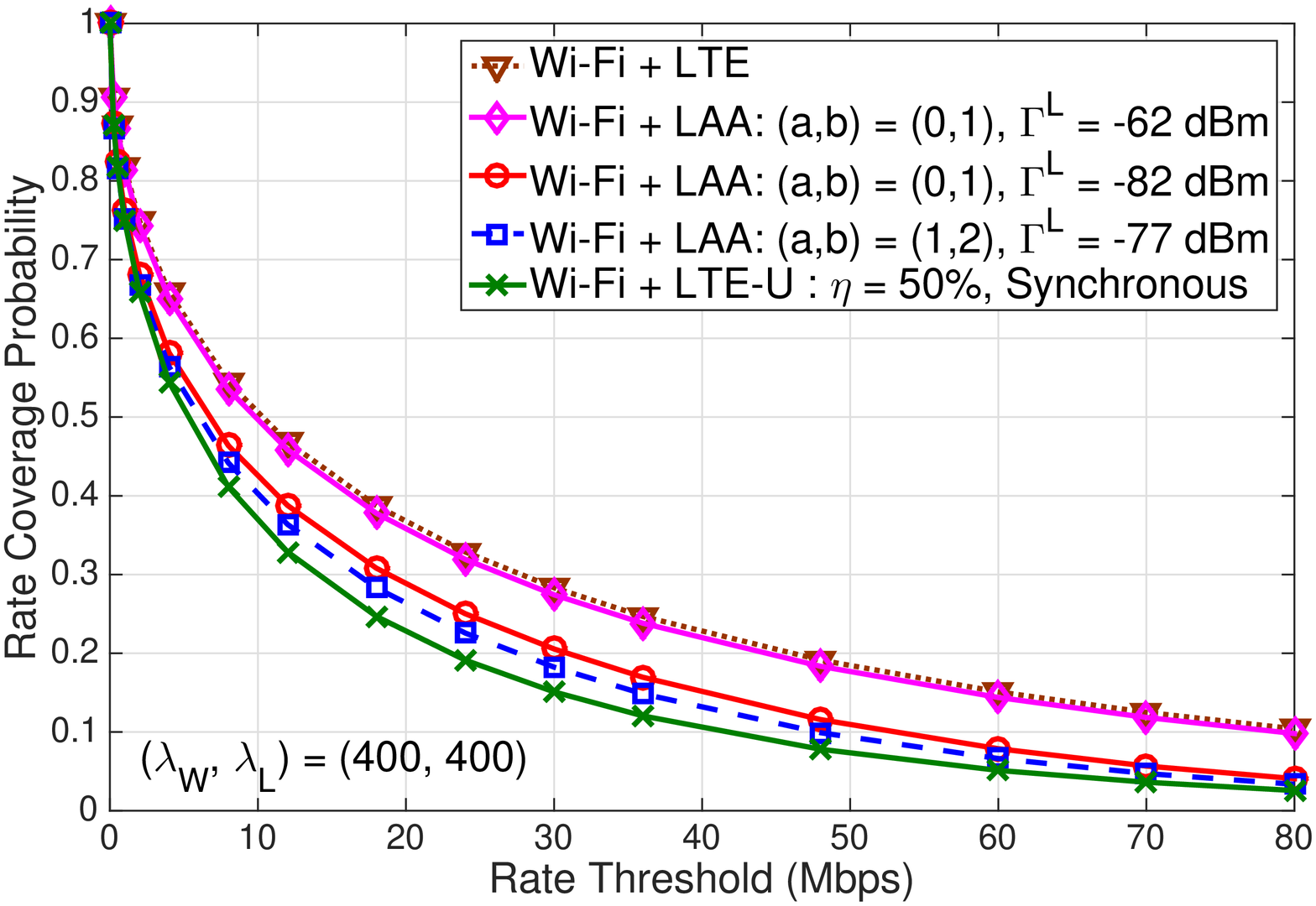}
		\caption{Rate coverage for LTE}\label{RateCompFigLTE}
	\end{subfigure}
	\caption{Rate coverage comparisons under different coexistence scenarios.}\label{RateCompFig}
\end{figure}

Fig.~\ref{DSTCompFigWifi} shows that when coexisting with LTE, Wi-Fi has the worst DST performance since it experiences strong interference from the persistent transmitting LTE eNBs. In addition, Wi-Fi achieves similar DST performance when operator 2 implements one of the following mechanisms: (1) LTE-U with a short duty cycle (e.g., 50\%); (2) LAA with same channel access priority as Wi-Fi and a more sensitive sensing threshold (e.g., $(a,b) = (0,1)$, $\Gamma^L = -82$ dBm); and (3) LAA with lower channel access priority than Wi-Fi and a less sensitive sensing threshold (e.g., $(a,b) = (1,2)$, $\Gamma^L = -77$ dBm). 
Compared to the baseline scenario, Wi-Fi has better DST under the above scenarios, especially in the low SINR threshold regime. 
Furthermore, when operator 2 implements LAA with the -62 dBm energy detection threshold, the DST performance of Wi-Fi is not much improved over Wi-Fi + LTE. Therefore, the -62 dBm energy detection threshold is too conservative to protect Wi-Fi, and a more sensitive threshold $\Gamma^L$ is recommended for LAA. 
In contrast, Fig.~\ref{DSTCompFigLTE} shows that operator 2 has significantly lower (around 50\%) DST when using LTE-U or LAA with a sensitive sensing threshold (e.g., -82 dBm or -77 dBm), which is mainly due to the decreased MAP for eNBs.

In terms of rate coverage, it can be observed from Fig.~\ref{RateCompFigWifi} that when operator 2 adopts LTE-U with a 50\% duty cycle or LAA with a sensitive sensing threshold (e.g., -82 dBm or -77 dBm), Wi-Fi has similar performance as the baseline scenario in low rate threshold regime (e.g. less than 5 Mbps), and better performance with medium to high rate threshold (e.g., more than 10 Mbps). If LAA uses the -62 dBm energy detection threshold, the rate coverage of Wi-Fi has negligible improvement over the Wi-Fi + LTE scenario, which means the energy detection threshold does not suffice to protect Wi-Fi. In addition, due to the degraded SINR performance, Wi-Fi has worse rate performance under Wi-Fi + LTE than the baseline scenario. Meanwhile, when the sensing threshold of LAA is -82 dBm or -77 dBm, Fig.~\ref{RateCompFigLTE} shows that the rate loss of operator 2 under Wi-Fi + LAA is around 30\% to 40\% compared to Wi-Fi + LTE. In contrast, the rate loss under Wi-Fi + LTE-U is slightly more than 50\% for most rate thresholds.

Overall, under Wi-Fi + LTE, the DST and rate coverage probability of Wi-Fi decreases significantly compared to the baseline performance, which makes it an impractical scenario to operate LTE in unlicensed spectrum. Under Wi-Fi + LTE-U, LTE-U operator 2 has the flexibility to guarantee good DST and rate coverage performance for Wi-Fi operator 1 by choosing a low LTE transmission duty cycle. In addition, LTE-U has low implementation cost due to its simple scheme. However, LTE-U also has the following disadvantages: (1) the LTE-U operator has much degraded DST and rate coverage performance under low transmission duty cycle; (2) LTE-U is only feasible in certain regions and/or unlicensed bands where LBT feature is not required, such as the 5.725-5.825 GHz band in U.S.~\cite{3gpp2015TR}; (3) LTE-U transmissions are more likely to collide with Wi-Fi acknowledgment packets due to the lack of a CCA procedure, which means Wi-Fi SINR coverage under Wi-Fi + LTE-U may not as easily translate into the rate performance as Wi-Fi + LAA; and (4) LTE-U has practical cross-layer issues, such that the frequent on and off switching of LTE will trigger the Wi-Fi rate control algorithm to lower the Wi-Fi transmission rate~\cite{jindal2015LTEWiFi}. 
In contrast, under Wi-Fi + LAA, by choosing an appropriate LAA channel access priority (i.e., contention window size) and sensing threshold, Wi-Fi operator 1 also achieves better DST and rate coverage performance compared to the baseline scenario, while LAA operator 2 can maintain acceptable rate coverage performance. Additionally, LAA also meets the global requirement for operation in the unlicensed spectrum. The main disadvantage of LAA versus LTE-U is that LAA requires more complicated implementation for the LBT and random BO feature. 
Therefore, in terms of performance comparisons and practical constraints, LTE with LBT and random BO (i.e., LAA) is more promising than LTE with discontinuous transmission (i.e., LTE-U) to provide a global efficient solution to the coexistence issues of LTE and Wi-Fi in unlicensed spectrum. 


\section{Conclusion}\label{SecSum}
This paper proposed and validated a stochastic geometry framework for analyzing the coexistence of overlaid Wi-Fi and LTE networks. Performance metrics including the medium access probability, SINR coverage probability, density of successful transmission and rate coverage probability have been analytically derived and numerically evaluated under three coexistence scenarios. Although Wi-Fi performance is significantly degraded when LTE transmits continuously without any protocol changes, we showed that LTE can be a good neighbor to Wi-Fi by manipulating the LTE transmission duty cycle, sensing threshold, or channel access priority. The proposed analytical framework validates and complements the ongoing system level simulation studies of LTE-U and LAA, and can be utilized by both academia and industry to rigorously study LTE and Wi-Fi coexistence issues.
Future works could include: (1) extending the single unlicensed band analysis into multiple bands, and incorporating channel selection schemes to further improve Wi-Fi/LTE performance; (2) charactering Wi-Fi/LTE delay performance by extending the full-buffer traffic assumption to non full-buffer traffic; (3) analyzing the SINR or rate coverage when both downlink and uplink traffic exist; and (4) investigating Wi-Fi/LTE performance when LTE adopts both listen-before-talk and discontinuous transmission features, which will be standardized by LTE-U forum~\cite{lteu2015TRv1} in the future.
\appendices
\section{Proof of Corollary~\ref{PWSelecProbW1Wi-FiCoro}}\label{CondiCOPWiFiAppdx}
For AP $x_i \in \Phi_W$, the conditional MAP of $x_i$ given the tagged AP $x_0 = (r_0,0)$ transmits is:
\allowdisplaybreaks
\begin{align}
\allowdisplaybreaks
&\mathbb{P}_{\Phi_W}^{x_i}(e_i^W = 1 | e_0^W = 1, x_0 \in \Phi_W , \Phi_W(B^o(0,r_0)) = 0) \nonumber\\
\overset{(a)}{=}&\frac{{P}_{\Phi_W}^{x_i,x_0}(e_i^W = 1 , e_0^W = 1 | \Phi_W(B^o(0,r_0)) = 0)}{{P}_{\Phi_W}^{x_i,x_0}(e_0^W = 1 | \Phi_W(B^o(0,\|x_0\|)) = 0)} \nonumber\\
\overset{(b)}{=}&\frac{{E}_{\Phi_W}^{x_i}(\hat{e}_i^W \hat{e}_0^W )}{{E}_{\Phi_W}^{x_i}(\hat{e}_0^W )},\label{PWSelecProbW1Wi-FiPfEq1}
\end{align}where (a) follows from the Baye's rule, and (b) is derived by applying the Slyvniak's theorem and de-conditioning. The modified medium access indicators for $x_i$ and $x_0$ are:
\allowdisplaybreaks
\begin{align}\label{MAIW1Wi-FiEq}
\allowdisplaybreaks
\hat{e}_i^W = &\prod\limits_{x_{j} \in (\Phi_{W}\cap B^c(0,r_0) +\delta_{x_0}) \setminus \{x_{i}\}} \!\!\!\biggl(\mathbbm{1} _{t_{j}^{W} \geq t_{i}^{W}} +\mathbbm{1} _{t_{j}^{W} < t_{i}^{W}} \mathbbm{1}_{ G_{ji}^{W}/ \mathit{l}(\|x_{j}-x_{i}\|) \leq \frac{\Gamma_{cs}}{P_W}} \biggl) \prod\limits_{y_{k} \in \Phi_{L}} \mathbbm{1}_{G_{ki}^{LW}/ \mathit{l}(\|y_{k} - x_{i}\|) \leq \frac{\Gamma_{ed}}{P_L }}, \nonumber\\
\hat{e}_0^W = &\prod\limits_{x_{j} \in \Phi_{W} \cap B^c(0,r_0) } \biggl(\mathbbm{1}_{t_{j}^{W} \geq t_{0}^{W}} +\mathbbm{1} _{t_{j}^{W} < t_{0}^{W}} \mathbbm{1}_{ G_{j0}^{W}/ \mathit{l}(\|x_{j}-x_{0}\|) \leq \frac{\Gamma_{cs}}{P_W}} \biggl) \prod\limits_{y_{k} \in \Phi_{L}} \mathbbm{1}_{G_{k0}^{LW}/ \mathit{l}(\|y_{k} - x_{0}\|) \leq \frac{\Gamma_{ed}}{P_L }}.
\end{align}
Therefore, the denominator in~(\ref{PWSelecProbW1Wi-FiPfEq1}) is given by:
\allowdisplaybreaks
\begin{align}
\allowdisplaybreaks
&\mathbb{E}_{\Phi_W}^{x_i}\left[\prod\limits_{x_{j} \in \Phi_{W} \cap B^c(0,r_0) } \biggl(\mathbbm{1}_{t_{j}^{W} \geq t_{0}^{W}} +\mathbbm{1} _{t_{j}^{W} < t_{0}^{W}} \mathbbm{1}_{ G_{j0}^{W}/ \mathit{l}(\|x_{j}-x_{0}\|) \leq \frac{\Gamma_{cs}}{P_W}} \biggl) \prod\limits_{y_{k} \in \Phi_{L}} \mathbbm{1}_{G_{k0}^{LW}/ \mathit{l}(\|y_{k} - x_{0}\|) \leq \frac{\Gamma_{ed}}{P_L }}\right] \nonumber\\
=&\int_{0}^{1} \mathbb{E}\left[\prod\limits_{x_{j} \in (\Phi_{W} \cap B^c(0,r_0)+ \delta_{x_i})  } \!\!\!\!\!\!\biggl(\mathbbm{1}_{t_{j}^{W} \geq t} +\mathbbm{1} _{t_{j}^{W} < t} \mathbbm{1}_{ G_{j0}^{W}/ \mathit{l}(\|x_{j}-x_{0}\|) \leq \frac{\Gamma_{cs}}{P_W}} \biggl) \biggl| t_0^W = t\right]{\rm d}t \mathbb{E}\left[\prod\limits_{y_{k} \in \Phi_{L}} \mathbbm{1}_{G_{k0}^{LW}/ \mathit{l}(\|y_{k} - x_{0}\|) \leq \frac{\Gamma_{ed}}{P_L }}\right] \nonumber\\
=&U(x_i-x_0,\frac{\Gamma_{cs}}{P_W},N^W_2(r_0)) \exp(-N^L). \label{PWSelecProbW1Wi-FiPfDen}
\end{align}
On the other hand, the numerator in~(\ref{PWSelecProbW1Wi-FiPfEq1}) can be computed as:
\allowdisplaybreaks
\begin{align}
\allowdisplaybreaks
&\mathbb{E}_{\Phi_W}^{x_i}(\hat{e}_i^W \hat{e}_0^W ) \nonumber \\
\overset{(a)}{=}&\mathbb{E}\biggl[\prod\limits_{x_{j} \in (\Phi_{W}\cap B^c(0,r_0)+\delta_{x_0})  } \biggl(\mathbbm{1}_{t_{j}^{W} \geq t_{i}^{W}} +\mathbbm{1} _{t_{j}^{W} < t_{i}^{W}} \mathbbm{1}_{ G_{ji}^{W}/ \mathit{l}(\|x_{j}-x_{i}\|) \leq \frac{\Gamma_{cs}}{P_W}} \biggl) \prod\limits_{y_{k} \in \Phi_{L}} \mathbbm{1}_{G_{ki}^{LW}/ \mathit{l}(\|y_{k} - x_{i}\|) \leq \frac{\Gamma_{ed}}{P_L }}  \nonumber\\
&\times \prod\limits_{x_{j} \in (\Phi_{W}\cap B^c(0,r_0) +\delta_{x_i}) } \biggl(\mathbbm{1}_{t_{j}^{W} \geq t_{0}^{W}} +\mathbbm{1} _{t_{j}^{W} < t_{0}^{W}} \mathbbm{1}_{ G_{j0}^{W}/ \mathit{l}(\|x_{j}-x_{0}\|) \leq \frac{\Gamma_{cs}}{P_W}} \biggl) \prod\limits_{y_{k} \in \Phi_{L}} \mathbbm{1}_{G_{k0}^{LW}/ \mathit{l}(\|y_{k} - x_{0}\|) \leq \frac{\Gamma_{ed}}{P_L }}\biggl]\nonumber \\
=& \int_{0}^{1} \int_{0}^{1} \mathbb{E}\biggl[\prod\limits_{x_j \in \Phi_W \cap B^c(0,r_0)} (1-\mathbbm{1}_{t_j < t} \mathbbm{1}_{ G_{j0}^{W}/ \mathit{l}(\|x_{j}-x_{0}\|) > \frac{\Gamma_{cs}}{P_W}}) (1-\mathbbm{1}_{t_j < t^{'}} \mathbbm{1}_{ G_{ji}^{W}/ \mathit{l}(\|x_{j}-x_{i}\|) > \frac{\Gamma_{cs}}{P_W}}) \nonumber \\
& \times \mathbbm{1}_{ G_{0i}^{W}/ \mathit{l}(\|x_{0}-x_{i}\|) \leq \frac{\Gamma_{cs}}{P_W}} \prod\limits_{y_{k} \in \Phi_{L}}\mathbbm{1}_{G_{k0}^{LW}/ \mathit{l}(\|y_{k} - x_{0}\|) \leq \frac{\Gamma_{ed}}{P_L }} \mathbbm{1}_{G_{ki}^{LW}/ \mathit{l}(\|y_{k} - x_{i}\|) \leq \frac{\Gamma_{ed}}{P_L }} \biggl| t_0^W = t, t_i^W = t^{'} \biggr]{\rm d}t^{'}{\rm d}t \nonumber\\
=&\exp(-2 N^L+ C^L_2(x_i-x_0)) V(x_i-x_0,\frac{\Gamma_{cs}}{P_W},\frac{\Gamma_{cs}}{P_W},N^W_2(r_0),N^W_0(x_i,r_0,\Gamma_{cs}),C^W_1(x_i,x_0)),\nonumber 
\end{align}
where (a) follows from Slyvniak's theorem. 
\section{Proof of Corollary~\ref{CondiRetainProbW4Wi-FiLemma}}\label{CondiMAPW4Appdx}
For every AP $x_i$, the quantity that needs to be computed is $h_2^W(r_0,x_i) = \mathbb{P}_{\Phi_W}^{x_i} [e_i^W = 1 |e_0^{W} = 1,x_0 = (r_0,0)]$. Similar to~(\ref{PWSelecProbW1Wi-FiPfEq1}), $h_2^W(r_0,x_i)$ can be rewritten as $\frac{\mathbb{E}_{\Phi_W}^{x_i}(\hat{e}_i^W\hat{e}_0^W)}{\mathbb{E}_{\Phi_W}^{x_i}(\hat{e}_0^W)}$, where:
\allowdisplaybreaks
\begin{align*}
\allowdisplaybreaks
\hat{e}_{i}^{W} = & \!\!\!\!\!\!\!\!\!\!\!\!\prod\limits_{x_{j} \in (\Phi_{W} \cap B^c(0,r_0) + \delta_{x_0}) \setminus \{x_{i}\}} \!\!\!\!\!\!\!\!\!\!\!\!\biggl(\mathbbm{1} _{t_{j}^{W} \geq t_{i}^{W}} +\mathbbm{1} _{t_{j}^{W} < t_{i}^{W}} \mathbbm{1}_{G_{ji}^{W}/ \mathit{l}(\|x_{j} - x_{i}\|) \leq \frac{\Gamma_{cs}}{P_W }} \biggl) \prod\limits_{y_{m} \in \Phi_{L}} \biggl(\mathbbm{1} _{t_{m}^{L} \geq t_{i}^{W}} +\mathbbm{1} _{t_{m}^{L} < t_{i}^{W}} \mathbbm{1}_{G_{mi}^{LW}/ \mathit{l}(\|y_{m} - x_{i}\|) \leq \frac{\Gamma_{ed}}{P_L }} \biggl),\nonumber\\
\hat{e}_{0}^{W} = & \!\!\!\!\!\!\!\!\prod\limits_{x_{j} \in \Phi_{W} \cap B^c(0,r_0) }\!\!\biggl(\mathbbm{1} _{t_{j}^{W} \geq t_{0}^{W}} +\mathbbm{1} _{t_{j}^{W} < t_{0}^{W}} \mathbbm{1}_{G_{j0}^{W}/ \mathit{l}(\|x_{j} - x_{0}\|) \leq \frac{\Gamma_{cs}}{P_W}} \biggl)  \prod\limits_{y_{m} \in \Phi_{L}} \biggl(\mathbbm{1} _{t_{m}^{L} \geq t_{0}^{W}} +\mathbbm{1} _{t_{m}^{L} < t_{0}^{W}} \mathbbm{1}_{G_{m0}^{LW}/ \mathit{l}(\|y_{m} - x_{0}\|) \leq \frac{\Gamma_{ed}}{P_L}} \biggl).
\end{align*}
Both $\mathbb{E}_{\Phi_W}^{x_i}(\hat{e}_0^W)$ and $\mathbb{E}_{\Phi_W}^{x_i}(\hat{e}_i^W\hat{e}_0^W)$ can be calculated using Slyvniak's theorem and the PGFL of PPP, which will give the result in~(\ref{CondiRetainProbW4Wi-FiEq}).
\bibliographystyle{ieeetr}
\bibliography{reference}

\begin{thebibliography}{10}

\bibitem{li2015laawifiGC}
Y.~Li, F.~Baccelli, J.~Andrews, T.~Novlan, and J.~Zhang, ``Modeling and
  analyzing the coexistence of licensed-assisted access {LTE} and {Wi-Fi},'' in
  {\em IEEE GLOBECOM Workshop on Heterogeneous and Small Cell Networks}, Dec.
  2015.

\bibitem{FCC2013Revision}
{FCC}, ``Revision of part 15 of the commission's rules to permit unlicensed
  national information infrastructure ({U-NII}) devices in the 5 {GHz} band,''
  Feb. 2013.

\bibitem{qcom2013extend}
Qualcomm, ``Extending {LTE} {A}dvanced to unlicensed spectrum,'' {\em white
  paper}, Dec 2013.

\bibitem{ratasuk2012licenseexempt}
R.~Ratasuk, M.~Uusitalo, N.~Mangalvedhe, A.~Sorri, S.~Iraji, C.~Wijting, and
  A.~Ghosh, ``License-exempt {LTE} deployment in heterogeneous network,'' in
  {\em International Symposium on Wireless Communication Systems},
  pp.~246--250, Aug. 2012.

\bibitem{3gpp2015TR}
{3GPP TR 36.889}, ``Study on licensed-assisted access to unlicensed spectrum,''
  2015.

\bibitem{baccelli2010stochastic}
F.~Baccelli and B.~Blaszczyszyn, {\em Stochastic Geometry and Wireless
  Networks: Volume 1: THEORY}.
\newblock Now Publishers Inc, 2010.

\bibitem{baccelli2010stochasticpt2}
F.~Baccelli and B.~Blaszczyszyn, {\em Stochastic Geometry and Wireless
  Networks: Volume II-Applications}.
\newblock Now Publishers Inc, 2010.

\bibitem{stochtutorial}
M.~Haenggi, J.~Andrews, F.~Baccelli, O.~Dousse, and M.~Franceschetti,
  ``Stochastic geometry and random graphs for the analysis and design of
  wireless networks,'' {\em IEEE Journal on Selected Areas in Communications},
  vol.~27, pp.~1029--1046, Sep. 2009.

\bibitem{stochgeom}
M.~Haenggi, {\em Stochastic geometry for wireless networks}.
\newblock Cambridge University Press, 2013.

\bibitem{chiu2013stochastic}
S.~N. Chiu, D.~Stoyan, W.~S. Kendall, and J.~Mecke, {\em Stochastic geometry
  and its applications}.
\newblock John Wiley \& Sons, 2013.

\bibitem{cavalcante2013performance}
A.~M. Cavalcante, E.~Almeida, R.~D. Vieira, F.~Chaves, R.~C. Paiva,
  F.~Abinader, S.~Choudhury, E.~Tuomaala, and K.~Doppler, ``Performance
  evaluation of {LTE} and {Wi-Fi} coexistence in unlicensed bands,'' in {\em
  IEEE 77th Vehicular Technology Conference (VTC Spring)}, pp.~1--6, Jun. 2013.

\bibitem{nihtila2013system}
T.~Nihtila, V.~Tykhomyrov, O.~Alanen, M.~Uusitalo, A.~Sorri, M.~Moisio,
  S.~Iraji, R.~Ratasuk, N.~Mangalvedhe, {\em et~al.}, ``System performance of
  {LTE} and {IEEE} 802.11 coexisting on a shared frequency band,'' in {\em IEEE
  Wireless Communications and Networking Conference (WCNC)}, pp.~1038--1043,
  Jun. 2013.

\bibitem{qrc2014LTE}
Qualcomm, ``{LTE} in unlicensed spectrum: Harmonious coexistence with
  {Wi-Fi},'' {\em white paper}, Jun. 2014.

\bibitem{lteu2015TRv1}
{LTE-U Forum}, ``{LTE-U} technical report: coexistence study for {LTE-U} {SDL}
  v1.0,'' {\em Technical Report}, Feb. 2015.

\bibitem{almeida2013enabling}
E.~Almeida, A.~M. Cavalcante, R.~C. Paiva, F.~S. Chaves, F.~M. Abinader, R.~D.
  Vieira, S.~Choudhury, E.~Tuomaala, and K.~Doppler, ``Enabling {LTE/WiFi}
  coexistence by {LTE} blank subframe allocation,'' in {\em IEEE International
  Conference on Communications (ICC)}, pp.~5083--5088, Jun. 2013.

\bibitem{jeon2014lteworkshop}
J.~Jeon, H.~Niu, Q.~C. Li, A.~Papathanassiou, and G.~Wu, ``{LTE} in the
  unlicensed spectrum: Evaluating coexistence mechanisms,'' in {\em IEEE
  Globecom Workshops (GC Wkshps)}, pp.~740--745, Dec. 2014.

\bibitem{kondo2010technology}
T.~Kondo, H.~Fujita, M.~Yoshida, and T.~Saito, ``Technology for
  {WiFi}/bluetooth and {WiMax} coexistence,'' {\em Fujitsu Sci. Technol. J},
  vol.~46, pp.~72--78, 2010.

\bibitem{mukherjee2015system}
A.~Mukherjee, J.-F. Cheng, S.~Falahati, L.~Falconetti, A.~Furusk{\"a}r, and
  B.~Godana, ``System architecture and coexistence evaluation of
  licensed-assisted access {LTE} with {IEEE} 802.11,'' in {\em IEEE ICC
  Workshop on LTE in Unlicensed Bands: Potentials and Challenges}, Jun. 2015.

\bibitem{jeon2014lte}
J.~Jeon, Q.~C. Li, H.~Niu, A.~Papathanassiou, and G.~Wu, ``{LTE} in the
  unlicensed spectrum: A novel coexistence analysis with {WLAN} systems,'' in
  {\em IEEE Global Communications Conference (GLOBECOM)}, pp.~3459--3464, Dec.
  2014.

\bibitem{sagari2015dynamic}
S.~Sagari, S.~Baysting, D.~Saha, I.~Seskar, W.~Trappe, and D.~Raychaudhuri,
  ``Coordinated dynamic spectrum management of {LTE-U} and {Wi-Fi} networks,''
  in {\em 2015 IEEE International Symposium on Dynamic Spectrum Access Networks
  (DySPAN)}, pp.~209--220, Sept 2015.

\bibitem{bianchi2000performance}
G.~Bianchi, ``Performance analysis of the {IEEE} 802.11 distributed
  coordination function,'' {\em IEEE Journal on Selected Areas in
  Communications}, vol.~18, pp.~535--547, March 2000.

\bibitem{trac}
J.~Andrews, F.~Baccelli, and R.~Ganti, ``A tractable approach to coverage and
  rate in cellular networks,'' {\em IEEE Transactions on Communications},
  vol.~59, pp.~3122--3134, Nov. 2011.

\bibitem{Harpreet}
H.~Dhillon, R.~Ganti, F.~Baccelli, and J.~Andrews, ``Modeling and analysis of
  {K}-tier downlink heterogeneous cellular networks,'' {\em IEEE Journal on
  Selected Areas in Communications}, vol.~30, pp.~550--560, Apr. 2012.

\bibitem{dhillon2013load}
H.~Dhillon, R.~Ganti, and J.~Andrews, ``Load-aware modeling and analysis of
  heterogeneous cellular networks,'' {\em IEEE Transactions on Wireless
  Communications}, vol.~12, pp.~1666--1677, Apr. 2013.

\bibitem{mukherjee2012distribution}
S.~Mukherjee, ``Distribution of downlink {SINR} in heterogeneous cellular
  networks,'' {\em IEEE Journal on Selected Areas in Communications}, vol.~30,
  pp.~575--585, Apr. 2012.

\bibitem{heath2013hetnets}
R.~Heath, M.~Kountouris, and T.~Bai, ``Modeling heterogeneous network
  interference using {P}oisson point processes,'' {\em IEEE Transactions on
  Signal Processing}, vol.~61, pp.~4114--4126, Aug. 2013.

\bibitem{dhillon2013downlink}
H.~Dhillon, M.~Kountouris, and J.~Andrews, ``Downlink {MIMO} {H}et{N}ets:
  modeling, ordering results and performance analysis,'' {\em IEEE Transactions
  on Wireless Communications}, vol.~12, pp.~5208--5222, Oct. 2013.

\bibitem{lin2013carrieraggregation}
X.~Lin, J.~G. Andrews, and A.~Ghosh, ``Modeling, analysis and design for
  carrier aggregation in heterogeneous cellular networks,'' {\em IEEE
  Transactions on Communications}, vol.~61, pp.~4002--4015, Sept. 2013.

\bibitem{zhang2014stochastic}
R.~Zhang, M.~Wang, Z.~Zheng, X.~S. Shen, and L.-L. Xie, ``Stochastic geometric
  performance analysis for carrier aggregation in {LTE-A} systems,'' in {\em
  IEEE International Conference on Communications (ICC),}, pp.~5777--5782, Jun.
  2014.

\bibitem{miyoshi2012cellular}
N.~Miyoshi and T.~Shirai, ``A cellular network model with {G}inibre
  configurated base stations,'' {\em Research Rep. on Math. and Comp. Sciences
  (Tokyo Inst. of Tech.)}, Oct. 2012.

\bibitem{li2015statistical}
Y.~Li, F.~Baccelli, H.~Dhillon, and J.~Andrews, ``Statistical modeling and
  probabilistic analysis of cellular networks with determinantal point
  processes,'' {\em IEEE Transactions on Communications}, vol.~63,
  pp.~3405--3422, Sept. 2015.

\bibitem{guo2014ADG}
A.~Guo and M.~Haenggi, ``Asymptotic deployment gain: A simple approach to
  characterize the sinr distribution in general cellular networks,'' {\em IEEE
  Transactions on Communications}, vol.~63, pp.~962--976, Mar. 2015.

\bibitem{nguyen2007dense80211}
H.~Nguyen, F.~Baccelli, and D.~Kofman, ``A stochastic geometry analysis of
  dense {IEEE} 802.11 networks,'' in {\em IEEE INFOCOM 2007}, pp.~1199--1207,
  May 2007.

\bibitem{kim2014spatial}
Y.~Kim, F.~Baccelli, and G.~de~Veciana, ``Spatial reuse and fairness of ad hoc
  networks with channel-aware {CSMA} protocols,'' {\em IEEE Transactions on
  Information Theory}, vol.~60, pp.~4139--4157, Jul. 2014.

\bibitem{nguyen2012stochasticCR}
T.~V. Nguyen and F.~Baccelli, ``A stochastic geometry model for cognitive radio
  networks,'' {\em The Computer Journal}, vol.~55, pp.~534--552, Apr. 2012.

\bibitem{bhorkar2014throughput}
A.~Bhorkar, C.~Ibars, and P.~Zong, ``On the throughput analysis of {LTE} and
  {WiFi} in unlicensed band,'' in {\em Asilomar Conference on Signals, Systems
  and Computers}, pp.~1309--1313, Nov. 2014.

\bibitem{chandrasekhar2008femtocell}
V.~Chandrasekhar, J.~G. Andrews, and A.~Gatherer, ``Femtocell networks: a
  survey,'' {\em IEEE Communications Magazine}, vol.~46, pp.~59--67, Sept.
  2008.

\bibitem{ieee2012IEEE}
{IEEE Std 802.11-2012}, ``{IEEE} standard for information
  technology-telecommunications and information exchange between systems local
  and metropolitan area networks–specific requirements part 11: Wireless lan
  medium access control ({MAC}) and physical layer ({PHY}) specifications,''
  Mar. 2012.

\bibitem{busson_inria_00316029}
A.~Busson, G.~Chelius, and J.-M. Gorce, ``{Interference Modeling in CSMA
  Multi-Hop Wireless Networks},'' Research Report RR-6624, {INRIA}, Feb. 2009.

\bibitem{haenggi2009interference}
M.~Haenggi and R.~K. Ganti, {\em Interference in large wireless networks}.
\newblock Now Publishers Inc, 2009.

\bibitem{jindal2015LTEWiFi}
N.~Jindal and D.~Breslin, ``{LTE} and {Wi-Fi} in unlicensed spectrum: A
  coexistence study,'' {\em Google white paper}, Jun. 2015.

\end{thebibliography}

\end{document}